\documentclass{article}

\usepackage{hyperref}       
\usepackage{tikz}
\usepackage{multirow}
\usepackage{url}            
\usepackage{booktabs}       
\usepackage{amsfonts}       
\usepackage{nicefrac}       
\usepackage{microtype}      
\hypersetup{colorlinks,linkcolor={blue},citecolor={blue},urlcolor={blue}}
\usepackage{natbib}
\usepackage{amsfonts}
\usepackage{graphicx,amssymb,mathrsfs,amsmath,color,fancyhdr}
\usepackage{amsthm}
\usepackage{cases}
\usepackage{mathbbol}
\usepackage{geometry}
\usepackage{cite}
\usepackage{booktabs}
\usepackage{array}
\usepackage{amssymb}
\makeatletter

\newcommand{\Rmnum}[1]{\expandafter\@slowromancap\romannumeral #1@}
\makeatother

\newtheorem{theorem}{Theorem}
\newtheorem{lemma}{Lemma}

\newtheorem{proposition}{Proposition}
\newtheorem{assumption}{Assumption}
\newtheorem{corollary}{Corollary}

\usepackage{caption}
\usepackage{amsfonts}

\usepackage{algorithm}
\usepackage{algpseudocode}
\allowdisplaybreaks
\usepackage{rotating}
\usepackage{color}

\usepackage[all]{xy}
\usepackage{bm}
\usepackage{pifont}
\usepackage{stmaryrd}
\usepackage{pgf,tikz}
\usepackage{tikz}
\usetikzlibrary{shapes,arrows,positioning,calc}
\usetikzlibrary{fadings}
\usetikzlibrary{patterns}
\usetikzlibrary{shadows.blur}
\usetikzlibrary{shapes}

\usepackage{latexsym,amsopn,amstext,amsthm,amsxtra,bm,calc,ifpdf}
\usepackage{enumerate}
\usepackage{subfigure}
\usepackage{listings}
\usepackage{enumitem}

\newcommand{\indep}{\perp \!\!\! \perp}
\usepackage[flushleft]{threeparttable}
\usepackage{enumitem}
\usepackage{mathrsfs}

\usepackage{arydshln}

\newcommand{\wyy }{\color{black}} 
\newcommand{\purple}{\color{black}}

\def\influence function{{\rm  influence function}}
\def\T{{\mathrm{\scriptscriptstyle T} }}

\usepackage[toc,page]{appendix}
\usepackage{authblk}
\hypersetup{
	colorlinks=true,
	linkcolor=black
}

\hypersetup{
    colorlinks=true,
    linkcolor=blue,   
    citecolor=blue,   
    filecolor=blue,   
    urlcolor=blue     
}

\usepackage[symbol]{footmisc}
 
\usepackage{xr}

\usepackage{setspace}
\usepackage{caption}
\def\T{{ \mathrm{\scriptscriptstyle T} }}

\title{\bf Selecting among Missingness Models for Sequential Outcomes with Nonignorable Nonresponse}

\usepackage{authblk}

\author[1]{Yingying Wang\thanks{These authors contributed equally.}}
\author[2]{Yuan Liu\thanks{These authors contributed equally.}}
\author[1]{Shanshan Luo}

\affil[1]{School of Mathematics and Statistics, Beijing Technology and Business University}
\affil[2]{Mathematical Institute, Leiden University}

\begin{document}

\date{}
\maketitle

\begin{abstract} 
Sequential outcomes in longitudinal studies and multi-wave surveys may be missing not at random at both earlier and later occasions. We study graphical models in which at least one outcome is self-censoring and the response indicator for a later outcome may depend on either the earlier response indicator or the realized earlier outcome. These restrictions define two candidate families under which the relevant full-data distributions are identifiable and whose observed-data models overlap; graphs containing both dependencies form a broader class outside the prespecified comparison. For each candidate family, we establish identification of the full-data distribution under rank or completeness conditions and develop likelihood-based estimation. We then propose a two-stage Vuong-type procedure. The first stage determines whether the candidate models are observationally distinguishable; only after distinguishability is established does the second stage compare their Kullback--Leibler divergences from the true observed-data distribution. We also show that ordinary Wald inference remains asymptotically valid for the selected model-specific functional when the selected model has a fixed positive expected log-likelihood advantage. Simulations evaluate the two-stage procedure across graph classes. We finally apply the procedure to compare candidate missingness models in the Job Corps data and perform downstream functional estimation under the selected model.
\end{abstract}
Keywords:  Missing not at random; Graphical missingness models; Model selection; Likelihood ratio test

\section{Introduction}
\label{sec:intro}

Missing data are a pervasive challenge in longitudinal studies and multi-wave surveys. The problem becomes especially difficult when nonresponse is missing not at random (MNAR), because the probability that a value is observed may depend on the unobserved value itself or on other latent factors \citep{2002Statistical}. Consequently, standard complete-case or missing-at-random analyses can distort the joint distribution of key variables, thereby biasing any downstream scientific conclusions. Existing work has developed important identification and estimation strategies for MNAR covariates or outcomes, including approaches based on instrumental variables, empirical likelihood, conditional independence restrictions, and  semiparametric missingness models \citep{Ding2014,Yang2019,yang2014estimation,d2010new,wang2014instrumental,Miao01102016,sun2018semiparametric}. Much of this literature, however, focuses on settings where only a single substantive variable is subject to missingness, while the remaining key variables are fully observed.

Sequential outcomes create a different and common practical difficulty. In longitudinal studies, an earlier outcome and a later outcome are measured at different time points, and both may be subject to nonignorable nonresponse \citep{glymour2006using,MA200324,tian2015missing,Nabi2026Causal,Zuo2025}. For example, an early educational, employment, or health status may be unavailable because a respondent does not complete an intermediate survey, while a later income or health outcome may be missing because of attrition or follow-up nonresponse. The two nonresponse processes are rarely independent: failure to observe the first outcome may itself signal reduced willingness to participate later, or the realized first outcome may affect whether the later outcome is measured. These mechanisms lead to different observed-data distributions, different identification arguments, and potentially different conclusions when summarizing the recovered full-data joint distribution.

Graphical missing data models provide a natural framework for representing such multivariate missingness structures. \citet{MA200324} used directed acyclic graphs to study nonignorable nonresponse of binary outcomes in longitudinal studies and proposed graphical criteria for model identifiability under a first-order Markov structure. Adopting a counterfactual perspective, \citet{Nabi2026Causal} utilized missing-data DAGs to encode restrictions between potential outcomes and response indicators, thereby clarifying the identifiability of the complete-data distribution. \citet{li2023self} developed chain-graph self-censoring models for multivariate nonignorable missing data and established parameter recoverability through completeness conditions. Expanding these approaches to complex causal mechanisms, \citet{Zuo2025} recently investigated mediation analysis where both the mediator and the outcome are missing not at random. While these contributions demonstrate that graphical representations are invaluable for encoding full-data independence restrictions and deriving identification results, a critical practical challenge remains: when several plausible missingness graphs are available, which one should be used?

This question is typically not addressed by identification theory alone \citep{Ding2014,Yang2019,d2010new,wang2014instrumental,Miao01102016,sun2018semiparametric,Zuo2025}. While an identification result guarantees that the full-data distribution or its relevant components can be recovered assuming a postulated missingness mechanism is correct, it offers no guidance on whether that mechanism is better supported by the observed data than a competing identifiable model. In applied MNAR analyses, uncertainty regarding the missingness mechanism is conventionally handled via sensitivity analysis, pattern-mixture modeling, or tipping-point analysis \citep{rosenbaum1987sensitivity,Scharfstein2021}. Although highly valuable for assessing how scientific conclusions fluctuate across varying assumptions, these approaches do not provide a formal statistical test to distinguish among competing missingness graphs. Furthermore, they cannot rank the candidate mechanisms based on their likelihood-based closeness to the underlying data-generating process.

Model selection offers a complementary perspective for the candidate models. Since \citet{Akaike1998} introduced AIC and \citet{schwarz1978estimating} proposed BIC, Kullback--Leibler-based criteria have become standard tools for comparing fitted models. Such criteria are useful summaries of relative fit, but they do not directly test whether the observed difference between two candidate models is statistically meaningful. \citet{Vuong1989} developed a likelihood ratio framework for non-nested and overlapping models, allowing one to test whether two models have equal Kullback--Leibler divergence from the true observed-data distribution. This feature is particularly relevant for graphical MNAR models because two missingness graphs can be distinct while still sharing a common observed-data submodel. In that case, a model selection procedure must first determine whether the graphs are distinguishable before deciding which graph is favored. The primary target is recovery of the relevant full-data joint distribution, from which substantive quantities, such as contrasts in the later outcome across levels of the earlier outcome, can be obtained as functionals.

In this paper, we focus on two scientifically distinct missingness graphs for later-outcome nonresponse: dependence on whether the earlier outcome was observed and dependence on the realized earlier outcome. In the first, the response indicator for the first outcome directly affects the response indicator for the second outcome. In the second, the first outcome itself affects second outcome nonresponse, while the two response indicators are conditionally independent given the substantive variables. For each graph, we establish nonparametric identification of the relevant components of the full-data joint distribution under rank or completeness conditions. We then develop maximum likelihood estimation using the EM algorithm and propose a two-stage Vuong-type procedure for selecting between the candidate missingness graphs. The first stage tests whether the two candidate graphs are observationally distinguishable. Conditional on distinguishability, the second stage compares their Kullback--Leibler divergences from the observed-data distribution. We derive the asymptotic distributions of the resulting statistics under regularity conditions that allow for MNAR structure and possible model misspecification.

Our main contributions are threefold. First, we organize the candidate graphs into two identifiable families, their overlap, and a broader class outside the main comparison. For each candidate family, we establish identification of the relevant full-data distribution under rank or completeness conditions and develop likelihood-based estimation. Second, we develop a two-stage model selection procedure for comparing the two identifiable families. 
The first stage tests whether their fitted observed-data likelihoods are distinguishable; the second stage compares their Kullback--Leibler divergences from the data-generating distribution only after the first stage rejects indistinguishability. 
Third, we study inference for smooth functionals of the recovered full-data distribution after a candidate model has been selected. Ordinary Wald inference remains asymptotically valid when one model has a fixed likelihood advantage.  

The remainder of this paper is organized as follows. Section~\ref{sec:com_ide} introduces the notation and the sequential missing-data setup. Section~\ref{sec:mis_ide} presents the candidate missingness graphs and states two main identification results. Section~\ref{sec:model_selection} develops likelihood-based estimation and the two-stage model selection procedure. Section~\ref{sec:post_selection} studies inference for a full-data functional after a missingness model has been selected. Section~\ref{sec:sim} reports simulation studies, Section~\ref{sec:app} presents the Job Corps application, and Section~\ref{sec:discussion} concludes with a discussion of limitations and future directions.

\section{Framework and Notation}\label{sec:com_ide}

Throughout this paper, we assume there are $n$ individuals, independently and identically sampled from a superpopulation of interest. For each unit $i$, let $Y_{1i} \in \{0,1\}$ denote a binary outcome measured at an earlier time point, and let $Y_{2i}$ denote a subsequent outcome, which may be continuous or discrete. Let $X_i$ denote a vector of fully observed pre-treatment covariates available prior to the two sequential outcomes. We omit the subscript $i$ throughout the following discussion for notational simplicity.


In longitudinal studies and multi-wave surveys,  $Y_1$ and $Y_2$ are frequently missing at their respective measurement occasions. For instance, an intermediate qualification or other early follow-up outcome may be unreported at an earlier wave, while later outcomes such as income or health status may be missing because of attrition or nonresponse \citep{haneuse2016learning,wells2013strategies,haneuse2021assessing,kolamunnage2016modelling,Zuo2025}. Such sequential missingness patterns substantially complicate distributional recovery, as naive analyses that restrict attention to the fully observed subset may introduce selection bias and yield inconsistent estimates of downstream functionals.

To characterize such missingness patterns, define the response indicators $R_1 \in \{0,1\}$ and $R_2 \in \{0,1\}$, where $R_1=1$ indicates that the first outcome $Y_1$ is observed and $R_1=0$ otherwise, and likewise $R_2=1$ indicates that the second outcome $Y_2$ is observed. {Let $O_i=(X_i,R_{1i},R_{1i}Y_{1i},R_{2i},R_{2i}Y_{2i})$ denote the observed-data vector.}
When both the first and second outcomes are subject to nonignorable
missingness, the central identification task is to recover the relevant
full-data joint distribution from incomplete sequential measurements.
Let $F$ denote this full-data distribution. For a prespecified smooth map
$F\mapsto h_F(x)$, define the population functional
\begin{align}\label{eq:general_functional}
	\tau(F)
	=E_F\{h_F(X)\}
	=\int h_F(x)\,dF_X(x),
\end{align}
where $F_X$ is the marginal distribution of the fully observed covariates.
For example, setting $
h_F(x)
=
E_F(Y_2\mid Y_1=1,x)-E_F(Y_2\mid Y_1=0,x)$ 
yields the population-standardized contrast
\begin{align}
	\tau_{\mathrm{SC}}(F)
	&=
	E_F\!\left\{
	E_F(Y_2\mid Y_1=1,X)
	-
	E_F(Y_2\mid Y_1=0,X)
	\right\}.
	\label{eq:standardized_contrast}
\end{align}
Under suitable causal identification assumptions, including no unmeasured
confounding \citep{Rubin1983}, this contrast can be interpreted as the causal effect of
$Y_1$ on $Y_2$. Other smooth estimands arising in causal mediation
analysis \citep{pearl2014interpretation}, including natural direct and indirect effects, are discussed in
Appendix~\ref{subsec:mediation_functionals}. 


\section{Identification under Competing Missingness Models}
\label{sec:mis_ide}

To establish identification across competing missingness models, we first formalize the candidate graphical structures. Figure~\ref{fig:graph_catalogue} summarizes these graphs, which share a core premise: at least one outcome is self-censoring (i.e., containing the edge $Y_1 \to R_1$ or $Y_2 \to R_2$). Self-censoring is the central nonignorability feature studied here because it directly links an outcome to its own observation status. Requiring at least one such link ensures our comparison focuses on substantively distinct nonignorable mechanisms; while graphs lacking self-censoring may still be identifiable, they present a fundamentally different model selection problem. Specifically, these candidate graphs are organized into four classes:
\begin{figure}[ht]
	\begingroup
	\newcommand{\rowlabelmain}[1]{%
		\raisebox{1.05cm}{\parbox{0.15\textwidth}{\centering\bfseries #1}}}
	\newcommand{\repgraphmain}[5]{%
		\begin{tikzpicture}[scale=0.78,transform shape]
			\node (X) at (0,0) {$X$};
			\node (Yone) at (1.8,0) {$Y_1$};
			\node (Ytwo) at (3.6,0) {$Y_2$};
			\node (Rone) at (1.8,1.6) {$R_1$};
			\node (Rtwo) at (3.6,1.6) {$R_2$};
			\draw[-{stealth}] (X) -- (Yone);
			\draw[-{stealth}] (Yone) -- (Ytwo);
			\draw[-{stealth}] (X) to[out=-25,in=-155] (Ytwo);
			\draw[-{stealth}] (X) -- (Rone);
			\ifnum#2=1 \draw[-{stealth}] (Yone) -- (Rone); \fi
			\ifnum#3=1 \draw[-{stealth}] (Ytwo) -- (Rtwo); \fi
			\ifnum#4=1 \draw[-{stealth}] (Rone) -- (Rtwo); \fi
			\ifnum#5=1 \draw[-{stealth}] (Yone) -- (Rtwo); \fi
			\node at (1.8,-1.05) {#1};
	\end{tikzpicture}}
	\centering
	\setlength{\tabcolsep}{0pt}
	\renewcommand{\arraystretch}{1.55}
	\begin{tabular*}{\textwidth}{@{\extracolsep{\fill}}cccc@{}}
		
		\rowlabelmain{$M_1$} &
		\makebox[0.283\textwidth][c]{\repgraphmain{(d)}{1}{0}{1}{0}} &
		\makebox[0.283\textwidth][c]{\repgraphmain{(e)}{0}{1}{1}{0}} &
		\makebox[0.283\textwidth][c]{\repgraphmain{(f)}{1}{1}{1}{0}} \\
		\rowlabelmain{$M_2$} &
		\makebox[0.283\textwidth][c]{\repgraphmain{(g)}{1}{0}{0}{1}} &
		\makebox[0.283\textwidth][c]{\repgraphmain{(h)}{0}{1}{0}{1}} &
		\makebox[0.283\textwidth][c]{\repgraphmain{(i)}{1}{1}{0}{1}} \\
		\rowlabelmain{{\purple $M_{\phi}$}} &
		\makebox[0.283\textwidth][c]{\repgraphmain{(a)}{1}{0}{0}{0}} &
		\makebox[0.283\textwidth][c]{\repgraphmain{(b)}{0}{1}{0}{0}} &
		\makebox[0.283\textwidth][c]{\repgraphmain{(c)}{1}{1}{0}{0}} \\
		\rowlabelmain{{\purple $M_{1,2}$}}  &
		\makebox[0.283\textwidth][c]{\repgraphmain{(j)}{1}{0}{1}{1}} &
		\makebox[0.283\textwidth][c]{\repgraphmain{(k)}{0}{1}{1}{1}} &
		\makebox[0.283\textwidth][c]{\repgraphmain{(l)}{1}{1}{1}{1}}
	\end{tabular*}
	\caption{Candidate missingness graphs within the self-censoring scope of the main analysis. The left entry in each row labels the graph class. Every displayed graph contains at least one of $Y_1\to R_1$ and $Y_2\to R_2$; panels are indexed as (a)--(l).}
	\label{fig:graph_catalogue}
	\endgroup
\end{figure}
\begin{itemize}
	\item {\purple \textbf{$M_1$} (top row): $R_1\to R_2$ is present and $Y_1\to R_2$ is absent. Panel (f) is the full model identified in Theorem~\ref{thm:complete1}.}
	\item {\purple \textbf{$M_2$} (second row): $Y_1\to R_2$ is present and $R_1\to R_2$ is absent. Panel (i) is the full model identified in Theorem~\ref{thm:complete2}.}
	\item {\purple \textbf{$M_{\phi}$} (third row): Neither cross-process arrow is present.}
	\item {\purple \textbf{$M_{1,2}$} (bottom row): Both cross-process arrows are present.}
\end{itemize}
{\purple These four symbols serve as labels to classify the graph structures based on the presence of the two cross-process arrows (rather than denoting set operations). Our primary comparison focuses on $M_1$ and $M_2$; $M_{\phi}$ represents their common reduced submodel, whereas $M_{1,2}$ falls outside the main comparison.} This defined set of graphs establishes the formal scope for our subsequent likelihood-based evaluation. Appendix~\ref{sec:outside_candidate_graphs} examines a representative graph without self-censoring to further clarify this scope restriction.

We now introduce a common restriction shared by the two focal classes, $M_1$ and $M_2$. This assumption reflects the sequential timing of the outcomes and forms the theoretical basis for the identification results developed in the subsequent subsections.

\begin{assumption}\label{assu:indep}
	$R_1 \indep Y_2 \mid (Y_1, X)$.
\end{assumption}

{Assumption~\ref{assu:indep} acts as a structural exclusion restriction motivated by the natural temporal ordering of the outcomes. Graphically, under the standard Markov interpretation, every candidate model in Figure~\ref{fig:graph_catalogue} satisfies this restriction, as conditioning on $(Y_1, X)$ blocks all paths between $R_1$ and $Y_2$. In practice, once the first outcome and baseline covariates are known, the observation status of the first outcome carries no additional predictive power for the second. For example, in the Job Corps application (Section~\ref{sec:app}), $R_1$ indicates whether qualification status is recorded at 30 months, and $Y_2$ denotes fourth-year weekly earnings. The assumption implies that merely responding to the 30-month survey provides no further insight into fourth-year earnings beyond what is already explained by the actual qualification status and baseline characteristics. This restriction can be relaxed to hold given only a subset of covariates. However, we use the full vector $X$ to keep the notation simple.}

\subsection{Identification under Missingness Model $M_1$} 

\label{ssec:identification-H}


	


We first consider the full model $M_1$ in Figure~\ref{fig:graph_catalogue}(f). The following assumption imposes the key restriction needed for identification under this structure.

\begin{assumption}\label{assu:M1_1}
	(i) $X \indep R_2 \mid (R_1,Y_2)$ , (ii) $Y_1 \indep R_2 \mid (X,R_1,Y_2)$ .
	
\end{assumption}

Assumption~\ref{assu:M1_1} formally characterizes the conditional independence structure underlying the sequential missingness model $M_1$  in Figure~\ref{fig:graph_catalogue}(f).
Specifically, condition (i) requires that, conditional on the first-outcome response indicator $R_1$ and the second outcome $Y_2$, the covariates $X$ are independent of the second outcome response indicator $R_2$. This implies that the effect of $X$ on $R_2$ is fully mediated through $(R_1,Y_2)$, which is consistent with the graphical structure in Figure~\ref{fig:graph_catalogue}(f). Condition (ii) requires that, conditional on $(X,R_1,Y_2)$, the first outcome $Y_1$ is independent of $R_2$, thus ruling out a direct dependence of second outcome missingness on the realized first outcome. Nevertheless, as illustrated in Figure~\ref{fig:graph_catalogue}(f), the model still allows $R_2$ to be associated with $R_1$. Consequently, under $M_1$, the two response indicators may be dependent through the earlier response process.

Under Assumptions~\ref{assu:indep} and \ref{assu:M1_1}, the joint distribution $f_{M_1}(X,Y_1,R_1,Y_2,R_2)$ of the complete data under model $M_1$ in Figure~\ref{fig:graph_catalogue}(f) can be expressed as
\begin{align}
	f_{M_1}(X,Y_1,R_1,Y_2,R_2)
	= f(X)f(Y_1,R_1 \mid X)\,
	f(Y_2 \mid Y_1,X)\,
	P(R_2 \mid R_1,Y_2).
	\label{eq:joint-F-M1}
\end{align}
Here $f$ denotes a density or probability mass function, and $P$ denotes a response probability. The factors are the baseline distribution $f(X)$, the joint model $f(Y_1,R_1 \mid X)$, the second outcome model $f(Y_2 \mid Y_1,X)$, and the second outcome response model $P(R_2 \mid R_1,Y_2)$. The covariate distribution is directly observed, but the remaining three factors
cannot be directly identified from the observed data alone. 

The key to recovering the relevant joint-distribution components in \eqref{eq:joint-F-M1} is to identify $P(R_2 \mid R_1,Y_2)$.
To identify $P(R_2 \mid R_1,Y_2)$, we introduce the following response-odds function for each value $r_1\in\{0,1\}$:
\begin{align*} \xi_{1,r_1}(Y_2) = \frac{P(R_2=0 \mid X, R_1=r_1, Y_2)}{P(R_2=1 \mid X, R_1=r_1, Y_2)} = \frac{P(R_2=0 \mid R_1=r_1, Y_2)}{P(R_2=1 \mid R_1=r_1, Y_2)}, \end{align*}
where the simplification on the right-hand side follows from Assumption~\ref{assu:M1_1}. Combining $\xi_{1,r_1}(Y_2)$ with the observed data distribution yields the following equation, for   $r_1\in\{0,1\}$:
\begin{align}\label{eq:inter1} f(X, R_1=r_1, R_2=0) = \int \xi_{1,r_1}(Y_2)  f(X, R_1=r_1, Y_2, R_2=1)  d\nu(Y_2), \end{align}
where $\nu$ is a dominating measure for $Y_2$: the counting measure when $Y_2$ is discrete and the Lebesgue measure when $Y_2$ is continuous. {Completeness makes equation~\eqref{eq:inter1} admit at most one response-odds function in each $R_1$ stratum. A separate condition requiring the two conditional distributions of $Y_2$ across the levels of $Y_1$ to be distinct then recovers the conditional distribution of $Y_1$ given $X$. The following theorem states these sufficient conditions and the resulting identification conclusion.}

\begin{theorem}\label{thm:complete1}
	{Under Assumptions~\ref{assu:indep} and~\ref{assu:M1_1}, suppose that, for each $r_1\in\{0,1\}$, the family $
		\bigl\{f(Y_2\mid X=x,R_1=r_1,R_2=1):x\in\operatorname{supp}(X)\bigr\}$ 
		is complete with respect to $Y_2$. If, for almost every $x$,
		{the conditional distributions of $Y_2$ given $(Y_1=1,X=x)$ and $(Y_1=0,X=x)$ are distinct,}
		then the response-odds functions $\xi_{1,r_1}(Y_2)$ and the full-data distribution $f_{M_1}(X,Y_1,R_1,Y_2,R_2)$ are identified.}
\end{theorem}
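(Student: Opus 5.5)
The plan is to identify the factors of the factorization \eqref{eq:joint-F-M1} in turn. First we recover the response-odds functions from \eqref{eq:inter1} using completeness. Then we recover $P(R_2\mid R_1,Y_2)$ and the stratum-specific laws of $Y_2$. Finally we recover the distribution of $Y_1$ in the $R_1=0$ stratum by solving a two-component mixture equation.

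\emph{Step 1: the response-odds functions.} Fix $r_1\in\{0,1\}$ and suppose $\xi$ and $\tilde\xi$ both satisfy \eqref{eq:inter1}. Subtracting the two equations and dividing by $f(X=x,R_1=r_1,R_2=1)$ gives, for almost every $x$,
\begin{align*}
\int \{\xi(y_2)-\tilde\xi(y_2)\}\, f(y_2\mid X=x,R_1=r_1,R_2=1)\,d\nu(y_2)=0 .
\end{align*}
Here we assume positivity of $f(X=x,R_1=r_1,R_2=1)$, which I would state explicitly. The difference $\xi-\tilde\xi$ does not depend on $x$; this is exactly what Assumption~\ref{assu:M1_1}(i) delivers. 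Completeness of the family $\{f(Y_2\mid X=x,R_1=r_1,R_2=1)\}$ then forces $\xi=\tilde\xi$ $\nu$-a.e. Hence $\xi_{1,r_1}$ is identified, and so is $P(R_2=1\mid R_1=r_1,Y_2)=\{1+\xi_{1,r_1}(Y_2)\}^{-1}$. I will also verify that $\xi_{1,r_1}$ does satisfy \eqref{eq:inter1}, which is a one-line Bayes computation using Assumption~\ref{assu:M1_1}. I regard this step as delicate only in making the integrability and positivity requirements precise.

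\emph{Step 2: the $R_1=1$ stratum.} Assumption~\ref{assu:M1_1}(ii) gives $R_2\indep (X,Y_1)\mid (R_1,Y_2)$. Therefore
\begin{align*}
f(X,Y_1,R_1=1,Y_2)=f(X,Y_1,R_1=1,Y_2,R_2=1)\{1+\xi_{1,1}(Y_2)\},
\end{align*}
and the right-hand side is observed. From this we read off $f(Y_1,R_1=1\mid X)$ and $f(Y_2\mid Y_1,X,R_1=1)$. By Assumption~\ref{assu:indep}, the latter equals $f(Y_2\mid Y_1,X)$, so the outcome model is identified for both $y_1\in\{0,1\}$. In the same way,
\begin{align*}
f(X,R_1=0,Y_2)=f(X,R_1=0,Y_2,R_2=1)\{1+\xi_{1,0}(Y_2)\}
\end{align*}
identifies $f(Y_2\mid X,R_1=0)$ and $P(R_1=0\mid X)$.

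\emph{Step 3: the $R_1=0$ stratum.} This is the main obstacle, since $Y_1$ is never seen there. Write $p(x)=P(Y_1=1\mid X=x,R_1=0)$. By Assumption~\ref{assu:indep},
\begin{align*}
f(Y_2\mid x,R_1=0)-f(Y_2\mid Y_1=0,x)=p(x)\{f(Y_2\mid Y_1=1,x)-f(Y_2\mid Y_1=0,x)\}.
\end{align*}
Every density in this display is identified by Step 2. The bracketed difference on the right is a nonzero function by the distinctness hypothesis, so evaluating at any $y_2$ where it is nonzero, or integrating against a suitable test function, pins down $p(x)$ uniquely. Thus $f(Y_1,R_1=0\mid X)=p(X)^{Y_1}\{1-p(X)\}^{1-Y_1}P(R_1=0\mid X)$ is identified.

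\emph{Conclusion.} Combining Steps 1--3 with the directly observed $f(X)$, every factor of \eqref{eq:joint-F-M1} is identified, and hence so is $f_{M_1}$.
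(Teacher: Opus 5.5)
Your proposal is correct and follows essentially the same route as the paper: completeness identifies $\xi_{1,r_1}$, inverse weighting by $P(R_2=1\mid R_1,Y_2)$ in the $R_1=1$ stratum recovers $f(Y_2\mid Y_1,X)$ via Assumption~\ref{assu:indep}, and the distinctness condition separates a two-component mixture for $Y_1$. The one difference is cosmetic: you solve that mixture within the $R_1=0$ stratum, whereas the paper solves it for the marginal $f(Y_2\mid X)$ and then subtracts the observed $P(Y_1,R_1=1\mid X)$, which is the same equation rearranged; note also that $R_2\indep(X,Y_1)\mid(R_1,Y_2)$ needs parts (i) and (ii) of Assumption~\ref{assu:M1_1} together (by contraction), not (ii) alone.
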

	
	The completeness condition in Theorem \ref{thm:complete1}  is central to nonparametric identification, as it ensures the uniqueness of the solution to the integral equation in \eqref{eq:inter1}. It has been widely used to establish identification in inverse problems arising from missing data, causal inference, measurement error, and instrumental variables \citep{newey2003instrumental,d2010new,miao2018identifying,jiang2021identification}.
	In discrete settings, completeness in Theorem \ref{thm:complete1} is equivalent to a rank condition on a conditional probability matrix. For example, if both the covariate $X$ and the outcome variable $Y_2$ are discrete, taking values in $\{x_1,\dots,x_J\}$ and $\{y_1,\dots,y_K\}$ respectively, define the $J\times K$ probability matrix $\Theta_{(x,y)}^{r_1}$ with $(j,k)$-th entry $$P(X=x_j, R_1=r_1, Y_2=y_k, R_2=1).$$ If for each $r_1\in\{0,1\}$, the matrix $\Theta_{(x,y)}^{r_1}$ has rank $K$ and $J \geq K$, then the function $\xi_{1,r_1}(Y_2)$ is uniquely identified. In continuous settings, completeness is a property of the conditional family of $Y_2$ indexed by $X$ within each $(R_1=r_1,R_2=1)$ stratum; {it must be verified for that family and the available support variation in $X$, rather than inferred solely from the choice of a familiar outcome distribution. Proof details of Theorem~\ref{thm:complete1} are given in Appendix~\ref{subsec:detail_M1}.} {The identification results above also justify parametric observed-data likelihoods and plug-in estimation of downstream functionals. The likelihood construction and asymptotic normality of the plug-in functional estimator are provided in Appendix~\ref{sec:likelihood-estimation}, while the details of the EM implementation are given in Appendix~\ref{sec:detail_EM}.
		
		\subsection{Identification under Missingness Model $M_2$}
		
		
		We next consider the full model $M_2$ in Figure~\ref{fig:graph_catalogue}(i). The following assumption imposes the key restriction needed for identification under this structure.
		
		\begin{assumption}\label{assu:M2}
			(i) $R_1 \indep R_2 \mid (X,Y_1)$, (ii) $R_1 \indep R_2 \mid (Y_1,Y_2)$, and (iii) $X \indep R_2 \mid (Y_1,R_1,Y_2)$.
		\end{assumption}
		
		Assumption~\ref{assu:M2} characterizes the full model $M_2$, in which the first outcome may affect second outcome nonresponse, while the two response indicators have no direct edge after conditioning on the relevant substantive variables. Compared with model $M_1$, the key distinction is that the second outcome missingness component depends on $(Y_1,Y_2)$ rather than on $(R_1,Y_2)$. Under Assumptions~\ref{assu:indep} and~\ref{assu:M2}, the full-data distribution factors as
		\begin{align}
			f_{M_2}(X,Y_1,R_1,Y_2,R_2)
			= f(X)f(Y_1,R_1\mid X)
			f(Y_2\mid Y_1,X)
			P(R_2 \mid Y_1,Y_2).
			\label{eq:joint_G_M2}
		\end{align}
		The identification argument is analogous to that for model $M_1$ in Section~\ref{ssec:identification-H}: it again reduces recovery of the joint-distribution components to the identification of response-odds functions through rank and completeness conditions. The difference is that, under model $M_2$, two response-odds functions must be recovered sequentially. Define
		\begin{align*}
			\xi_{1}(x,y_1)
			=\frac{P(R_1=0\mid X=x,Y_1=y_1)}
			{P(R_1=1\mid X=x,Y_1=y_1)},\qquad
			\xi_{2,y_1}(Y_2)
			=\frac{P(R_2=0\mid Y_1=y_1,R_1=1,Y_2)}
			{P(R_2=1\mid Y_1=y_1,R_1=1,Y_2)}.
		\end{align*}
		First, $\xi_{1}(x,y_1)$ is obtained from a finite-dimensional linear system whose coefficient matrix is formed by the observed joint probabilities of $(X,Y_1,R_1,R_2)$ among units with $R_1=1$. Second, conditional on each value of $Y_1$, $\xi_{2,y_1}(Y_2)$ is obtained from the same type of integral equation used for model $M_1$, with the covariates serving to ensure completeness. Since these steps repeat the rank/completeness logic already illustrated above, the full derivation, including the explicit matrix and integral equations, is deferred to Appendix~\ref{subsec:detail_M2}. {For each $X=x$, let
			\[
			\Gamma(x)=
			\left[
			f(X=x,Y_1=y_1,R_1=1,R_2=r_2)
			\right]_{r_2\in\{0,1\},\,y_1\in\{0,1\}}
			\]
			denote the observable coefficient matrix in the first linear system; its rows index $R_2$ and its columns index $Y_1$.} The identification result needed for estimation and model selection is summarized below.
		
		\begin{theorem}\label{thm:complete2}
			{Under Assumptions~\ref{assu:indep} and~\ref{assu:M2}, suppose that $\Gamma(x)$ has rank two for almost every $x$ and that, for each $y_1\in\{0,1\}$, the family
				$
				\bigl\{f(Y_2\mid X=x,Y_1=y_1,R_1=1,R_2=1):x\in\operatorname{supp}(X)\bigr\}$ 
				is complete with respect to $Y_2$. Then $\xi_1(x,y_1)$, $\xi_{2,y_1}(Y_2)$, and the full-data distribution $f_{M_2}(X,Y_1,R_1,Y_2,R_2)$ are identified.}
			
		\end{theorem}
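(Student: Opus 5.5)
The plan is to mirror the sequential structure of the factorization \eqref{eq:joint_G_M2}. First I recover $\xi_1(x,y_1)$ pointwise in $x$ from a $2\times 2$ linear system with coefficient matrix $\Gamma(x)$. Next I recover $\xi_{2,y_1}(Y_2)$ within each $Y_1=y_1$ stratum among units with $R_1=1$, from an integral equation of the same type as \eqref{eq:inter1}. Finally I assemble the factors $f(Y_1,R_1\mid X)$, $f(Y_2\mid Y_1,X)$ and $P(R_2\mid Y_1,Y_2)$. Throughout I assume the implicit positivity conditions $P(R_1=1\mid X,Y_1)>0$ and $P(R_2=1\mid Y_1,Y_2)>0$, so that the odds functions are finite and the conditional densities in the completeness condition are well defined.

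\emph{Step 1 (identifying $\xi_1$).} Fix $x$ and $r_2\in\{0,1\}$. Assumption~\ref{assu:M2}(i) gives $P(R_2=r_2\mid x,y_1,R_1=0)=P(R_2=r_2\mid x,y_1,R_1=1)$. Multiplying by $f(x,y_1)P(R_1=0\mid x,y_1)$ yields
\[
f(x,y_1,R_1=0,R_2=r_2)=\xi_1(x,y_1)\,f(x,y_1,R_1=1,R_2=r_2).
\]
Summing over $y_1\in\{0,1\}$ gives
\[
f(x,R_1=0,R_2=r_2)=\sum_{y_1}\Gamma(x)_{r_2,y_1}\,\xi_1(x,y_1),\qquad r_2=0,1.
\]
Both the left-hand side and $\Gamma(x)$ are observed. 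Since $\Gamma(x)$ has rank two for almost every $x$, the vector $(\xi_1(x,0),\xi_1(x,1))$ is the unique solution. It follows that the previously unobservable cells $f(x,y_1,R_1=0,r_2)$ are identified, and hence so are $f(x,y_1)=\{1+\xi_1(x,y_1)\}f(x,y_1,R_1=1)$ and $f(Y_1,R_1\mid X)$.

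\emph{Step 2 (identifying $\xi_{2,y_1}$).} Fix $y_1$ and restrict attention to $R_1=1$. By Assumption~\ref{assu:M2}(iii), $P(R_2\mid x,y_1,R_1=1,y_2)$ does not depend on $x$, and its odds equal $\xi_{2,y_1}(y_2)$. Writing $f(x,y_1,R_1=1,y_2,R_2=0)=\xi_{2,y_1}(y_2)f(x,y_1,R_1=1,y_2,R_2=1)$ and integrating over $y_2$ gives
\[
\frac{f(x,y_1,R_1=1,R_2=0)}{f(x,y_1,R_1=1,R_2=1)}=\int \xi_{2,y_1}(y_2)\,f(y_2\mid x,y_1,R_1=1,R_2=1)\,d\nu(y_2)
\]
for all $x$. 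The true odds function solves this equation. If $\tilde\xi$ is another solution, then the difference $\xi_{2,y_1}-\tilde\xi$ integrates to zero against every member of the family in the statement. Completeness then forces $\tilde\xi=\xi_{2,y_1}$ almost surely on the relevant support. Assumption~\ref{assu:M2}(ii) implies that this odds also equals the odds of $P(R_2\mid Y_1=y_1,Y_2)$ unconditionally on $R_1$. Therefore $P(R_2\mid Y_1,Y_2)$ is identified.

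\emph{Step 3 (assembly).} From Step 2,
\[
f(x,y_1,R_1=1,y_2)=\{1+\xi_{2,y_1}(y_2)\}\,f(x,y_1,R_1=1,y_2,R_2=1)
\]
is identified. Dividing by $f(x,y_1,R_1=1)$ gives $f(y_2\mid x,y_1,R_1=1)$, and Assumption~\ref{assu:indep} equates this with $f(y_2\mid x,y_1)$. Together with $f(X)$ and Step 1, the factorization \eqref{eq:joint_G_M2} then identifies $f_{M_2}$.

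I expect the main obstacle to be Step 2, specifically stating the uniqueness argument carefully. Completeness is imposed on conditional densities, so the integral equation must be normalized as above; the odds function must be integrable; and the conclusion holds only $\nu$-a.e. on the support of $Y_2$ given $(y_1,R_1=1,R_2=1)$. I would handle this exactly as in the proof of Theorem~\ref{thm:complete1}. Step 1 is elementary once the conditional independence in Assumption~\ref{assu:M2}(i) is used to transfer $R_2$ probabilities across the $R_1$ strata.
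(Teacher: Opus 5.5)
Your proposal is correct and follows the paper's proof in Appendix~\ref{subsec:detail_M2} essentially step for step. It solves the same $2\times 2$ system in $\Gamma(x)$ for $\xi_1(x,y_1)$ via Assumption~\ref{assu:M2}(i), then the same stratum-wise integral equation with completeness for $\xi_{2,y_1}$ via Assumption~\ref{assu:M2}(ii)--(iii), and finally the same Bayes-rule assembly of $f(Y_2\mid Y_1,X)$ using Assumption~\ref{assu:indep}; your $\{1+\xi\}$ rewritings are algebraically identical to the paper's divisions by $P(R_1=1\mid X,Y_1)$ and $P(R_2=1\mid Y_1,Y_2)$.
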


        In Theorem~\ref{thm:complete2}, the two identifying conditions operate in tandem: the rank-two condition yields a unique solution to the bivariate system for the first-outcome response odds, while the completeness condition guarantees a unique solution to the integral equation for the second-outcome response odds given $Y_1$. Jointly, these conditions permit the identification of every component within the full-data factorization~\eqref{eq:joint_G_M2}. Formal proofs and detailed estimation procedures are deferred to Appendices~\ref{subsec:detail_M2} and \ref{sec:likelihood-estimation}, respectively.


        The subsequent exposition focuses on the two full missingness models depicted in Figures~\ref{fig:graph_catalogue}(f) and (i), which serve as the analytical foundation for our proposed likelihood construction, contrast estimation, and formal model selection. The identifiability of their intersection class $M_{\phi}$ naturally follows from these results. Additionally, Appendix~\ref{sec:outside_candidate_graphs} discusses the unidentifiability of Figure~\ref{fig:graph_catalogue}(l) within the $M_{1,2}$ class.
		

		
		\section{Likelihood-Based Missingness Model Selection}
		\label{sec:model_selection}

		\subsection{Motivation for Model Selection}\label{subsec:dev_test}
Within the scope of the candidate models outlined in Figure~\ref{fig:graph_catalogue}, we now face the empirical challenge of choosing between the two identifiable families, $M_1$ and $M_2$. Because these structurally distinct models may produce identical fits if the true mechanism is nested in both, a robust selection procedure must first ensure the data can actually distinguish them. We therefore adopt the two-stage likelihood-ratio framework of \citet{Vuong1989}. The first stage tests whether the observed-data distributions induced by $M_1$ and $M_2$ are observationally distinguishable. If they are not, a scenario consistent with their common submodel {\purple $M_{\phi}$}, the procedure stops without ranking them. Otherwise, the second stage evaluates which model provides a better approximation of the true data-generating process.

		We use the Kullback--Leibler Information Criterion (KLIC) to measure the discrepancy between each candidate model and the true observed-data distribution. For model $M_1$ with 
		observed-data density $f_{M_1}(\cdot\,;\theta^{M_1})$, its KLIC value, equal to the Kullback--Leibler divergence from the true density 
		$f_0(\cdot)$ to $f_{M_1}(\cdot\,;\theta)$ is defined as
		\begin{equation}\label{eq:klic_H}
			\mathrm{KLIC}_{M_1}(\theta^{M_1})
			= E_0\!\left\{\log\frac{f_0(O_i)}{f_{M_1}(O_i;\theta^{M_1})}\right\}
			= E_0\{\log f_0(O_i)\} - E_0\{\log f_{M_1}(O_i;\theta^{M_1})\},
		\end{equation} 
		and the same definition applies to model $M_2$.
		Here, $E_0$ denotes expectation 
		under the true observed-data distribution.  
		Following \citet{Sawa1978} and \citet{Vuong1989}, the better model is the one with the smaller minimum KLIC value.
		Since $E_0\{\log f_0(O_i)\}$ is a constant shared by both expressions, this is equivalent to comparing $\sup_{\theta^{M_1}}\,E_0\{\log f_{M_1}(O_i;\theta^{M_1})\}$ and $\sup_{\theta^{M_2}}\,E_0\{\log f_{M_2}(O_i;\theta^{M_2})\}$. 
		Define the pseudo-true parameter under each candidate model by
		\begin{equation}
			\theta_*^{M_1}
			=
			\arg\max_{\theta^{M_1}}
			E_0\{\log f_{M_1}(O_i;\theta^{M_1})\},
			\quad  ~~\theta_*^{M_2}
			=
			\arg\max_{\theta^{M_2}}
			E_0\{\log f_{M_2}(O_i;\theta^{M_2})\} .
			\label{eq:pseudo_true_parameters}
		\end{equation}
		Model selection therefore reduces to comparing $E_0\{\log f_{M_1}(O_i;\theta_*^{M_1})\}$ and $E_0\{\log f_{M_2}(O_i;\theta_*^{M_2})\}$. 
		These parameters characterize the best approximations to $f_0$ within the two model families $ f_{M_1}$ and $ f_{M_2}$, regardless of whether the models are correctly specified.
		
		Recall that the population comparisons $E_0\{\log f_{M_1}(O_i;\theta_*^{M_1})\}$ and $E_0\{\log f_{M_2}(O_i;\theta_*^{M_2})\}$ 
		depend on the unknown true observed-data distribution $f_0$ and therefore cannot be evaluated directly. Under standard regularity 
		conditions \citep{White1982}, as $n\to\infty$,
		\begin{align}
			\frac{1}{n}\sum_{i=1}^{n}\ell^{M_1}_i(\hat{\theta}^{M_1})
			&\xrightarrow{a.s.} E_0\{\log f_{M_1}(O_i;\theta_*^{M_1})\} \label{eq:as_H}=E_0\{\ell_i^{M_1}(\theta_*^{M_1})\},\\
			\frac{1}{n}\sum_{i=1}^{n}\ell^{M_2}_i(\hat{\theta}^{M_2})
			&\xrightarrow{a.s.} E_0\{\log f_{M_2}(O_i;\theta_*^{M_2})\}=E_0\{\ell_i^{M_2}(\theta_*^{M_2})\}, \label{eq:as_G}
		\end{align}
		where $\ell_i^{M_1}(\hat{\theta}^{M_1})$ and $\ell_i^{M_2}(\hat{\theta}^{M_2})$ denote the individual log-likelihood contributions under models $M_1$ and $M_2$, and $\hat{\theta}^{M_1}$ and $\hat{\theta}^{M_2}$ are the maximum 
		likelihood estimators of $\theta_*^{M_1}$ and $\theta_*^{M_2}$, respectively. 
		Based on \eqref{eq:as_H} and \eqref{eq:as_G}, define
		\[
		LR_n(\theta_*^{M_1},\theta_*^{M_2})
		=\sum_{i=1}^n\{\ell_i^{M_1}(\theta_*^{M_1})-\ell_i^{M_2}(\theta_*^{M_2})\}.
		\]
		The maximized log-likelihood ratio statistic is
		\begin{equation}\label{eq:LR_def}
			LR_n = LR_n(\hat{\theta}^{M_1}, \hat{\theta}^{M_2})
			= \sum_{i=1}^{n} \ell^{M_1}_i(\hat{\theta}^{M_1}) - \sum_{i=1}^{n}\ell^{M_2}_i(\hat{\theta}^{M_2})
			= \sum_{i=1}^{n} \log\frac{f_{M_1}(O_i;\hat{\theta}^{M_1})}{f_{M_2}(O_i;\hat{\theta}^{M_2})}.
		\end{equation}
		It follows that, as $n\to\infty$,
		\begin{equation}\label{eq:lr_consist}
			\frac{1}{n}LR_n \xrightarrow{a.s.} 
			E_0\!\left\{\log\frac{f_{M_1}(O_i;\theta_*^{M_1})}{f_{M_2}(O_i;\theta_*^{M_2})}\right\}=E_0\!\left\{\ell_i^{M_1}(\theta_*^{M_1})-\ell_i^{M_2}(\theta_*^{M_2})\right\}.
		\end{equation}
		A positive limit in \eqref{eq:lr_consist} favors the candidate model $M_1$, while a negative limit favors model $M_2$.
		Thus, testing whether the two models are equally close to the true observed-data distribution is equivalent to testing whether the population limit in \eqref{eq:lr_consist} is zero. 
		The asymptotic distribution of the test statistic under this null depends on whether the variance of the individual log-likelihood ratio,
		\begin{equation}\label{eq:variance}
			\omega_*^2 =  \mathrm{Var}\!\left\{\log\frac{f_{M_1}(O_i;\theta_*^{M_1})}{f_{M_2}(O_i;\theta_*^{M_2})}\right\}=\mathrm{Var}\!\left\{\ell_i^{M_1}(\theta_*^{M_1})-\ell_i^{M_2}(\theta_*^{M_2})\right\},
		\end{equation}
		is positive or zero.
		This distinction motivates the two-stage model selection procedure introduced below, which first examines whether $\omega_*^2=0$ and then conducts model selection accordingly.

        Intuitively, it $\omega_*^2 = 0$ in \eqref{eq:variance} implies that the two best-approximating densities are identical almost everywhere (i.e., $f_{M_1} = f_{M_2}$). Within our framework, this zero-variance scenario occurs naturally when the true data-generating mechanism $f_0$ belongs to the intersection of the two families, {\purple $M_{\phi}$ in Figure~\ref{fig:graph_catalogue}}. This fundamental connection motivates the two-stage procedure introduced below, which first examines whether $\omega_*^2=0$ to rule out observational equivalence before proceeding to rank the models.

		
		\subsection{Model Selection Procedure}\label{subsec:ms_procedure}
		
		Building on the KLIC framework established above, we now present the proposed two-stage model selection procedure for models $M_1$ and $M_2$ in
		Algorithm~\ref{algo:two-stage}. A full theoretical justification is provided in the next subsection.     
		{Let $p$ and $q$ denote the dimensions of the parameter vectors under $M_1$ and $M_2$, respectively.}
		\vspace{2.3em}
        
		\begingroup
		\fontsize{10.5pt}{12.6pt}\selectfont
		\par\noindent\rule{\linewidth}{0.8pt}
		\par\vspace{0.1\baselineskip}
		\refstepcounter{algorithm}
		\label{algo:two-stage}
		\noindent\makebox[\linewidth][c]{\fontsize{13}{13}\selectfont\textbf{Algorithm \thealgorithm:} Two-stage model selection procedure}
		\par\vspace{0.05\baselineskip}
		\par\noindent\rule{\linewidth}{0.4pt}
		\vspace{-0.5\baselineskip}
		\begin{algorithmic}[1]
			\State \textbf{Input:} Observed data $O_1,\ldots,O_n$; models $M_1,M_2$; {significance levels $\alpha_1,\alpha_2$}
			\State \textbf{Preparation:}
			Estimate $\hat\theta^{M_1}$ and $\hat\theta^{M_2}$ from the EM algorithm in Appendix \ref{sec:likelihood-estimation}. For $i=1,\ldots,n$, compute the pointwise log-likelihood differences $m_i = \ell^{M_1}_{i}(\hat{\theta}^{M_1}) - \ell^{M_2}_{i}(\hat{\theta}^{M_2})$, their average $\bar m_n=n^{-1}\sum_{i=1}^n m_i$, the likelihood ratio $LR_n = \sum_{i=1}^n m_i$, and the centered sample variance $\hat\omega_n^2=n^{-1}\sum_{i=1}^n(m_i-\bar m_n)^2$. 
			{Compute the sample Hessian matrices $\hat A_M=n^{-1}\sum_{i=1}^n\nabla_\theta^2\ell_i^M(\hat\theta^M)$ for $M\in\{M_1,M_2\}$, and the score vectors $\hat S_{M_1,i} = \nabla_\theta \ell^{M_1}_{i}(\hat{\theta}^{M_1})$ and $\hat S_{M_2,i} = \nabla_\theta \ell^{M_2}_{i}(\hat{\theta}^{M_2})$.}
			Define
			$\hat B_{M_1} = n^{-1}\sum_{i=1}^n \hat S_{M_1,i}\hat S_{M_1,i}^\T$,   $
			\hat B_{M_2} = n^{-1}\sum_{i=1}^n \hat S_{M_2,i}\hat S_{M_2,i}^\T$, $\hat B_{M_1M_2} = n^{-1}\sum_{i=1}^n \hat S_{M_1,i}\hat S_{M_2,i}^\T,$ 
			and $\hat B_{M_2M_1} = \hat B_{M_1M_2}^\T$. 
			Construct the weight matrix $\hat W \in \mathbb{R}^{(p+q) \times (p+q)}$ defined as:
			\begin{equation*}
				\hat W = \begin{pmatrix}
					-\hat{B}_{M_1} \hat A_{M_1}^{-1} & -\hat{B}_{M_1M_2} \hat A_{M_2}^{-1} \\
					\hat{B}_{M_2M_1} \hat A_{M_1}^{-1} & \hat{B}_{M_2} \hat A_{M_2}^{-1}
				\end{pmatrix},
			\end{equation*}
			and let $\hat\lambda=(\hat\lambda_1,\ldots,\hat\lambda_{p+q})^{\T}$ denote the eigenvalues of $\hat W$.
			
			\State \textbf{Stage 1:} Form the variance test statistic $\hat T_n=n\hat{\omega}_n^2$, and consider the hypotheses:
			\begin{equation}
				\label{null-test-H0-step2}
				\mathcal{H}_0^1 : \omega_*^2 = 0 \quad \text{vs} \quad \mathcal{H}_1^1 : \omega_*^2 > 0.
			\end{equation}
			Let $\hat q_{1-\alpha_1}$ be the $(1-\alpha_1)$ quantile of
			$\sum_{j=1}^{p+q}\hat\lambda_j^2\chi_{1,j}^2$, where the
			$\chi_{1,j}^2$ are independent chi-squared random variables with one
			degree of freedom. If $\hat T_n\leq\hat q_{1-\alpha_1}$, report that the
			data do not distinguish the two models and stop; otherwise proceed to
			Stage~2.
			\State \textbf{Stage 2:} Based on the normalized likelihood ratio
			$\hat V_n =  {LR_n(\hat{\theta}^{M_1}, \hat{\theta}^{M_2})}/({\sqrt{n} \cdot \hat{\omega}_n})$ conduct the model selection test:
			\begin{equation}
				\label{null-test-H0-step3}
				\mathcal{H}_0^2 :  E_0\!\left\{\ell_i^{M_1}(\theta_*^{M_1})-\ell_i^{M_2}(\theta_*^{M_2})\right\}=0, 
			\end{equation}
			and \begin{equation*}
				\quad \mathcal{H}^{2,M_1}_1 :  E_0\!\left\{\ell_i^{M_1}(\theta_*^{M_1})-\ell_i^{M_2}(\theta_*^{M_2})\right\} > 0, \quad \mathcal{H}^{2,M_2}_1 :  E_0\!\left\{\ell_i^{M_1}(\theta_*^{M_1})-\ell_i^{M_2}(\theta_*^{M_2})\right\} < 0.
			\end{equation*}
			{Let $q_2=z_{1-\alpha_2/2}$. If $|\hat V_n|\le q_2$, do not select either model because the data do not provide significant evidence of a difference between their minimum KLIC values. If $\hat V_n>q_2$, select model $M_1$; if $\hat V_n<-q_2$, select model $M_2$.}
			\Statex \vspace{-0.65\baselineskip}
			\Statex \rule{\linewidth}{0.8pt}
		\end{algorithmic}
		\endgroup

Table~\ref{tab:graph_rows_selection} summarizes the expected large-sample behavior of Algorithm~\ref{algo:two-stage} under four possible locations of the true distribution: 
$f_0\in M_{\phi}$, $f_0\in M_1$, $f_0\in M_2$, and $f_0\in M_{1,2}$. 
The procedure compares only the two identifiable candidate models $M_1$ and $M_2$. 
Stage~1 first tests whether their observed-data likelihoods are distinguishable; if not, the procedure stops without ranking the models. 
This situation is expected when $f_0\in M_{\phi}$, where the two models share the same observed-data distribution. 
When $f_0\in M_1$ or $f_0\in M_2$, and the corresponding model has a unique KLIC advantage, Stage~2 consistently favors the candidate model closer to the true distribution. 
When $f_0\in M_{1,2}$, both candidate models are misspecified, and the procedure should be interpreted as comparing their relative observed-data fit rather than identifying the true missingness mechanism. In this case, the procedure may either stop at Stage~1 or fail to find a significant KLIC difference.

		\begin{table}[htbp]
			\centering
\caption{{\wyy True-model configurations in Figure~\ref{fig:graph_catalogue}  and expected results of Algorithm~\ref{algo:two-stage}.}}
			\label{tab:graph_rows_selection}
			\small
			\begin{tabular}{>{\centering\arraybackslash}m{0.20\textwidth} m{0.65\textwidth}}
				\toprule
				{\wyy True model $f_0$ } & Expected result and interpretation\\
				\midrule
				{\purple $f_0\in M_{\phi}$} & Common or reduced cases; Stage~1 tests whether the fitted models are indistinguishable. \\	\addlinespace[1pt]\hline
				\addlinespace[2.5pt]
				$f_0\in M_1$ & Candidate family with $R_1\to R_2$; if distinguishable, Stage~2 may favor $M_1$. \\\addlinespace[1pt]\hline
				\addlinespace[2.5pt]
				$f_0\in M_2$ & Candidate family with $Y_1\to R_2$; if distinguishable, Stage~2 may favor $M_2$. \\\addlinespace[1pt]\hline
				\addlinespace[2.5pt]
				{\purple $f_0\in M_{1,2}$}  & Outside the candidate set. Both candidate models are misspecified. Stage~1 may stop if their
				pseudo-true observed-data distributions are indistinguishable; otherwise,
				Stage~2 may report no significant KLIC difference or favor the candidate
				model that is closer to the true distribution.\\
				\bottomrule
			\end{tabular}
		\end{table}

			\subsection{Theoretical Guarantee for Model Selection}
			\label{subsec:theory}
			We now develop the asymptotic theory for the proposed model selection procedure. Although the two models differ fundamentally in their outcome-missingness specifications, their parameter spaces overlap. Specifically, when the true missingness model reduces to {\purple $M_{\phi}$} in Figure~\ref{fig:graph_catalogue}(c), the same specification can be accommodated by both competing models. Because neither model strictly nests the other and the two are not completely disjoint, they are formally classified as overlapping models, following Definition 3 of \citet{Vuong1989}. To establish the theoretical properties of the proposed model selection procedure, we derive the asymptotic distributions of the test statistics under the regularity conditions introduced in Appendix~\ref{sm_sec:regularity}.
			
			We begin with a preliminary lemma that characterizes the
			null hypothesis $\mathcal{H}_0^{1}:\omega_*^2=0$. This null is equivalent
			to $f_{M_1}(O_i;\theta_*^{M_1})=f_{M_2}(O_i;\theta_*^{M_2})$ almost
			surely under the true observed-data distribution (see Lemma~4.1 of
			\citet{Vuong1989}), as formalized below.
			\begin{lemma}\label{lemma:equivalence_condition}
				Given Assumptions \ref{assu:regularity}(i)--(iii) and (vi) in Appendix~\ref{sm_sec:regularity}, $\omega_{*}^2=0$ if and only if $f_{M_1}(O_i;\theta_*^{M_1})=f_{M_2}(O_i;\theta_*^{M_2})$ almost surely under the true observed-data distribution.
			\end{lemma}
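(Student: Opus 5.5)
The plan follows Vuong's Lemma~4.1, adapted to the observed-data likelihoods of $M_1$ and $M_2$. Write $d(O_i)=\log f_{M_1}(O_i;\theta_*^{M_1})-\log f_{M_2}(O_i;\theta_*^{M_2})$. The regularity conditions in Assumption~\ref{assu:regularity}(i)--(iii) and (vi) give three things: the pseudo-true values $\theta_*^{M_1},\theta_*^{M_2}$ exist and are unique; both densities are strictly positive; and $E_0|d(O_i)|^2<\infty$. So $\omega_*^2=\mathrm{Var}\{d(O_i)\}$ is well defined and finite.

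\emph{Sufficiency.} Suppose $f_{M_1}(O_i;\theta_*^{M_1})=f_{M_2}(O_i;\theta_*^{M_2})$ almost surely under $f_0$. Then $d(O_i)=0$ almost surely, so its variance is zero.

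\emph{Necessity.} Suppose $\omega_*^2=0$. Then $d(O_i)=c$ almost surely for some constant $c$, that is, $f_{M_1}(o;\theta_*^{M_1})=e^{c}f_{M_2}(o;\theta_*^{M_2})$ for $f_0$-almost every $o$. The key step is to show $c=0$. Both are probability densities with respect to the same dominating measure for $O$: counting measure on $(R_1,R_1Y_1,R_2)$ times the measure for $X$ and $\nu$ for $R_2Y_2$. Integrating both sides over the support would give $1=e^{c}$ at once, provided the identity holds on the whole support of both densities. Here lies the main obstacle: the equality is only known $f_0$-a.s., and $f_0$ may put no mass where the model densities do.

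I will handle this in two ways.
\begin{itemize}
\item \emph{Route 1.} Use the support condition in the regularity assumptions: $f_0$ and the model densities share a common support, which is determined by the observation patterns and $\operatorname{supp}(X)$ and does not depend on $\theta$. Integrating over that support then gives $e^{c}=1$.
\item \emph{Route 2 (no support condition).} Argue through KLIC optimality. Write $\mathrm{KLIC}$ for each model at its pseudo-true value. Since $d\equiv c$, we have $E_0\{\ell^{M_1}_i(\theta_*^{M_1})\}=E_0\{\ell^{M_2}_i(\theta_*^{M_2})\}+c$. Suppose $c\neq 0$, and without loss of generality $c<0$. Then $f_{M_1}=e^{c}f_{M_2}<f_{M_2}$ on the $f_0$-support. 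Integrating over the $f_0$-support gives $\int_{\operatorname{supp} f_0} f_{M_2}\,d\mu=e^{-c}\int_{\operatorname{supp} f_0} f_{M_1}\,d\mu$. Because $f_{M_1}$ is a density, this quantity is at most $1$, so $\int_{\operatorname{supp} f_0} f_{M_1}\,d\mu\le e^{c}<1$, and $f_{M_1}$ leaks mass off the support of $f_0$. In the present observed-data models this is impossible. Each observation pattern $(r_1,r_2)$ has positive probability under $f_0$, and the model densities are continuous and positive in $(x,y)$ on $\operatorname{supp}(X)$ and the support of $Y_2$, which by assumption coincide with those of $f_0$. Hence $\int_{\operatorname{supp} f_0}f_{M_1}\,d\mu=1$, a contradiction, and so $c=0$.
\end{itemize}

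Either way $d(O_i)=0$ almost surely, which gives the stated equivalence. I expect the write-up to go through Route 1, quoting the support and positivity clause of Assumption~\ref{assu:regularity} explicitly. Its main role is to justify exchanging the "$f_0$-a.s." equality for an equality of integrals over the full support.
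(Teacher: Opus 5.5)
Your Route~1 is essentially the paper's proof: $\omega_*^2=0$ makes the log-ratio equal to a constant $c$ $F_0$-almost surely, and the common-support and positivity clause of Assumption~\ref{assu:regularity}(i) turns this into $f_{M_1}(\cdot;\theta_*^{M_1})=e^{c}f_{M_2}(\cdot;\theta_*^{M_2})$ almost everywhere on $K^0$, so integrating both densities gives $e^{c}=1$, and the converse is immediate. Route~2 is not actually support-free, since it invokes the coincidence of supports, and KLIC optimality does no work there; it could not exclude $c\neq 0$ without that clause, so Route~1 is the right write-up.
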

			
			Lemma~\ref{lemma:equivalence_condition} shows that $\omega_*^2 = 0$
serves as a measure of distinguishability between the two models:
when $\omega_*^2 = 0$, the two models are observationally equivalent,
and comparing their relative closeness to the true observed-data
distribution is not meaningful. In particular, this condition includes
the case where both models reduce to the same observed-data model in
$M_{\phi}$ in Figure~\ref{fig:graph_catalogue}. More generally,
$\omega_*^2 = 0$ indicates that the two fitted models are observationally
indistinguishable, regardless of whether the common observed-data
distribution corresponds to a correctly specified or misspecified
scenario. It is therefore essential to test
$\mathcal{H}_0^1:\omega_*^2=0$ before proceeding to the second-stage
comparison of relative fit.
			
			To operationalize this test, we require a consistent estimator of $\omega_*^2$ together with its asymptotic distribution under $\mathcal{H}_0^1$.
			This result provides the foundation for deriving the asymptotic distributions of the test statistics in the subsequent steps.
			\begin{lemma}[Consistency of Variance Estimators]\label{lem:variance_consistency}
				Under Assumption \ref{assu:regularity} in Appendix~\ref{sm_sec:regularity}, the following convergence results hold:
				
				(i) $\hat{\omega}_{n}^{2} \xrightarrow{a.s.} \omega_*^2$;
				
				(ii) $\tilde{\omega}_{n}^{2} \xrightarrow{a.s.} \omega_*^2 + \left(E_0\left\{\ell^{M_1}_{i}(\theta_*^{M_1}) - \ell^{M_2}_{i}(\theta_*^{M_2})\right\}\right)^2$, 
				where $\tilde{\omega}_{n}^{2} \equiv \frac{1}{n}\sum_{i=1}^{n}\left\{\ell^{M_1}_{i}(\hat\theta^{M_1}) - \ell^{M_2}_{i}(\hat\theta^{M_2})\right\}^{2}$. 
			\end{lemma}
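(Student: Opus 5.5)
The plan is to reduce both statements to a uniform strong law of large numbers for the pointwise log-likelihood difference and its square, combined with strong consistency of the MLEs. Write $d_i(\theta)=\ell_i^{M_1}(\theta^{M_1})-\ell_i^{M_2}(\theta^{M_2})$ for $\theta=(\theta^{M_1},\theta^{M_2})$, and let $\theta_*=(\theta_*^{M_1},\theta_*^{M_2})$. Then
\[
\tilde\omega_n^2=n^{-1}\sum_{i=1}^n d_i(\hat\theta)^2,\qquad \hat\omega_n^2=\tilde\omega_n^2-\bar m_n^2,\qquad \bar m_n=n^{-1}\sum_{i=1}^n d_i(\hat\theta).
\]
So it suffices to show two limits: $\bar m_n\to E_0\{d_i(\theta_*)\}$ almost surely, which is exactly \eqref{eq:lr_consist}, and $\tilde\omega_n^2\to E_0\{d_i(\theta_*)^2\}$ almost surely. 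Part (ii) is then the second limit together with $E_0\{d_i^2\}=\mathrm{Var}\{d_i\}+(E_0 d_i)^2=\omega_*^2+(E_0 d_i)^2$. Part (i) follows by subtracting the square of the first limit, using continuity of $x\mapsto x^2$.

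The first step is to invoke the regularity conditions in Assumption~\ref{assu:regularity}: compact parameter spaces, continuity of $\theta\mapsto\ell_i^{M}(\theta)$, and a dominating envelope. The envelope should be a function $g(O)$ with $|\ell_i^{M}(\theta)|\le g(O_i)$ for all $\theta$ and $E_0 g^2<\infty$. Under these conditions, White's (1982) results give $\hat\theta^{M}\to\theta_*^M$ almost surely for each $M$, with $\theta_*^M$ a unique maximizer. The second step is a uniform strong law, for example Jennrich's (or Le Cam's) theorem, applied to the class $\{d(\cdot;\theta)^2:\theta\in\Theta_1\times\Theta_2\}$. Each member is continuous in $\theta$ and bounded by $4g^2$, which is integrable. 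This yields $\sup_\theta|n^{-1}\sum_i d_i(\theta)^2-E_0 d(\theta)^2|\to0$ almost surely, and the limit $E_0 d(\theta)^2$ is continuous in $\theta$ by dominated convergence.

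The third step is the standard decomposition
\[
\bigl|\tilde\omega_n^2-E_0 d(\theta_*)^2\bigr|\le \sup_\theta\Bigl|n^{-1}\textstyle\sum_i d_i(\theta)^2-E_0 d(\theta)^2\Bigr|+\bigl|E_0 d(\hat\theta)^2-E_0 d(\theta_*)^2\bigr|.
\]
Both terms vanish almost surely, the first by the uniform law and the second by continuity of $E_0 d(\theta)^2$ combined with $\hat\theta\to\theta_*$. The same argument applied to $d$ itself, which is bounded by $2g$, recovers $\bar m_n\to E_0 d(\theta_*)$.

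The main obstacle is ensuring that the regularity assumption actually supplies a square-integrable envelope for the log-likelihoods, rather than only an integrable one. This is the extra moment needed for the squared terms. In the MNAR observed-data likelihoods, the integrals over the missing $Y_2$ and the sums over missing $Y_1$ make this nontrivial. If the assumption only states $E_0\sup_\theta|\ell^M|^2<\infty$, that suffices directly. Otherwise I would first try to bound $\sup_\theta|\ell_i^M(\theta)|$ using compactness together with the mixture form of the observed-data density, bounding it below by the smallest component and above by the largest.
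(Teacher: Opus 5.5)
Your proposal is correct and follows the paper's proof essentially step for step. Both use Jennrich's uniform strong law for $d$ and $d^2$ over $\Theta_{M_1}\times\Theta_{M_2}$, strong consistency of the MLEs, the identity $\hat\omega_n^2=\tilde\omega_n^2-\bar m_n^2$, and the variance decomposition for (ii). The envelope concern in your last paragraph is settled directly by Assumption~\ref{assu:regularity}(vi), which assumes an $F_0$-integrable, $\theta$-free bound on $|\log f_{M}(o;\theta^{M})|^2$, so the mixture-bounding fallback is unnecessary. Your explicit split into a uniform error term plus continuity of $\theta\mapsto E_0 d(\theta)^2$ is a slightly more careful rendering of the paper's ``evaluate the uniform limits at the MLEs'' step.
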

			
			Lemma \ref{lem:variance_consistency} establishes the strong consistency of two distinct variance estimators, which differ according to whether the sample mean is subtracted for centering. Specifically, $\hat{\omega}_{n}^{2}$ is the standard sample variance of the differences $m_i = \ell_i^{M_1}(\hat\theta^{M_1}) - \ell^{M_2}_{i}(\hat\theta^{M_2})$, and it consistently estimates the population variance $\omega_*^2$ defined in \eqref{eq:variance}. The statistic $\tilde{\omega}_{n}^{2}$ is an uncentered second-moment estimator that converges to $E_0[\{\ell^{M_1}_{i}(\theta_*^{M_1}) - \ell^{M_2}_{i}(\theta_*^{M_2})\}^2] = \omega_*^2 + \{E_0(\ell^{M_1}_{i}(\theta_*^{M_1}) - \ell^{M_2}_{i}(\theta_*^{M_2})\}^2$ and serves as an intermediate statistic in the asymptotic derivations. For a detailed proof, see Appendix~\ref{pf:lemma1}.
			
			Furthermore, Lemma \ref{lem:variance_consistency} establishes almost sure convergence ($\xrightarrow{a.s.}$),  ensuring that the uncertainty from variance estimation becomes negligible in large samples. We now discuss the asymptotic distribution of the statistic of interest.
			
			\begin{theorem}[Asymptotic Distribution of the Variance Test Statistics]\label{thm:variance_distribution}
				Under Assumptions \ref{assu:regularity}(i)--(vii) in Appendix~\ref{sm_sec:regularity}, if the null hypothesis $\mathcal{H}_0^1 $ in \eqref{null-test-H0-step2} holds, for any $t \geq 0$
				\begin{equation*}
					P(\hat {T}_n \leq t) \rightarrow M_{p+q}(t ;\lambda^2),
				\end{equation*}
				where $\hat{T}_n = n \hat{\omega}_n^2$ and $M_{p+q}(\cdot ;\lambda^2)$  denotes the distribution function of  $\sum_{j=1}^{p+q} \lambda_j^2 \chi_{1,j}^2$, with $\{\chi_{1,j}^2\}_{j=1}^{p+q}$ being a sequence of independent chi-squared random variables, each with one degree of freedom. The weights $\{\lambda_j^2\}_{j=1}^{p+q}$ are the squared eigenvalues of the $(p+q) \times (p+q)$ matrix $W$ defined as:
				\begin{equation*}
					W = \begin{pmatrix}
						-B_{M_1} A_{M_1}^{-1} & -B_{M_1M_2} A_{M_2}^{-1} \\
						B_{M_2M_1} A_{M_1}^{-1} & B_{M_2} A_{M_2}^{-1}
					\end{pmatrix},
				\end{equation*}
				where the sub-matrices are defined in Assumption
				\ref{assu:regularity} in Appendix~\ref{sm_sec:regularity}, including
				$B_{M_2M_1}=B_{M_1M_2}^{\T}$.
			\end{theorem}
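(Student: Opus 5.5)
The plan is to follow the argument behind Theorem~3.3 of \citet{Vuong1989} and adapt it to the present notation. Write $m_i(\theta^{M_1},\theta^{M_2})=\ell_i^{M_1}(\theta^{M_1})-\ell_i^{M_2}(\theta^{M_2})$ and $\theta=(\theta^{M_1\T},\theta^{M_2\T})^\T$.

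\textbf{Step 1: consequence of the null.} Under $\mathcal{H}_0^1$, Lemma~\ref{lemma:equivalence_condition} gives $m_i(\theta_*)=0$ almost surely. Hence every observation-level quantity evaluated at the pseudo-true values vanishes. The statistic $n\hat\omega_n^2$ is therefore driven entirely by the estimation error in $\hat\theta-\theta_*$.

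\textbf{Step 2: Taylor expansion of each $m_i$.} Expand $m_i(\hat\theta)$ around $\theta_*$:
\[
m_i(\hat\theta)=m_i(\theta_*)+S_i^\T(\hat\theta-\theta_*)+r_i=S_i^\T(\hat\theta-\theta_*)+r_i ,
\]
where
\[
S_i=\bigl(\nabla\ell_i^{M_1}(\theta_*^{M_1})^\T,\,-\nabla\ell_i^{M_2}(\theta_*^{M_2})^\T\bigr)^\T .
\]
The remainder $r_i$ is quadratic in $\hat\theta-\theta_*$, and its second derivatives are dominated by integrable envelopes under Assumption~\ref{assu:regularity}.

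\textbf{Step 3: quadratic form for the statistic.} Substituting the expansion gives
\[
n\hat\omega_n^2=\sqrt n(\hat\theta-\theta_*)^\T\Bigl\{n^{-1}\sum_i (S_i-\bar S)(S_i-\bar S)^\T\Bigr\}\sqrt n(\hat\theta-\theta_*)+o_p(1).
\]
The remainder terms are $O_p(n^{-1/2})$ after scaling, because $\sqrt n(\hat\theta-\theta_*)=O_p(1)$ and the third moments are bounded. Since $E_0S_i=0$ by the first-order conditions defining \eqref{eq:pseudo_true_parameters}, the middle matrix converges almost surely (as in Lemma~\ref{lem:variance_consistency}) to
\[
\mathcal B=\begin{pmatrix}B_{M_1}&-B_{M_1M_2}\\-B_{M_2M_1}&B_{M_2}\end{pmatrix}.
\]

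\textbf{Step 4: limit of the parameter estimators.} By standard quasi-MLE theory \citep{White1982}, $\sqrt n(\hat\theta^{M}-\theta^{M}_*)=-A_M^{-1}n^{-1/2}\sum_i\nabla\ell_i^M(\theta_*^M)+o_p(1)$ for each model. Stacking the two models,
\[
\sqrt n(\hat\theta-\theta_*)\to_d Z\sim N(0,\,A^{-1}BA^{-1}),
\]
where $A=\mathrm{diag}(A_{M_1},A_{M_2})$ and $B$ is the joint score covariance with off-diagonal block $B_{M_1M_2}$.

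\textbf{Step 5: limit distribution.} Combining Steps~3 and~4, $n\hat\omega_n^2\to_d Z^\T\mathcal B Z$. Write $Z=A^{-1}B^{1/2}U$ with $U\sim N(0,I_{p+q})$. The quadratic form $Z^\T\mathcal B Z=U^\T B^{1/2}A^{-1}\mathcal B A^{-1}B^{1/2}U$ is distributed as $\sum_j\mu_j\chi^2_{1,j}$, where the $\mu_j$ are the eigenvalues of $\mathcal B A^{-1}BA^{-1}$. It then remains to show that these $\mu_j$ equal $\lambda_j^2$, the squared eigenvalues of $W$.

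\textbf{Step 6: matching the weights (main obstacle).} Set $J=\mathrm{diag}(-I_p,I_q)$. Then $\mathcal B=JBJ$, and $W=-J B A^{-1}$ up to the sign conventions in the statement. Hence
\[
W^2=JBA^{-1}JBA^{-1}.
\]
Since $J$ commutes with the block-diagonal $A^{-1}$, we have $JBA^{-1}JBA^{-1}=JBJ\,A^{-1}BA^{-1}$, using $JA^{-1}=A^{-1}J$. This equals $\mathcal B A^{-1}BA^{-1}$, so its eigenvalues are the $\mu_j$, and the eigenvalues of $W^2$ are $\lambda_j^2$. The delicate point is verifying these sign and block conventions, so that $W$ exactly matches the displayed matrix; I expect this to be the main bookkeeping obstacle.

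\textbf{Step 7: convergence of the distribution function.} The limiting law is continuous whenever some $\lambda_j\neq0$, so pointwise convergence of $P(\hat T_n\le t)$ follows. If all $\lambda_j=0$, the limit is a point mass at $0$, and the convergence holds in the degenerate sense consistent with the statement for $t\ge0$.
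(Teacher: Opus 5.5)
Your proposal is correct and follows essentially the same Vuong-type argument as the paper. Both use $m_i(\theta_*)=0$ under $\mathcal{H}_0^1$ and a first-order expansion giving a quadratic form in $\sqrt n(\hat\theta-\theta_*)$ whose middle matrix tends to $\begin{pmatrix}B_{M_1}&-B_{M_1M_2}\\-B_{M_2M_1}&B_{M_2}\end{pmatrix}$, combined with joint asymptotic normality of the two quasi-MLEs; the differences are only that you handle the centering directly via $E_0S_i=0$ instead of passing through $\tilde\omega_n^2$ and Lemma~\ref{lem:variance_equivalence}, and you do explicitly the eigenvalue matching the paper delegates to Lemma~A of \citet{Vuong1989}. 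On that matching, your Step~6 sign worry disappears: with $J=\mathrm{diag}(-I_p,I_q)$ one has exactly $W=JBA^{-1}$, and only $W^2=\mathcal B A^{-1}BA^{-1}$ enters the weights.
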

			
			{Theorem~\ref{thm:variance_distribution} characterizes the asymptotic null distribution of the variance statistic $\hat{T}_n$. Under $\mathcal{H}_0^1$, $\hat{T}_n$ converges in distribution to a weighted sum of $p+q$ independent $\chi_1^2$ random variables. The eigenvalues $\{\hat\lambda_j\}$ of $\hat W$ consistently estimate $\{\lambda_j\}$, so their squares provide plug-in estimates of the weights $\{\lambda_j^2\}$. The level-$\alpha_1$ test rejects when $\hat T_n>\hat q_{1-\alpha_1}$, as specified in Algorithm~\ref{algo:two-stage}.}
			Based on Lemma \ref{lemma:equivalence_condition}, if $\mathcal{H}_0^1$ is not rejected, the procedure reports insufficient evidence to distinguish the two observed-data likelihoods and terminates without selecting between them.
			If $\mathcal{H}_0^1$ is rejected, the procedure has evidence against observational equivalence and proceeds to the second-stage directional test. Under a fixed alternative with $\omega_*^2>0$, the first-stage rejection probability tends to one, so the models are asymptotically distinguishable and comparison of their minimum KLIC values becomes meaningful.
			
			Next, we establish the asymptotic properties of the test statistic in Stage 2 of Section~\ref{subsec:ms_procedure}. When $\omega_*^2 = 0$, the second-stage test statistic degenerates and its normal approximation fails, which directly motivates the first-stage distinguishability test. The statistic $\hat V_n$ provides a standardized measure of relative model fit. Its asymptotic properties are as follows:
			
			\begin{theorem}[Asymptotic Distribution of Model Selection Test Statistic]\label{thm:lr_normal}
				Under Assumptions \ref{assu:regularity}(i)--(vii) in Appendix~\ref{sm_sec:regularity}, if $\omega_*^2 > 0$ and $\mathcal{H}_0^2$ holds, then, as $n \to \infty$,
				\begin{equation*}
					\frac{LR_n(\theta_*^{M_1}, \theta_*^{M_2})}{\sqrt{n} \omega_*} \xrightarrow{d} N(0,1).
				\end{equation*}
				Moreover, by the consistency of $\hat{\omega}_{n}^{2}$ (Lemma \ref{lem:variance_consistency}), the sample-based statistic satisfies:
				\begin{equation*}
					\hat V_n = \frac{LR_n(\hat{\theta}^{M_1}, \hat{\theta}^{M_2})}{\sqrt{n}\,\hat{\omega}_n} \xrightarrow{d} N(0,1).
				\end{equation*}
			\end{theorem}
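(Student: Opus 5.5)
The plan is to prove the first display with the Lindeberg--Lévy central limit theorem, and then pass to the sample-based statistic in two steps: replace the pseudo-true parameters by the MLEs, and replace $\omega_*$ by $\hat\omega_n$ using Slutsky's lemma.

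\emph{Step 1: the oracle statistic.} Let $d_i=\ell_i^{M_1}(\theta_*^{M_1})-\ell_i^{M_2}(\theta_*^{M_2})$. These are i.i.d.\ because the $O_i$ are i.i.d. Under $\mathcal{H}_0^2$ they have mean zero, and their variance is $\omega_*^2\in(0,\infty)$; finiteness follows from the moment bounds in Assumption~\ref{assu:regularity}, which are the same bounds used for Lemma~\ref{lem:variance_consistency}. The classical CLT then gives $LR_n(\theta_*^{M_1},\theta_*^{M_2})/(\sqrt n\,\omega_*)\xrightarrow{d}N(0,1)$. The hypothesis $\omega_*^2>0$ is used only here, to make the normalization non-degenerate.

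\emph{Step 2: effect of estimated parameters.} I must show that
\[
n^{-1/2}\{LR_n(\hat\theta^{M_1},\hat\theta^{M_2})-LR_n(\theta_*^{M_1},\theta_*^{M_2})\}=o_p(1).
\]
For each model $M\in\{M_1,M_2\}$, take a second-order Taylor expansion of $\sum_i\ell_i^M(\hat\theta^M)$ around $\theta_*^M$:
\[
\sum_i\ell_i^M(\hat\theta^M)-\sum_i\ell_i^M(\theta_*^M)=(\hat\theta^M-\theta_*^M)^{\T}\sum_i\nabla\ell_i^M(\theta_*^M)+\tfrac12 n(\hat\theta^M-\theta_*^M)^{\T}\bar A_n(\hat\theta^M-\theta_*^M),
\]
where $\bar A_n$ is the averaged Hessian at an intermediate point.

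Under the regularity conditions (White 1982, as encoded in Assumption~\ref{assu:regularity}), $\sqrt n(\hat\theta^M-\theta_*^M)=O_p(1)$, with limit $-A_M^{-1}n^{-1/2}\sum_i\nabla\ell_i^M(\theta_*^M)$. Also $n^{-1/2}\sum_i\nabla\ell_i^M(\theta_*^M)=O_p(1)$, since the score has mean zero at the pseudo-true value (first-order condition of \eqref{eq:pseudo_true_parameters}) and finite variance $B_M$. Finally $\bar A_n\to A_M$ by uniform convergence of the Hessian.

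Combining these, both terms of the expansion are $O_p(1)$, so the difference in log-likelihoods for each model is $O_p(1)$; in fact it converges to $-\tfrac12 Z_M^{\T}A_M^{-1}Z_M$ in distribution. Dividing by $\sqrt n$ makes it $o_p(1)$. Hence
\[
LR_n(\hat\theta^{M_1},\hat\theta^{M_2})/\sqrt n=LR_n(\theta_*^{M_1},\theta_*^{M_2})/\sqrt n+o_p(1).
\]

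\emph{Step 3: Slutsky.} Lemma~\ref{lem:variance_consistency}(i) gives $\hat\omega_n\to\omega_*>0$ almost surely. Slutsky's lemma together with Steps 1 and 2 then yields $\hat V_n\xrightarrow{d}N(0,1)$.

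The main obstacle is Step 2. It requires root-$n$ consistency of the MLEs toward possibly misspecified pseudo-true values, and these values must be unique interior points. The EM-computed estimators must also be the global maximizers. I would obtain all of this by invoking the identifiability and interiority conditions of Assumption~\ref{assu:regularity} together with White's misspecified-MLE theory, rather than re-deriving it. A further point of care is that the remainder is $O_p(1)$ rather than $o_p(1)$ before scaling. This is exactly why the $\sqrt n$ normalization absorbs it here, whereas under $\omega_*^2=0$ the same $O_p(1)$ terms dominate and produce the weighted chi-square limit of Theorem~\ref{thm:variance_distribution}.
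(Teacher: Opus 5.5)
Your proposal is correct and follows essentially the same route as the paper. Both use the CLT for the i.i.d.\ summands at the pseudo-true values, and both use a second-order likelihood expansion showing that replacing $\theta_*$ by the MLEs changes $LR_n$ only by $O_p(1)$, hence by $o_p(1)$ after dividing by $\sqrt n$; both then finish with Slutsky's lemma and Lemma~\ref{lem:variance_consistency}(i) for $\hat\omega_n\to\omega_*>0$.
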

			
	Theorem~\ref{thm:lr_normal} establishes the asymptotic normality of the
log-likelihood ratio statistic under the second-stage null. Since the
observed data are independent and identically distributed (Assumption
\ref{assu:regularity}(i) in Appendix~\ref{sm_sec:regularity}), the
pointwise differences
$m_i^*=\ell^{M_1}_{i}(\theta_*^{M_1})-
\ell^{M_2}_{i}(\theta_*^{M_2})$
are also i.i.d. with finite second moment (Assumption
\ref{assu:regularity}(vi) in Appendix~\ref{sm_sec:regularity}). Under
$\mathcal{H}_0^2$, $E_0(m_i^*)=0$, and hence the central limit theorem gives
\[
\frac{LR_n(\theta_*^{M_1},\theta_*^{M_2})}{\sqrt{n}}
=\frac{1}{\sqrt{n}}\sum_{i=1}^n m_i^*
\xrightarrow{d}N(0,\omega_*^2).
\]
Standard normalization yields the first assertion of the theorem. The
condition $\omega_*^2>0$ ensures non-degenerate variability, so the
cumulative log-likelihood ratio admits a Gaussian limit. Replacing
$\omega_*^2$ by the consistent estimator $\hat\omega_n^2$ and the
pseudo-true parameters by their MLEs yields the feasible statistic
$\hat V_n$ by Slutsky's theorem and standard likelihood expansion. The
detailed proof is provided in Appendix~\ref{pf:thm4}.

The asymptotic normality of $\hat V_n$ provides the theoretical
justification for the Stage~2 selection rule in Algorithm~\ref{algo:two-stage}.
Conditional on rejecting $\mathcal{H}_0^1$ in Stage~1, if
$\hat V_n>q_2$, the procedure selects $M_1$, indicating that $M_1$
has a larger expected log-likelihood and a smaller minimum KLIC value.
If $\hat V_n<-q_2$, the procedure selects $M_2$. When
$|\hat V_n|\leq q_2$, neither model is selected because the data do not
provide sufficient evidence that one candidate model has a smaller
minimum KLIC value than the other.

			\section{Inference After Missingness-Model Selection}
			\label{sec:post_selection}
			
			\begingroup

	Algorithm~\ref{algo:two-stage} uses the observed-data likelihood to determine whether the two candidate missingness models are distinguishable and, if so, whether the data favor $M_1$ or $M_2$ in Figure~\ref{fig:graph_catalogue}. This section considers inference after either candidate model is selected. Let $\widehat M$ denote the model selected by Algorithm~\ref{algo:two-stage} whenever the procedure returns one of the two candidates. For concreteness, we present the results for $\widehat M=M_1$; the corresponding results for $M_2$ follow by reversing the likelihood-ratio contrast and relabeling the model-specific quantities. Once selected, $M_1$ in Figure~\ref{fig:graph_catalogue}(f) is used as a working law for recovering the full-data distribution and evaluating the population functional in \eqref{eq:general_functional}. Let $\hat\tau^{M_1}$ be the generic plug-in estimator in \eqref{eq:functional_main} in Appendix~\ref{sec:likelihood-estimation}, and let $\tau_*^{M_1}$ denote its population target. Thus, $\hat\tau^{M_1}$ targets the value obtained by evaluating the $M_1$ full-data law at the pseudo-true parameter $\theta_*^{M_1}$ defined in \eqref{eq:pseudo_true_parameters}; this value equals the true $\tau(F)$ when $M_1$ is correctly specified.
			
		Let $\mu_*=\mathrm{KLIC}_{M_2}(\theta_*^{M_2})-\mathrm{KLIC}_{M_1}(\theta_*^{M_1})$ in \eqref{eq:klic_H}; hence, $\mu_*>0$ indicates that $M_1$ has the smaller KLIC value. 
According to Algorithm~\ref{algo:two-stage}, $M_1$ is selected only when Stage~1 rejects indistinguishability and the Stage~2 statistic exceeds its upper critical value:
\[
\{\widehat M=M_1\}
=
\{\hat T_n>\hat q_{1-\alpha_1},\ \hat V_n>q_2\}.
\]
Under the fixed-law setting considered below, if $\omega_*^2>0$, Stage~1 rejects with probability tending to one. Moreover, if $\mu_*>0$, then $\hat V_n\to_p+\infty$, and the Stage~2 selection criterion $\hat V_n>q_2$ also holds with probability tending to one. The following corollary formalizes this implication.
			
			\begin{corollary}[Consistent Selection of $M_1$]
				\label{cor:selection_determinate}
				
				Suppose Assumption~\ref{assu:regularity} in Appendix~\ref{sm_sec:regularity} holds and the data-generating
				law is fixed with $\omega_*^2>0$.  If $\mu_*>0$, then
				\[
				P(\widehat M=M_1)\longrightarrow 1.
				\]
			\end{corollary}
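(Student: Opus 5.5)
The event $\{\widehat M=M_1\}$ is the intersection $\{\hat T_n>\hat q_{1-\alpha_1}\}\cap\{\hat V_n>q_2\}$. It therefore suffices to show that each of the two events has probability tending to one, since
\[
P(\widehat M=M_1)\ge 1-P(\hat T_n\le \hat q_{1-\alpha_1})-P(\hat V_n\le q_2).
\]
Both parts rest on the almost sure limits already available: the convergence $n^{-1}LR_n\to E_0(m_i^*)$ in \eqref{eq:lr_consist}, and Lemma~\ref{lem:variance_consistency}(i), $\hat\omega_n^2\to\omega_*^2$ almost surely.

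\emph{Stage 1.} Since $\omega_*^2>0$, Lemma~\ref{lem:variance_consistency}(i) gives $\hat T_n=n\hat\omega_n^2\ge n\omega_*^2/2$ with probability tending to one, so $\hat T_n\to\infty$ in probability. It remains to show that the estimated critical value $\hat q_{1-\alpha_1}$ is $O_p(1)$. Under Assumption~\ref{assu:regularity}, the quantities $\hat A_M$, $\hat B_M$ and $\hat B_{M_1M_2}$ converge to their population counterparts by a uniform law of large numbers combined with consistency of $\hat\theta^M$. Because the $A_M$ are nonsingular, $\hat W\to W$ in probability. Eigenvalues are continuous functions of matrix entries, so $\sum_j\hat\lambda_j^2\to\sum_j\lambda_j^2<\infty$.

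By Markov's inequality, the $(1-\alpha_1)$ quantile of $\sum_j\hat\lambda_j^2\chi^2_{1,j}$ is at most $\alpha_1^{-1}\sum_j\hat\lambda_j^2$, which is bounded in probability. Hence $P(\hat T_n>\hat q_{1-\alpha_1})\to1$. The Markov bound avoids any need for continuity of quantiles in the weights, including when the limiting weights are degenerate.

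\emph{Stage 2.} By the KLIC definition \eqref{eq:klic_H}, the $E_0\{\log f_0\}$ terms cancel, giving $\mu_*=E_0\{\ell_i^{M_1}(\theta_*^{M_1})-\ell_i^{M_2}(\theta_*^{M_2})\}$. Then \eqref{eq:lr_consist} gives $n^{-1}LR_n\to\mu_*>0$ almost surely, and Lemma~\ref{lem:variance_consistency}(i) gives $\hat\omega_n\to\omega_*\in(0,\infty)$. Writing
\[
\hat V_n=\sqrt n\,\frac{n^{-1}LR_n}{\hat\omega_n},
\]
the ratio converges almost surely to $\mu_*/\omega_*>0$, so $\hat V_n\to+\infty$ almost surely. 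Since $q_2=z_{1-\alpha_2/2}$ is fixed, $P(\hat V_n>q_2)\to1$. Combining the two stages through the union bound proves the corollary.

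The main obstacle is the consistency of $\hat W$, specifically that of the cross term $\hat B_{M_1M_2}$ and the inverse Hessians evaluated at the estimated parameters. This requires the dominance and continuity conditions in Assumption~\ref{assu:regularity}, which I take as given, together with the uniform strong law.

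If one prefers to avoid relying on consistency of $\hat W$ altogether, a weaker argument suffices: it is enough that $\hat\lambda$ is bounded in probability. This follows because $\hat A_M^{-1}$ and the $\hat B$ matrices are $O_p(1)$ whenever the smallest eigenvalue of $-\hat A_M$ stays bounded away from zero with probability tending to one.
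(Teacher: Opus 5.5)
Your proof is correct and follows the paper's own argument. It shows $\hat T_n=n\hat\omega_n^2\to\infty$ by Lemma~\ref{lem:variance_consistency}(i), and $\hat V_n=\sqrt n\,(n^{-1}LR_n)/\hat\omega_n\to+\infty$ by \eqref{eq:lr_consist}, using $\mu_*=E_0(m_i^*)$. Your Markov-inequality bound $\hat q_{1-\alpha_1}\le\alpha_1^{-1}\sum_j\hat\lambda_j^2=O_p(1)$ is a useful addition, because the paper passes from $\hat T_n\to\infty$ to first-stage rejection without explicitly controlling this data-dependent critical value.
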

			
			Corollary~\ref{cor:selection_determinate} states that when $M_1$ has a fixed positive expected likelihood advantage $\mu_*>0$, the two-stage procedure selects $M_1$ with probability tending to one. Conditioning on $\{\widehat M=M_1\}$ is then
			asymptotically equivalent to not conditioning at all, and the plug-in
			estimator retains its ordinary limiting distribution. This is made
			precise below using the asymptotic normality of the plug-in functional
			in Theorem~\ref{thm:ate_normality} in Appendix~\ref{sec:likelihood-estimation}.
			
			\begin{theorem}[Valid fixed-model inference after selection]
				\label{thm:post_selection_wald}
				Suppose the conditions of Theorem~\ref{thm:ate_normality} in Appendix~\ref{sec:likelihood-estimation} hold for
				model $M_1$, and that the data-generating law is fixed with
				$\omega_*^2>0$ and $\mu_*>0$. Then
				\[
				\sqrt n\bigl(\hat\tau^{M_1}-\tau_*^{M_1}\bigr)
				\;\Big|\;\{\widehat M=M_1\}
				\xrightarrow{d}
				N(0,V_{M_1}),
				\]
				where $V_{M_1}$ is defined in \eqref{eq:ate_variance} in Appendix~\ref{sec:likelihood-estimation}.  Consequently, if
				$\hat V_{M_1}\to_p V_{M_1}$, the ordinary Wald interval
				\[
				\mathcal C_{1-\alpha}^{\mathrm{Wald}}
				=
				\left[
				\hat\tau^{M_1}
				-
				z_{1-\alpha/2}\sqrt{\frac{\hat V_{M_1}}{n}},~
				\;
				\hat\tau^{M_1}
				+
				z_{1-\alpha/2}\sqrt{\frac{\hat V_{M_1}}{n}}~
				\right]
				\]
				satisfies
				\[
				P\left\{
				\tau_*^{M_1}\in\mathcal C_{1-\alpha}^{\mathrm{Wald}}
				\,\middle|\,
				\widehat M=M_1
				\right\}
				\longrightarrow
				1-\alpha.
				\]
			\end{theorem}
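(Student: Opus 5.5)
The plan is to reduce the conditional statement to the unconditional one, using Corollary~\ref{cor:selection_determinate}. Write $Z_n=\sqrt n(\hat\tau^{M_1}-\tau_*^{M_1})$ and $E_n=\{\widehat M=M_1\}$. Theorem~\ref{thm:ate_normality} in Appendix~\ref{sec:likelihood-estimation} gives $Z_n\xrightarrow{d}N(0,V_{M_1})$ unconditionally under the stated conditions for $M_1$; no joint limit theory for $(Z_n,\hat T_n,\hat V_n)$ is needed. The corollary applies because $\omega_*^2>0$ and $\mu_*>0$, and it gives $P(E_n)\to1$.

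The key elementary step is a bound valid for any Borel set $B$:
\[
\bigl|P(Z_n\in B\mid E_n)-P(Z_n\in B)\bigr|
\le \frac{2P(E_n^c)}{P(E_n)} .
\]
To derive it, expand $P(Z_n\in B)=P(Z_n\in B,E_n)+P(Z_n\in B,E_n^c)$. This gives $P(Z_n\in B\mid E_n)-P(Z_n\in B)=P(Z_n\in B,E_n)\{1/P(E_n)-1\}-P(Z_n\in B,E_n^c)$, and both terms are bounded by $P(E_n^c)/P(E_n)$. The right-hand side tends to zero, uniformly in $B$. Take $B=(-\infty,t]$ at continuity points $t$ of the $N(0,V_{M_1})$ distribution function. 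The conditional distribution functions then converge to the normal limit, which proves the first display. If $V_{M_1}=0$ the same argument works with degenerate limits, but I would assume $V_{M_1}>0$ as in the Appendix.

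For the Wald interval, consider the event $\{\tau_*^{M_1}\in\mathcal C^{\mathrm{Wald}}_{1-\alpha}\}=\{|Z_n|\le z_{1-\alpha/2}\hat V_{M_1}^{1/2}\}$. Unconditionally, $\hat V_{M_1}\to_p V_{M_1}>0$ and Slutsky's theorem give $Z_n/\hat V_{M_1}^{1/2}\xrightarrow{d}N(0,1)$. The coverage probability therefore converges to $1-\alpha$, since $\pm z_{1-\alpha/2}$ are continuity points. Apply the same total-variation-type bound to the random vector $(Z_n,\hat V_{M_1})$ and the set $\{(z,v):|z|\le z_{1-\alpha/2}v^{1/2}\}$. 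The conditional coverage then differs from the unconditional coverage by at most $2P(E_n^c)/P(E_n)\to0$, which proves the claim.

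The main obstacle is not this conditioning argument but verifying that Corollary~\ref{cor:selection_determinate} indeed applies, namely that $P(E_n)\to1$. This needs two facts. First, $\hat T_n=n\hat\omega_n^2\to\infty$ almost surely by Lemma~\ref{lem:variance_consistency}(i), while $\hat q_{1-\alpha_1}=O_p(1)$; the latter holds because $\hat W\to_p W$ under Assumption~\ref{assu:regularity}, so the eigenvalues and the quantile of the weighted chi-square converge. Second, $\hat V_n=\sqrt n\,(LR_n/n)/\hat\omega_n$ with $LR_n/n\to\mu_*>0$ by \eqref{eq:lr_consist} and $\hat\omega_n\to\omega_*>0$, so $\hat V_n\to\infty$. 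Since the corollary is stated earlier, I would invoke it directly and only remark on these ingredients.
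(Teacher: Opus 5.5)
Your proposal is correct and follows essentially the same route as the paper. You invoke Corollary~\ref{cor:selection_determinate} to get $P(\widehat M=M_1)\to1$. You use the elementary bound $|P(E\mid \widehat M=M_1)-P(E)|\le 2P(\widehat M\neq M_1)/P(\widehat M=M_1)$ for arbitrary events $E$. This transfers both the unconditional normal limit from Theorem~\ref{thm:ate_normality} and the Slutsky-based unconditional Wald coverage to their conditional versions.

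You also note that Stage~1 rejects with probability tending to one only if $\hat q_{1-\alpha_1}=O_p(1)$, which follows from $\hat W\to_p W$. The paper's proof of the corollary leaves this implicit, but it does not change the argument.
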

			
			Theorem~\ref{thm:post_selection_wald} shows that, when $\mu_*>0$, the two-stage procedure behaves like a consistent
			model selector and the usual fixed-model interval for the selected
			working functional is asymptotically valid. The coverage statement
			concerns the model-specific population functional
			$\tau_*^{M_1}$ implied by the selected working model. It does not assert
			coverage of a target that is invariant across all possible selection
			outcomes.
			
			{When $\sqrt n\,\mu_{*,n}\to\delta$, selection remains random at first order and the joint limit in \eqref{eq:joint_post_expansion} in Appendix~\ref{sec:proof_post_selection} yields a selection-normal conditional law. Appendix~\ref{app:selection_normal_boundary} first constructs an oracle interval for a known $\delta$, then forms a least-favourable envelope that accounts for uncertainty about $\delta$, and finally defines an elastic interval that returns to ordinary Wald inference when $\hat V_n$ is sufficiently large. The results are pointwise for the cases $\sqrt n\,\mu_{*,n}\to\delta$ and $\mu_*>0$; they do not claim uniform validity over all intermediate sequences.}
			
			\endgroup
			
			\section{Simulation Studies}\label{sec:sim}
			
			\subsection{Simulation Studies for \texorpdfstring{$M_1$ and $M_2$}{M1 and M2}}\label{sec:sim_set}
			We begin with settings in which the generating mechanism belongs to one of the two candidate families. For binary and continuous second outcomes, we consider $M_1^*$ and $M_2^*$ as the generating mechanisms, sample sizes $n\in\{1000,2000,5000\}$, and 200 Monte Carlo replications. Each dataset is fitted under both candidate models, denoted by working model $M_1$ and  working model $M_2$. We evaluate estimation of $\tau_{\mathrm{SC}}(F)$ in \eqref{eq:standardized_contrast} using Monte Carlo bias, the mean estimated standard error(SE), and coverage probability (CP) of the nominal 95\% confidence interval; Bias and SE are multiplied by 100, and CP is reported as a percentage. Selection accuracy (Acc.) is the proportion of valid Monte Carlo replications
			in which Algorithm~\ref{algo:two-stage} favors the candidate model corresponding 
			to the data-generating mechanism. These definitions are used throughout this section, and computational details are given in Appendix~\ref{sec:detail_EM}.
			
			The two generating mechanisms share the outcome model
			\begin{align*}
				X&\sim N(0,1),\qquad
				Y_1\mid X\sim\operatorname{Bernoulli}\{\operatorname{logit}^{-1}(-0.5+X)\},\\
				\eta(Y_1,X)&=-1+0.8X+1.5Y_1+0.5XY_1,\\
				Y_2\mid Y_1,X&\sim
				\begin{cases}
					\operatorname{Bernoulli}\{\operatorname{logit}^{-1}(\eta(Y_1,X))\},&\text{binary outcome},\\
					N\{\eta(Y_1,X),1\},&\text{continuous outcome},
				\end{cases}
			\end{align*}
			where $\operatorname{logit}^{-1}(u)=\exp(u)/\{1+\exp(u)\}$. They also share
			\[
			P(R_1=1\mid X,Y_1)=\operatorname{logit}^{-1}(-0.8+0.5X+0.3Y_1).
			\]
			The later response mechanism distinguishes the two families:
			\begin{align*}
				M_1^*:\quad
				P(R_2=1\mid R_1,Y_2)
				&=\operatorname{logit}^{-1}(-1+1.2R_1+0.4Y_2),\\
				M_2^*:\quad
				P(R_2=1\mid Y_1,Y_2)
				&=\operatorname{logit}^{-1}(-1+2.5Y_1+0.4Y_2).
			\end{align*}
			These mechanisms correspond to panels (f) and (i), respectively, in 
			Figure~\ref{fig:graph_catalogue}. Table~\ref{tab:sim_ms} summarizes the 
			estimation results under the data-generating mechanisms $M_1^*$ and $M_2^*$. 
			For each data-generating mechanism, we fit both candidate models $M_1$ and 
			$M_2$, allowing us to examine the performance of the estimators under both 
			correct model specification and model misspecification. We also report the 
			performance of the two-stage procedure for selecting between the two candidate 
			models.
			
			\begin{table}[ht]
				\centering
				\caption{Estimation and model-selection performance under $M_1$ and $M_2$.}\label{tab:sim_ms}
				\resizebox{\textwidth}{!}{
					\begin{tabular}{llccccccccccccccc}
						\toprule
						& & \multicolumn{7}{c}{Binary} & & \multicolumn{7}{c}{Continuous} \\
						\cmidrule(lr){3-9} \cmidrule(lr){11-17}
						& & \multicolumn{3}{c}{Working model $M_1$} &\multicolumn{3}{c}{Working model $M_2$} & &
						& \multicolumn{3}{c}{Working model $M_1$} & \multicolumn{3}{c}{Working model $M_2$} &\\
						\cmidrule(lr){3-5} \cmidrule(lr){6-8} \cmidrule(lr){11-13} \cmidrule(lr){14-16}
						& $n$ & Bias & SE & CP & Bias & SE & CP & Acc. & &Bias & SE & CP & Bias & SE & CP & Acc. \\
						\midrule
						\multirow{3}{*}{$M_1^*$}
						& 1000 &  0.58 &  7.14 & 98.00 & $-$7.16 & 6.73 & 78.00 & $96.5\%$ && $-$0.40 & 14.25 & 93.50 &   1.43 & 15.61 & 96.00 & $96.5\%$ \\
						& 2000 &  0.25 &  5.01 & 94.50 & $-$7.07 & 4.76 & 62.00 & $100\%$& &  0.29 & 10.11 & 92.00 &   2.15 & 11.08 & 92.50 & $100\%$ \\
						& 5000 &  0.02 &  3.08 & 93.50 & $-$7.24 & 2.95 & 33.50 & $100\%$ && $-$0.07 &  6.39 & 96.50 &   1.65 & 7.02 & 94.00 & $100\%$ \\
						\midrule
						\multirow{3}{*}{$M_2^*$}
						&  1000 & 12.63 &  4.87 & 25.00 & $-$0.72 & 9.83 & 94.50 & $91.5\%$ && 29.10 & 16.93 & 57.50 & $-$5.63 &  17.75 & 95.00 & $98.5\%$ \\
						& 2000 & 12.71 &  3.43 &  3.50 & $-$0.14 & 6.62 & 93.00 & $100\%$& & 29.71 & 11.94 & 27.00 & $-$5.40 &  12.40 & 96.50 & $100\%$ \\
						& 5000 & 12.64 &  2.15 &  0.00 & $-$0.19 & 4.25 & 95.50 & $100\%$ && 29.36 &  7.49 &  2.00 & $-$4.95 &   7.68 & 91.00 & $100\%$ \\
						\bottomrule
				\end{tabular}}
			\end{table}

			Table~\ref{tab:sim_ms} shows a clear distinction between correct model 
			specification and model misspecification. Under the data-generating mechanism 
			$M_1^*$, the estimator based on the correctly specified working model $M_1$ 
			has scaled biases ranging from $-0.40$ to $0.58$, with coverage probabilities 
			between 92.00\% and 98.00\% for both the binary and continuous outcomes. 
			Similarly, under $M_2^*$, the estimator based on the correctly specified 
			working model $M_2$ exhibits substantially smaller bias than that based on 
			$M_1$, with coverage probabilities generally close to 95\%.
			
			In contrast, fitting the misspecified working model $M_1$ to data generated 
			under $M_2^*$ results in scaled biases of approximately 12.6 for the binary 
			outcome and 29 for the continuous outcome. The corresponding coverage probabilities decline from 
			25.00\% to 0.00\% for the binary outcome and from 57.50\% to 2.00\% for the 
			continuous outcome as $n$ increases. Fitting the misspecified working model 
			$M_2$ under $M_1^*$ is particularly consequential for the binary outcome, for 
			which the coverage probability decreases from 78.00\% to 33.50\%. 
			
			The two-stage selection procedure performs well across all settings. Its 
			selection accuracy ranges from 91.5\% to 98.5\% at $n=1000$ and reaches 100\% 
			in every setting when $n\geq 2000$. These results indicate that the procedure 
			increasingly selects the correctly specified candidate model as the sample 
			size increases, particularly in settings where fitting the wrong missingness 
			model can substantially distort subsequent inference.

			\subsection{Simulation Studies for \texorpdfstring{{\purple $M_{\phi}$}}{M-empty} class}\label{sec:sim_mar}
			We next consider {\purple $M_{\phi}$} to evaluate Stage~1 of Algorithm~\ref{algo:two-stage}. In this setting the two candidate models have the same observed-data density at their population fits, so $\omega_*^2=0$ and the correct decision is to stop at Stage~1 and report indistinguishability. Retaining the continuous outcome model from Section~\ref{sec:sim_set}, we generate the response indicators by
			\begin{align*}
				P(R_1=1\mid X,Y_1)&
				=\text{logit}^{-1}(\gamma_0+\gamma_1X+\gamma_2Y_1)
				=\text{logit}^{-1}(0.8+0.3X+1.2Y_1),\\
				P(R_2=1\mid Y_2)&
				=\text{logit}^{-1}(\delta_0+\delta_3Y_2)
				=\text{logit}^{-1}(0.8+0.3Y_2),
			\end{align*}
			which corresponds to panel (c) of Figure~\ref{fig:graph_catalogue}. We generate 1000 datasets with $n=1000$ and apply the Stage~1 test at nominal levels 0.05, and 0.10. Because rejection is a type~I error here, Figure~\ref{fig:type1} tracks the cumulative rejection rates over the Monte Carlo replications; Theorem~\ref{thm:variance_distribution} predicts that these rates converge to their nominal levels.
			
			\begin{figure}[ht]
				\centering
				\includegraphics[width=0.95\linewidth]{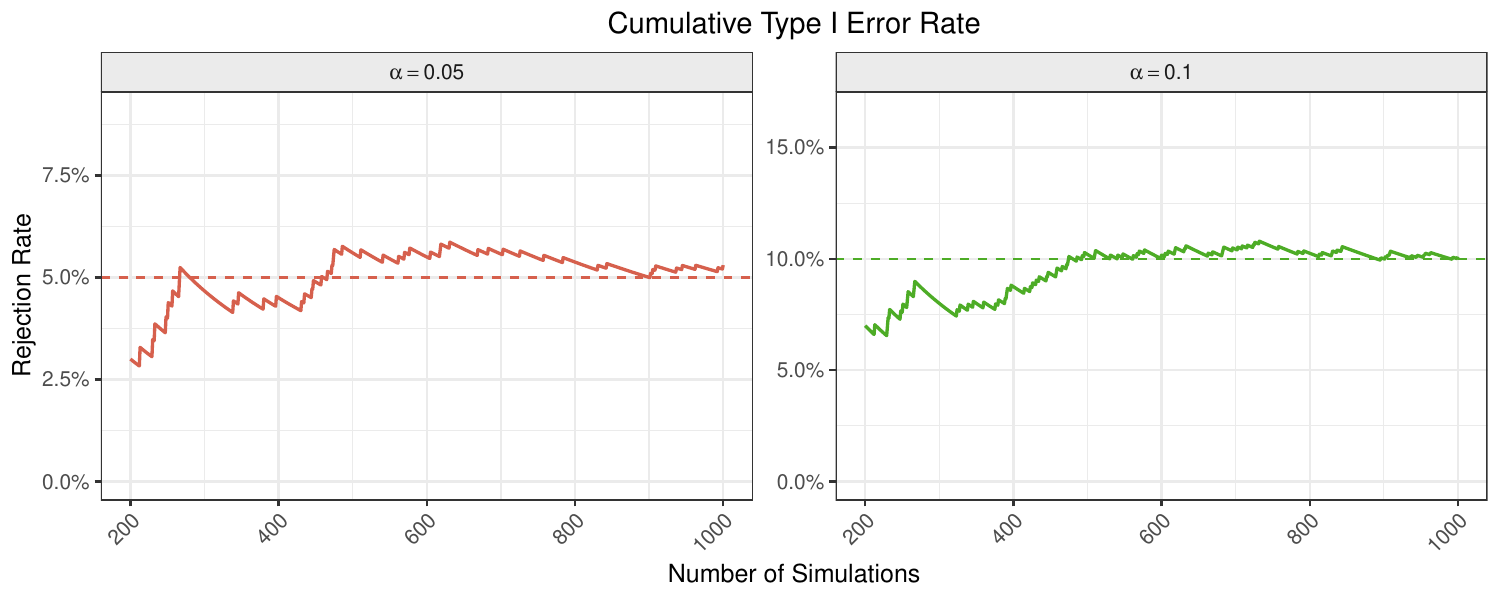}
				\caption{Stage~1 rejection rates at two nominal levels under {\purple $M_{\phi}$}.}
				\label{fig:type1}
			\end{figure}
			
			{\wyy Figure~\ref{fig:type1} shows that, as the number of Monte Carlo replications increases, the cumulative rejection rates gradually stabilize around the corresponding nominal significance levels of $\alpha= 5\%$ and $\alpha= 10\%$, indicating that the Stage~1 test provides satisfactory control of the Type~I error rate. Accordingly, the procedure terminates at Stage~1 without rejecting the null hypothesis of indistinguishability in approximately 95\% and 90\% of the replications, respectively.}
			The close agreement between empirical and nominal rejection rates supports the weighted chi-square calibration in Theorem~\ref{thm:variance_distribution} and confirms that the procedure does not systematically force a Stage~2 ranking when the two fitted observed-data models coincide.
			
			\subsection{Simulation Studies for \texorpdfstring{{\purple $M_{1,2}$}}{M1,2} class}
			Finally, we consider {\purple $M_{1,2}$}, where the generating graph contains both $R_1\to R_2$ and $Y_1\to R_2$ and therefore belongs to neither candidate family. This experiment evaluates the Stage~2 comparison when both fitted models are misspecified: Algorithm~\ref{algo:two-stage} can compare their observed-data likelihoods but cannot recover the generating graph. Retaining the outcome model from Section~\ref{sec:sim_set}, we generate
			\begin{align*}
				P(R_1=1\mid X,Y_1)&
				=\text{logit}^{-1}(-0.8 + 0.5 X +0.3Y_1),\\
				P(R_2=1\mid R_1,Y_1,Y_2)&
				=\text{logit}^{-1}(-1.0 +2.5Y_1+1.2R_1+0.4Y_2),
			\end{align*}
			corresponding to panel (l) of Figure~\ref{fig:graph_catalogue}. {\wyy
				To assess the Monte Carlo stability of this case, we fix $n=1000$ and use 100, 200, and 500 replications. 
				Table~\ref{tab:union_sim} reports the bias(Bias), the mean estimated standard error(SE), and coverage probability (CP) of the nominal 95\% confidence interval for the estimates obtained by fitting both candidate working  models {$M_1$} and {$M_2$} to the simulated data, with Bias and SE scaled by 100.
				It also reports the empirical Stage~2 non-rejection frequency in the last row, corresponding to $|\hat V_n|\le q_2$. 
			}

			\begin{table}[t]
				\centering
				\caption{Functional estimation and Stage~2 no-preference rates under {\purple $M_{1,2}$}.}
				\label{tab:union_sim}
				\small
				\begin{tabular}{llccccccc}
					\toprule
					& & \multicolumn{3}{c}{Working model $M_1$} & \multicolumn{3}{c}{Working model $M_2$} &  \\
					\cmidrule(lr){3-5}\cmidrule(lr){6-8}
					$Y_2$ & \shortstack{Monte Carlo\\replications} & Bias & SE & CP & Bias & SE & CP & \shortstack{Stage-2\\no preference} \\
					\midrule
					\multirow{3}{*}{Binary}
					&100 & 8.15 & 5.28 & 67.0\% & 0.85 & 7.07 & 92.0\% & 95.96\%\\
					&200 & 8.45 & 5.21 & 61.5\% & 0.97 & 7.12 & 93.0\% & 93.97\%\\
					&500 & 8.47 & 5.23 & 63.6\% & 0.54 & 7.14 & 92.4\% & 92.99\%\\
					\midrule
					\multirow{3}{*}{Continuous}
					&100 &13.27&13.64&87.0\%&-2.26&15.15&94.0\%&94.00\%\\
					&200 &14.35&13.54&82.5\%&-0.65&15.09&94.0\%&93.00\%\\
					&500 &15.73&13.57&79.8\%& 0.18&15.07&94.8\%&95.00\%\\
					\bottomrule
				\end{tabular}
			\end{table}
			
			Table~\ref{tab:union_sim} shows that the Monte Carlo results are generally 
			stable as the number of replications increases. For the binary outcome, the 
			scaled bias of the estimator based on working model $M_1$ remains between 
			8.15 and 8.47, with coverage probabilities ranging from 61.5\% to 67.0\%. 
			In comparison, the estimator based on working model $M_2$ has a scaled bias 
			below 1 and coverage probabilities between 92.0\% and 93.0\%. For the 
			continuous outcome, the scaled bias under $M_1$ ranges from 13.27 to 15.73, 
			with coverage probabilities between 79.8\% and 87.0\%, whereas the estimator 
			under $M_2$ has a much smaller scaled bias and coverage probabilities between 
			94.0\% and 94.8\%.
			
			Despite these clear differences in the estimation of the target functional, 
			the Stage~2 statistic falls in the no-preference region in approximately 
			93--96\% of the replications. Because the data-generating mechanism 
			$M_{1,2}$ belongs to neither candidate family, this frequency should not be 
			interpreted as a correct-selection rate. Rather, it indicates that, at the 
			chosen threshold, the observed data generally provide insufficient evidence 
			for Stage~2 to favor one candidate model over the other. These results also 
			illustrate that, when both candidate models are misspecified, their comparison 
			in terms of observed-data KLIC need not reflect their relative performance in 
			estimating a full-data functional. In particular, although working model 
			$M_2$ yields substantially better bias and coverage in this experiment, the 
			Stage~2 procedure frequently expresses no preference between $M_1$ and $M_2$.
			

			\section{Empirical Application}\label{sec:app}
			
			We illustrate the proposed model selection procedure using the Job Corps dataset from the National Job Corps Study (NJCS), a large-scale nationwide evaluation carried out in the mid-1990s across all Job Corps centers in the United States \citep{burghardt2001does}. From approximately 80000 eligible applicants, over 15000 were selected through stratified sampling and randomly assigned to the treatment or control group. The final analysis sample consists of 14125 young adults who completed follow-up interviews through 48 months after randomization.
			
			In this application, subjects were randomly assigned to either the experimental group ($T=1$), where they were offered the opportunity to enroll in the Job Corps program, or the control group ($T=0$). The intermediate outcome $Y_1$ indicates whether an educational or vocational qualification (e.g., a GED or vocational certificate) was obtained by the 30-month follow-up. The final outcome $Y_2$ is weekly earnings measured in the fourth year after randomization. Thirteen baseline covariates are included, covering demographic characteristics (including age, gender, and race/ethnicity), educational attainment, baseline earnings, parental status, and prior arrest history.
			
			{Both outcomes are measured through the multi-wave follow-up, so their observation depends on continued survey participation and may involve related nonresponse processes. There are 1285 individuals missing $Y_1$ only, 1563 missing $Y_2$ only, and 1666 missing both outcomes; thus the total numbers missing $Y_1$ and $Y_2$ are 2951 and 3229, respectively.} Prior analyses of the Job Corps data assumed truncation by death \citep{zhang2009likelihood}, MNAR \citep{huber2012identification}, or conditional MAR \citep{2019Qin}, motivating a re-examination under a nonignorable sequential missingness framework with principled model selection.

			
Table~\ref{tab:real_data_estimation} reports the estimates of the target 
functionals and the maximized observed-data log-likelihood values under the 
two candidate missingness models $M_1$ and $M_2$ in Figure~\ref{fig:graph_catalogue}. 
The columns ``Est.'', ``Var.'', and ``95\% CI'' denote the point estimates, 
estimated variances, and 95\% confidence intervals of the corresponding 
functionals, respectively, while ``LogLik'' denotes the maximized observed-data 
log-likelihood. The maximized observed-data log-likelihood is larger under 
$M_1$ than under $M_2$, suggesting a better fit of $M_1$ among the two 
candidate models. In addition, the two-stage model selection procedure 
described in Section~\ref{subsec:ms_procedure} also favors the missingness model $M_1$. Therefore, 
we use $M_1$ as the selected missingness model for subsequent functional 
interpretation.  

            \begin{table}[htbp]
				\centering
		\caption{Model comparison and functional estimation for the Job Corps data under the two sequential missingness models.}
				\label{tab:real_data_estimation}
				\small
				\begin{tabular}{lcccccccc}
					\toprule
					& \multicolumn{4}{c}{Working model $M_1$} & \multicolumn{4}{c}{Working model $M_2$} \\
					\cmidrule(lr){2-5} \cmidrule(lr){6-9}
					Effect & Est. & Var. & 95\% CI & LogLik& Est. & Var. & 95\% CI & LogLik \\
					\midrule
					{$\tau_{\mathrm{NIE}}$} &  $9.25$ & $0.75$ & $[7.56,\,10.95]$&\multirow{3}{*}{{$-92445.24$}} & $7.81$ & $0.67$ & $[6.21,\,9.41]$  &\multirow{3}{*}{{$-93287.55$}} \\
					{$\tau_{\mathrm{NDE}}$} & $6.28$ & $13.38$ & $[-0.88,\,13.45]$& & $7.00$
					& $13.25$ & $[-0.13,\,14.14]$ &\\
					$\hat\tau_{\mathrm{SC}}$ &  $54.52$& $19.51$ & $[45.86,\,63.17]$&  & $53.48$& $24.42$ & $[43.80,\,63.17]$& \\
					\bottomrule
				\end{tabular}
			\end{table}
            
Table~\ref{tab:real_data_estimation} also reports the natural indirect effect (NIE) and natural direct effect (NDE) under the selected model $M_1$ \citep{pearl2014interpretation,Zuo2025}. Under the selected missingness mechanism, we recover the full-data distribution and estimate the causal effects of the intermediate qualification outcome on fourth-year weekly earnings. The detailed definitions and computational procedures for these quantities are provided in Appendix~\ref{subsec:mediation_functionals}. The estimated NIE is 9.25 (Var. = 0.75; 95\% CI: [7.56, 10.95]), indicating that the indirect effect through the qualification outcome is positive and contributes substantially to the causal effect on fourth-year weekly earnings. The estimated NDE is 6.28 (Var. = 13.38; 95\% CI: [-0.88, 13.45]), representing the direct effect beyond the qualification pathway, although this effect is estimated with greater uncertainty. The plug-in estimate of $\tau_{\mathrm{SC}}(F)$ in \eqref{eq:standardized_contrast} is 54.52 (Var. = 19.51; 95\% CI: [45.86, 63.17]), indicating a substantial causal contrast in fourth-year weekly earnings associated with qualification status under the selected   model. 

			\section{Discussion}
			\label{sec:discussion}
			
			This paper develops a framework for comparing prespecified missingness models when a first outcome and a second outcome are measured sequentially and may be missing not at random. We organize the relevant graph classes, establish identification of full-data distribution components under rank or completeness conditions, and propose a two-stage Vuong-type procedure that first tests observed-data distinguishability and then compares Kullback--Leibler divergences. Substantive summaries are treated as functionals of the recovered distribution rather than as separate primitive targets. We also study inference after a missingness model has been selected: the joint expansion of the functional estimator and the likelihood-ratio statistic leads to selective intervals that account for the selection event. Simulation studies and the Job Corps application illustrate how identification, model comparison, and post-selection inference work together in practice.

            An important point is that the proposed procedure compares candidate
missingness models through their induced observed-data distributions. It
does not, in general, establish that the selected graph is the unique
true missingness mechanism, because distinct latent missingness
mechanisms may lead to the same observed-data distribution. Rather, the
procedure provides a principled likelihood-based comparison among
prespecified candidate models. In particular, when the true mechanism
lies outside the candidate set, the selected model should be interpreted
as the candidate model with better observed-data fit rather than as the
true underlying missingness structure.

Several directions for future research remain. First, extending the proposed framework to more general longitudinal or time-series settings with repeated outcomes and response indicators is an important direction. Second, relaxing the assumption that some covariates do not directly affect the missingness indicators would require new identification results and estimation methods. These extensions are beyond the scope of this paper and are left for future work.
\section*{Data Availability} 
The data are publicly available in the
\href{https://github.com/YingyingWang2026/MSM-SON}{GitHub repository}.

\section*{Supplementary Material}
In the Appendix, we provide detailed identification steps and theorem proofs for the two primary nonignorable missingness models discussed in the main text and present other nonidentified and partially identified missingness mechanisms under similar settings. We then provide proofs for the asymptotic distributions of the test statistics used in the proposed model selection procedure, followed by additional details for the data application.

		\section*{Acknowledgments} This work was supported by the National Natural Science Foundation of China (12401378), the Young Elite Scientists Sponsorship Program of the Beijing High Innovation Plan (NO.20250864), the Beijing Key Laboratory of Applied Statistics and Digital Regulation, the BTBU Digital Business Platform Project by BMEC, the BTBU Research Foundation for Youth Scholars (BRFYS2025) and Stichting Kinderen Kankervrij (KiKa) under the research project 458. Dr. Shanshan Luo would like to thank the Isaac Newton Institute for Mathematical Sciences, Cambridge, for support and hospitality during the programme Causal inference: From theory to practice and back again, where work on this paper was undertaken. This work was supported by EPSRC grant EP/Z000580/1.

 \bibliography{mybib}
 \newpage
\begin{appendix}

\begin{center}
    \bf \huge Appendix
\end{center}
In the Appendix, we provide detailed identification steps and theorem proofs for the two primary nonignorable missingness models discussed in the main text and present other nonidentified and partially identified missingness mechanisms under similar settings. We then provide proofs for the asymptotic distributions of the test statistics used in the proposed model selection procedure, followed by additional details for the data application.

\section{Details of Identification }\label{sec:detail_inden}
			
			This section supplies the identification details for the two candidate
			families $M_1$ and $M_2$ used in the main likelihood comparison. A
			separate scope discussion of graphs outside this candidate set is given
			in Appendix~\ref{sec:outside_candidate_graphs}.
			
			{Throughout this section, we use the standard support regularity for missing-data identification: every response probability that appears in a denominator is positive on the relevant support, and the observed cells used in the displayed linear systems have positive probability. These conditions justify the odds ratios and Bayes inversions below; they are kept here because they are technical rather than substantive restrictions on the candidate graphs.}
			
			\begin{lemma}[Sufficient uniqueness condition via completeness]
				\label{lem:unique_id}
				Let $\nu$ denote the Lebesgue measure if $Y_2$ is continuous and 
				the counting measure if $Y_2$ is discrete. Suppose the integral 
				equation
				\begin{align}
					\int \xi(Y_2)\, f(Y_2 \mid Z)\, d\nu(Y_2) = h(Z) 
					\quad \text{for almost every } Z,
					\label{eq:generic_integral}
				\end{align}
				where $h(Z)$ and $f(Y_2\mid Z)$ are identifiable from the observed 
				data. Then \eqref{eq:generic_integral} admits at most one integrable solution 
				$\xi(Y_2)$ on the support of $Y_2$ if the conditional 
				distribution $f(Y_2\mid Z)$ is complete, i.e., for any measurable 
				function $g$ with $E|g(Y_2)|<\infty$,
				\begin{align}
					\int g(Y_2)\, f(Y_2\mid Z)\, d\nu(Y_2) = 0 
					\quad \text{for almost every }Z
					\implies g(Y_2) = 0
					\quad\text{almost surely}
				\end{align}
				In the discrete case, uniqueness of an unrestricted solution to the
				corresponding linear system is equivalent to the full-column-rank
				condition $\mathrm{Rank}(\Phi)=K$, where $K$ is the number of
				categories of $Y_2$. Completeness is used here as a sufficient
				identification condition for the response-odds functions.
			\end{lemma}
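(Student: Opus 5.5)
The argument is linearity plus the definition of completeness. Suppose $\xi_a$ and $\xi_b$ are two integrable solutions of \eqref{eq:generic_integral}, so that
\[
\int \xi_a(Y_2)\,f(Y_2\mid Z)\,d\nu(Y_2)=h(Z)=\int \xi_b(Y_2)\,f(Y_2\mid Z)\,d\nu(Y_2)
\]
for almost every $Z$. Set $g=\xi_a-\xi_b$. Then $g$ is measurable, and $E|g(Y_2)|\le E|\xi_a(Y_2)|+E|\xi_b(Y_2)|<\infty$, so it belongs to the class of test functions in the completeness definition. Subtracting the two equations and using linearity of the integral gives $\int g(Y_2) f(Y_2\mid Z)\,d\nu(Y_2)=0$ for almost every $Z$. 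Here the two ``almost every'' sets can be intersected, since a union of two null sets is null. Completeness then yields $g=0$ almost surely, that is, $\xi_a=\xi_b$ almost surely on the support of $Y_2$. So at most one solution exists, and it is unique up to null sets.

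For the discrete case, let $Y_2\in\{y_1,\dots,y_K\}$ and $Z\in\{z_1,\dots,z_J\}$. Define $\Phi$ as the $J\times K$ matrix with entries $f(y_k\mid z_j)$; by Bayes' rule one may equivalently use the joint cell probabilities, as in $\Theta^{r_1}_{(x,y)}$, after a positive diagonal rescaling of the rows, which does not change the rank. The integral equation becomes the linear system $\Phi\xi=h$ with $\xi=(\xi(y_1),\dots,\xi(y_K))^{\T}$. Any two solutions differ by an element of $\ker\Phi$, so the solution is unique whenever one exists if and only if $\ker\Phi=\{0\}$, that is, $\mathrm{Rank}(\Phi)=K$. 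This rank condition forces $J\ge K$.

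This condition is exactly completeness specialized to finite supports. In that setting $g$ is a vector in $\mathbb{R}^K$, and ``$\Phi g=0\Rightarrow g=0$'' is the trivial-kernel statement. Points $y_k$ of zero probability should be excluded from the support so that ``almost surely'' matches exact equality on every coordinate.

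The only delicate step is being careful with the quantifiers. The null sets in $Z$ must be handled correctly. Integrability must be part of the hypothesis on candidate solutions, which is why the statement says ``integrable solution''; without it, the difference might fall outside the completeness class. Uniqueness holds only modulo $\nu$-null (equivalently $P$-null on the support) sets. The lemma asserts no existence, and that is appropriate here: in applications such as \eqref{eq:inter1}, the true odds function $\xi_{1,r_1}$ already solves the equation by construction, so uniqueness alone gives identification.
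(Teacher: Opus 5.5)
Your proof is correct and follows the same route as the paper: take the difference of two solutions, use linearity and completeness to force it to vanish almost surely, and in the discrete case identify uniqueness with a trivial null space of $\Phi$, i.e.\ $\mathrm{Rank}(\Phi)=K$. The additional bookkeeping you supply is left implicit in the paper but is consistent with it; this includes integrability of the difference, intersecting the null sets in $Z$, and the positive row rescaling that links $\Phi$ to $\Theta^{r_1}_{(x,y)}$.
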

			
			\begin{proof}
				\emph{(Sufficiency.)} Suppose $\xi^{(1)}(Y_2)$ and $\xi^{(2)}(Y_2)$ 
				both satisfy \eqref{eq:generic_integral}. Their difference 
				$g(Y_2) = \xi^{(1)}(Y_2) - \xi^{(2)}(Y_2)$ satisfies
				\begin{align*}
					\int g(Y_2)\, f(Y_2\mid Z)\, d\nu(Y_2) = 0 
					\quad \text{for almost every }Z.
				\end{align*}
				By completeness, $g(Y_2) = 0$ almost surely, so 
				$\xi^{(1)}(Y_2) = \xi^{(2)}(Y_2)$ on the relevant support.
				
				\emph{(Discrete case.)} When $Y_2$ takes $K$ values and $Z$ 
				takes $J$ values, \eqref{eq:generic_integral} is the linear 
				system $\Phi\,\xi = h$ with $\Phi\in\mathbb{R}^{J\times K}$.
				The null space of $\Phi$ is trivial exactly when
				$\mathrm{Rank}(\Phi)=K$, which proves uniqueness of the unrestricted
				linear-system solution.
			\end{proof}
			
			\subsection{Identification of Model $M_1$}\label{subsec:detail_M1}

			Following the decomposition  \eqref{eq:joint-F-M1}, identification proceeds in four steps.
			
			\textbf{Step 1}: Identify $P(R_2 \mid  R_1, Y_2)$.
			
			Under Assumption~\ref{assu:M1_1}, {define, for $r_1\in\{0,1\}$,}
			\begin{align*}
				{\xi_{1,r_1}(Y_2)=\frac{P(R_2=0\mid X,R_1=r_1,Y_2)}{P(R_2=1\mid X,R_1=r_1,Y_2)}=\frac{P(R_2=0\mid R_1=r_1,Y_2)}{P(R_2=1\mid R_1=r_1,Y_2)}},
			\end{align*}
			where the second equality follows from Assumption~\ref{assu:M1_1}. 
			Let $\nu$ denote the Lebesgue or counting measure according 
			to whether $Y_2$ is continuous or discrete.
			Then {for each $r_1\in\{0,1\}$} the following integral equation holds
			\begin{align*}
				{\cfrac{f(X,R_1=r_1,R_2=0)}{f(X,R_1=r_1,R_2=1)}=\int \xi_{1,r_1}(Y_2)f(Y_2 \mid X,R_1=r_1,R_2=1)d\nu(Y_2)},
			\end{align*}
			where the ratio on the left and the integral kernel on the right are identifiable from the observed data. {Lemma~\ref{lem:unique_id} and the stratum-specific completeness condition in Theorem~\ref{thm:complete1} imply that each $\xi_{1,r_1}(Y_2)$ is unique. Positivity then recovers $P(R_2=1\mid R_1=r_1,Y_2)=\{1+\xi_{1,r_1}(Y_2)\}^{-1}$.}

			
			
			
			
			
			\textbf{Step 2:} Identify $ f(Y_2 \mid X, Y_1) $.
			
			By $ R_1 \perp\!\!\!\perp Y_2 \mid (X, Y_1) $ (Assumption \ref{assu:indep}) and Bayes' rule,
			$$f(Y_2 \mid X, Y_1) =f(Y_2 \mid X,Y_1,R_1=1) = \frac{f(Y_2,R_2=1 \mid X,Y_1,R_1=1)}{P(R_2=1 \mid X,Y_1,R_1=1,Y_2)},$$
			where the numerator $ f(Y_2, R_2 = 1 \mid X, Y_1, R_1 = 1) $ can be identified from the observed data. By Assumption~\ref{assu:M1_1} , $ P(R_2 = 1 \mid X, Y_1, R_1 = 1, Y_2)= P(R_2 = 1 \mid  R_1 = 1, Y_2)$, which is identified in Step~1. Hence, $ f(Y_2 \mid X, Y_1) $ is identified.

			\textbf{Step 3:} Identify $f(Y_2\mid X)$.
			
			
			
			By the law of total probability,
			$$f(Y_2\mid X) = \sum_{r_1=0}^{1} f(Y_2, R_1=r_1 \mid X) = \sum_{r_1=0}^{1} \frac{f(Y_2,R_2=1,R_1=r_1\mid X)}{P(R_2=1 \mid X,R_1=r_1,Y_2)},$$
			{where the numerator $f(Y_2,R_2=1,R_1=r_1\mid X)$ is observed and the denominator is identified in Step~1. Hence $f(Y_2\mid X)$ is identified.}

			\textbf{Step 4:} Identify $f(Y_1\mid X)$ and $f(Y_1,R_1\mid X)$.
			
			As $Y_1$ is binary, by the law of total probability, we have
			\begin{align*}
				f(Y_2 \mid X) = f(Y_2 \mid Y_1=1,X) P(Y_1=1 \mid X) + f(Y_2 \mid Y_1=0,X) P(Y_1=0 \mid X),
			\end{align*} 
			where $f(Y_2\mid X)$ and $f(Y_2 \mid X,Y_1) $ are identified in Step~3 and 2, respectively.
			{By the mixture-separation condition in Theorem~\ref{thm:complete1}, the equation has a unique solution $P(Y_1=1\mid X)$ for almost every $X$, from which $P(Y_1\mid X)$ is identified.}
			Since $P(Y_1,R_1=1\mid X)$ is observed, the identity
			$P(Y_1,R_1=0\mid X) = P(Y_1\mid X) - P(Y_1,R_1=1\mid X)$
			identifies $f(Y_1,R_1\mid X)$, completing identification 
			of~\eqref{eq:joint-F-M1}.
			\subsection{Identification of Model $M_2$}\label{subsec:detail_M2}

			\textbf{Step 1}: Identify $P(R_1\mid Y_1,X)$ 
			and $P(Y_1,R_1\mid X)$.
			
			{By $R_1 \perp\!\!\!\perp R_2 \mid (X,Y_1)$ in Assumption~\ref{assu:M2}(i), define
				$$
				\xi_1(x,y_1)
				=\frac{P(R_1=0\mid X=x,Y_1=y_1,R_2=r_2)}
				{P(R_1=1\mid X=x,Y_1=y_1,R_2=r_2)}
				=\frac{P(R_1=0\mid X=x,Y_1=y_1)}
				{P(R_1=1\mid X=x,Y_1=y_1)},
				$$
				which does not depend on $r_2$. For each fixed $x$ and $r_2\in\{0,1\}$,
				\begin{align}\label{eq:inter2}
					f(R_1=0,R_2=r_2,X=x)
					=\sum_{y_1=0}^{1}f(X=x,Y_1=y_1,R_1=1,R_2=r_2)\,\xi_1(x,y_1).
				\end{align}
				Writing the two equations for $r_2=0,1$ in matrix form gives
				$$
				\begin{pmatrix}
					f(R_1=0,R_2=0,X=x)\\
					f(R_1=0,R_2=1,X=x)
				\end{pmatrix}
				=\Gamma(x)
				\begin{pmatrix}\xi_1(x,0)\\ \xi_1(x,1)\end{pmatrix}.
				$$
				The rank condition in Theorem~\ref{thm:complete2} therefore identifies both response odds at almost every $x$, and hence identifies $P(R_1\mid X,Y_1)$. Bayes' rule then gives}
			$$
			P(Y_1 \mid X)
			= \frac{P(Y_1, R_1 = 1 \mid X)}{P(R_1 = 1 \mid X, Y_1)},
			$$
			where the numerator is observed, and the denominator is identified above. 
			Therefore, $P(Y_1 \mid X)$ is identifiable. Using
			$$
			P(Y_1, R_1 = 0 \mid X)
			= P(Y_1 \mid X) - P(Y_1, R_1 = 1 \mid X)
			$$
			identifies $P(Y_1, R_1 \mid X)$.

			\textbf{Step 2}: Identify $P(R_2 \mid  Y_1, Y_2)$.
			
			{Assumptions~\ref{assu:M2}(ii)--(iii) imply that, for each $y_1\in\{0,1\}$,
				\[
				\xi_{2,y_1}(Y_2)
				= \frac{P(R_2=0\mid X,Y_1=y_1,R_1=1,Y_2)}
				{P(R_2=1\mid X,Y_1=y_1,R_1=1,Y_2)}
				= \frac{P(R_2=0\mid Y_1=y_1,Y_2)}
				{P(R_2=1\mid Y_1=y_1,Y_2)}.
				\]}
			
			By the law of total probability, we have
			\begin{align}\label{eq:inter3}
				f(R_2 = 0, X, Y_1 = y_1, R_1 = 1)
				= {\int f(R_2 = 1, X, Y_1 = y_1, R_1 = 1, Y_2)\cdot \xi_{2,y_1}(Y_2)\, d\nu(Y_2)},
			\end{align}
			and thus
			\begin{align*}
				\cfrac{f(R_2 = 0, X, Y_1 = y_1, R_1 = 1)}{f(R_2 = 1, X, Y_1 = y_1, R_1 = 1)}
				= {\int f(Y_2 \mid R_2 = 1, X, Y_1 = y_1, R_1 = 1)\cdot \xi_{2,y_1}(Y_2)\, d\nu(Y_2)},
			\end{align*}
			where the left-hand side and the kernel function are identifiable from the observed data. 
			{By Lemma~\ref{lem:unique_id} and the completeness condition in Theorem~\ref{thm:complete2}, equation~\eqref{eq:inter3} has a unique solution separately for each $y_1$. Thus $\xi_{2,y_1}(Y_2)$ and $P(R_2\mid Y_1,Y_2)$ are identified.}

			\textbf{Step 3:} Identify $f(Y_2 \mid Y_1, X). $
			
			{Under Assumption~\ref{assu:indep} and by Bayes' rule,}
			
			$$f(Y_2 \mid  Y_1, X)=
			f(Y_2 \mid R_1=1, Y_1, X) = \frac{f(Y_2, R_2=1 \mid Y_1, R_1=1, X)}{P(R_2=1 \mid Y_1, R_1=1, Y_2, X)},
			$$
			where the numerator is observed, and the denominator is identified in Step 2. Hence, $ f(Y_2 \mid Y_1, X) $ is identifiable.
			
			Since $f(X)$ is directly observable, each factor of the joint distribution \eqref{eq:joint_G_M2} is identifiable, completing nonparametric identification of the joint distribution.

				\section{Likelihood-Based Estimation}
				\label{sec:likelihood-estimation}
				
				The identification results above justify likelihood-based estimation under either missingness model. In practice, we specify parametric models for $f(Y_1,R_1\mid X)$, $f(Y_2\mid Y_1,X)$, and the second-outcome response probability, which is $P(R_2\mid R_1,Y_2)$ under model $M_1$ and $P(R_2\mid Y_1,Y_2)$ under model $M_2$. Let $\theta^{M_1}$ and $\theta^{M_2}$ collect the corresponding parameters. For $M\in\{M_1,M_2\}$, the observed-data likelihood is obtained by integrating the complete-data likelihood over the missing components:
				\begin{align}
					L_{\mathrm{obs}}(\theta^M)
					=\prod_{i=1}^n \int\!\!\int
					f_M(Y_{1i},Y_{2i},R_{1i},R_{2i}\mid X_i;\theta^M)
					dY_{1i}^{\mathrm{mis}}dY_{2i}^{\mathrm{mis}},
					\label{eq:obs_lik_main}
				\end{align}
				where the integral is degenerate for variables that are observed. The
				maximum likelihood estimator is
				\begin{align}
					\hat{\theta}^M=\arg\max_{\theta^M}\ell_{\mathrm{obs}}(\theta^M),
					\qquad
					\ell_{\mathrm{obs}}(\theta^M)=\log L_{\mathrm{obs}}(\theta^M),
					\label{eq:mle_main}
				\end{align}
				which is computed using the EM procedures in
				Appendix~\ref{sec:detail_EM}.
				
				For a functional of the form \eqref{eq:general_functional}, let
				$h_\theta(x)$ denote $h_F(x)$ evaluated at the full-data law indexed by
				the working-model parameter $\theta$. Its plug-in estimator and
				model-specific population target are
				\begin{align}
					\hat{\tau}^M
					&=\frac1n\sum_{i=1}^n h_{\hat\theta^M}(X_i),
					\label{eq:functional_main}\\
					\tau_*^M
					&=E_0\{h_{\theta_*^M}(X)\},
					\label{eq:functional_target}
				\end{align}
				where $\theta_*^M$ is the pseudo-true parameter under working model $M$.
				For the standardized contrast $\tau_{\mathrm{SC}}(F)$ already defined in
				\eqref{eq:standardized_contrast}, the corresponding plug-in estimator is
				\begin{align}
					\hat{\tau}_{\mathrm{SC}}^M
					=\frac{1}{n}\sum_{i=1}^n
					\left[
					E\{Y_2\mid Y_1=1,X_i;\hat{\theta}^M\}
					-E\{Y_2\mid Y_1=0,X_i;\hat{\theta}^M\}
					\right].
					\label{eq:g_formula_main}
				\end{align}
				Equation~\eqref{eq:g_formula_main} is the model-based plug-in estimator
				associated with the population parameter in
				\eqref{eq:standardized_contrast}. Under correct specification of $M$, its
				probability limit is $\tau_{\mathrm{SC}}(F)$. Under misspecification, the generic target
				$\tau_*^M$ in \eqref{eq:functional_target} records the corresponding
				model-specific probability limit. This plug-in estimator is used in both
				the simulation studies and the empirical application.
				
				For the variance calculation, let
				\[
				d_M=E_0\{\nabla_\theta h_{\theta_*^M}(X)\},
				\qquad
				s_i^M=\nabla_\theta\log f_M(O_i;\theta_*^M),
				\]
				and define the influence function
				\begin{equation}
					\psi_i^M
					=h_{\theta_*^M}(X_i)-\tau_*^M
					-d_M^\T A_M^{-1}s_i^M.
					\label{eq:functional_if}
				\end{equation}
				
				The following theorem establishes the asymptotic normality of the plug-in functional estimator.
				\begin{theorem}[Asymptotic Normality of the Plug-in Functional]
					\label{thm:ate_normality}
					Let $M \in \{M_1, M_2\}$ be a working model. Suppose the regularity conditions for the observed-data MLE in Theorem 5.41 of \citet{van2000asymptotic} hold, $h_\theta(x)$ is continuously differentiable in a neighborhood of $\theta_*^M$, and the terms in \eqref{eq:functional_if} are square integrable. 
					Define
						\begin{align}
							V_M=E_0\{(\psi_i^M)^2\}.
							\label{eq:ate_variance}
						\end{align}.
						
						Thus $V_M$ includes both estimation of the model parameter and empirical estimation of the outer expectation over $X$. 
					If $V_M>0$, then
					\begin{align}
						\sqrt{n}\left(\hat{\tau}^M - \tau_*^M\right) \xrightarrow{d} N(0, V_M).
						\label{eq:ate_clt}
					\end{align}
					If model $M$ is correctly specified, $\tau_*^M=\tau(F)$.
				\end{theorem}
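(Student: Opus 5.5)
The plan is the standard M-estimation linearization: split the estimator into an empirical-process term and a parameter-estimation term. Write
\[
\hat\tau^M-\tau_*^M=\frac1n\sum_{i=1}^n\bigl\{h_{\theta_*^M}(X_i)-\tau_*^M\bigr\}+\frac1n\sum_{i=1}^n\bigl\{h_{\hat\theta^M}(X_i)-h_{\theta_*^M}(X_i)\bigr\}.
\]
The first term is an i.i.d. average with mean zero, by the definition of $\tau_*^M$ in \eqref{eq:functional_target}. We handle it directly with the central limit theorem, using the square integrability assumed in the statement.

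For the second term, we first invoke Theorem 5.41 of \citet{van2000asymptotic} for the observed-data MLE under possible misspecification. This gives $\hat\theta^M\to_p\theta_*^M$ together with the Bahadur representation
\[
\sqrt n(\hat\theta^M-\theta_*^M)=-A_M^{-1}\frac1{\sqrt n}\sum_{i=1}^n s_i^M+o_p(1).
\]
Here $A_M$ is the expected Hessian at $\theta_*^M$ and $E_0(s_i^M)=0$ by the first-order condition defining the pseudo-true parameter in \eqref{eq:pseudo_true_parameters}. Next, a mean-value expansion of $h_\theta(X_i)$ around $\theta_*^M$ gives
\[
\frac1n\sum_{i=1}^n\{h_{\hat\theta^M}(X_i)-h_{\theta_*^M}(X_i)\}=\Bigl\{\frac1n\sum_{i=1}^n\nabla_\theta h_{\tilde\theta}(X_i)\Bigr\}^\T(\hat\theta^M-\theta_*^M),
\]
with $\tilde\theta$ between $\hat\theta^M$ and $\theta_*^M$.

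The main obstacle is showing that the averaged gradient converges in probability to $d_M$. Continuous differentiability alone does not give uniform control in $x$. I would add, or read into the regularity conditions, a local dominance condition: $\sup_{\|\theta-\theta_*^M\|\le\epsilon}\|\nabla_\theta h_\theta(X)\|$ is integrable. A uniform law of large numbers over a compact neighborhood (Lemma 2.4 of Newey--McFadden type), combined with consistency of $\hat\theta^M$, then yields $n^{-1}\sum_i\nabla_\theta h_{\tilde\theta}(X_i)\to_p d_M$. Since $\sqrt n(\hat\theta^M-\theta_*^M)=O_p(1)$, the second term equals $-d_M^\T A_M^{-1}n^{-1}\sum_i s_i^M+o_p(n^{-1/2})$. (The sign convention follows the one used in \eqref{eq:functional_if}.)

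Combining the two terms, $\sqrt n(\hat\tau^M-\tau_*^M)=n^{-1/2}\sum_i\psi_i^M+o_p(1)$, where the $\psi_i^M$ are i.i.d. with mean zero and variance $V_M=E_0\{(\psi_i^M)^2\}>0$. The Lindeberg--Lévy CLT and Slutsky's theorem then give \eqref{eq:ate_clt}. Finally, under correct specification the model contains the true law $f_0$. Identification (Theorems~\ref{thm:complete1}--\ref{thm:complete2}) and the KL inequality then force the pseudo-true parameter to index the true full-data law, so $h_{\theta_*^M}=h_F$ and $\tau_*^M=\tau(F)$.
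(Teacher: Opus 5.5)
Your proposal is correct and follows the same route as the paper. The paper's proof is a compressed version of your argument: a first-order expansion of $\hat\tau^M$ around $\theta_*^M$, combined with the MLE linearization, gives $\sqrt n(\hat\tau^M-\tau_*^M)=n^{-1/2}\sum_i\psi_i^M+o_p(1)$, and the central limit theorem finishes. Your local-dominance condition on $\nabla_\theta h_\theta$ makes explicit what the paper's appeal to ``the law of large numbers for the derivative'' tacitly requires, and your KL/identification argument supplies the correct-specification clause, which the paper only asserts; note that consistency of $\hat\theta^M$ is a hypothesis of Theorem 5.41, not one of its conclusions.
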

				
				A feasible variance estimator is obtained from the estimated
				influence values
				\[
				\hat\psi_i^M
				=h_{\hat\theta^M}(X_i)-\hat\tau^M
				-\hat d_M^\T\hat A_M^{-1}\hat s_i^M,
				\qquad
				\hat s_i^M=\nabla_\theta\log f_M(O_i;\hat\theta^M),
				\qquad
				\hat d_M=\frac1n\sum_{i=1}^n
				\nabla_\theta h_{\hat\theta^M}(X_i),
				\]
				through
				\begin{equation}
					\hat V_M
					=\frac1n\sum_{i=1}^n
					\left(\hat\psi_i^M-\overline{\hat\psi}^{\,M}\right)^2,
					\qquad
					\overline{\hat\psi}^{\,M}=\frac1n\sum_{i=1}^n\hat\psi_i^M.
					\label{eq:ate_variance_estimator}
				\end{equation}
				
				\subsection{Mediation functionals for the Job Corps application}
				\label{subsec:mediation_functionals}
				
				{Write $X=(T,C)$, where $T$ is randomized assignment and $C$ contains
					the baseline covariates. The causal interpretation of the mediation
					summaries reported in Section~\ref{sec:app} relies on the following standard
					conditions \citep{2019Qin,Zuo2025}.
					\begin{assumption}[Mediation identification]\label{assu:mediation}
						The selected missingness model is correctly specified. Consistency and
						treatment positivity hold, together with sequential ignorability for the
						mediator--outcome relation and the absence of a post-treatment
						mediator--outcome confounder affected by treatment.
					\end{assumption}
					Let $E_C$ denote expectation over the population distribution of $C$. For
					binary $Y_1$, the natural indirect and direct effects are
					\begin{align}
						\tau_{\mathrm{NIE}}
						&=E_C\!\left[\sum_{y_1=0}^1 E(Y_2\mid T=1,Y_1=y_1,C)
						\{P(Y_1=y_1\mid T=1,C)-P(Y_1=y_1\mid T=0,C)\}\right],
						\label{eq:jobcorps_nie}\\
						\tau_{\mathrm{NDE}}
						&=E_C\!\left[\sum_{y_1=0}^1
						\{E(Y_2\mid T=1,Y_1=y_1,C)-E(Y_2\mid T=0,Y_1=y_1,C)\}
						P(Y_1=y_1\mid T=0,C)\right].
						\label{eq:jobcorps_nde}
					\end{align}
					The outer expectations are estimated by averaging the fitted conditional
					quantities over the empirical distribution of $C$. Without
					Assumption~\ref{assu:mediation}, \eqref{eq:jobcorps_nie} and
					\eqref{eq:jobcorps_nde} define model-based population functionals but do
					not identify causal natural effects.}

				\section{Graphs Outside the Main Candidate Set}
				\label{sec:outside_candidate_graphs}
				
				The main analysis deliberately restricts the candidate set to graphs in
				which at least one outcome is self-censoring. This section gives two
				examples only to delimit that scope. The first is a graph without
				self-censoring for which the substantive distribution is identifiable,
				but which is omitted because it represents a different selection problem.
				The second shows that identification is not automatic
				once both self-censoring and cross-process dependence are allowed. These
				examples are not intended as an exhaustive classification of the
				remaining DAGs.
				
				\subsection{A non-self-censoring graph outside the selection problem}
				
				Consider the graph in Figure~\ref{fig:nonself_graph}. It contains the
				cross-process arrows $R_1\to R_2$ and $Y_1\to R_2$, but neither
				$Y_1\to R_1$ nor $Y_2\to R_2$. Thus neither outcome is self-censoring.
				
				\begin{figure}[htbp]
					\centering
					\begin{tikzpicture}[scale=0.95,transform shape]
						\node (X) at (0,0) {$X$};
						\node (Yone) at (1.8,0) {$Y_1$};
						\node (Ytwo) at (3.6,0) {$Y_2$};
						\node (Rone) at (1.8,1.6) {$R_1$};
						\node (Rtwo) at (3.6,1.6) {$R_2$};
						\draw[-{stealth}] (X) -- (Yone);
						\draw[-{stealth}] (Yone) -- (Ytwo);
						\draw[-{stealth}] (X) to[out=-25,in=-155] (Ytwo);
						\draw[-{stealth}] (X) -- (Rone);
						\draw[-{stealth}] (Rone) -- (Rtwo);
						\draw[-{stealth}] (Yone) -- (Rtwo);
					\end{tikzpicture}
					\caption{A representative non-self-censoring graph outside the candidate
						set used for model selection.}
					\label{fig:nonself_graph}
				\end{figure}
				
				Its full-data law factorizes as
				\[
				f(x,y_1,y_2,r_1,r_2)
				=f(x)f(y_1\mid x)f(y_2\mid x,y_1)
				f(r_1\mid x)f(r_2\mid r_1,y_1).
				\]
				
				\begin{proposition}
					\label{prop:nonself_graph_identified}
					Suppose
					\[
					P(R_1=1\mid X=x)>0
					\]
					for every $x$ in the support of $X$, and
					\[
					P(R_2=1\mid X=x,Y_1=y_1,R_1=1)>0
					\]
					for every $(x,y_1)$ in the support of $(X,Y_1)$. Then the substantive
					full-data distribution of $(X,Y_1,Y_2)$ is identified.
				\end{proposition}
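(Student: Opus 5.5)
The graph in Figure~\ref{fig:nonself_graph} has no self-censoring, so I would show that both missingness processes are ignorable in the sequential sense used elsewhere in the paper. The plan is to recover $f(y_1\mid x)$ and $f(y_2\mid x,y_1)$ from complete-case strata. Since $f(x)$ is always observed, the factorization $f(x)f(y_1\mid x)f(y_2\mid x,y_1)$ then identifies the substantive law.

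\textbf{Step 1: conditional independences.} First I read off the needed independences from the factorization
\[
f(x)f(y_1\mid x)f(y_2\mid x,y_1)f(r_1\mid x)f(r_2\mid r_1,y_1).
\]
Summing over $r_2$ gives $R_1\indep (Y_1,Y_2)\mid X$. Hence $f(y_1\mid x)=f(y_1\mid x,R_1=1)$, and this quantity is observed because $P(R_1=1\mid X=x)>0$.

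Next, dividing the joint law by its marginal over $y_2$ shows that
\[
f(y_2\mid x,y_1,r_1,r_2)=\frac{f(y_2\mid x,y_1)f(r_1\mid x)f(r_2\mid r_1,y_1)}{f(r_1\mid x)f(r_2\mid r_1,y_1)}=f(y_2\mid x,y_1).
\]
This gives $Y_2\indep (R_1,R_2)\mid (X,Y_1)$.

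\textbf{Step 2: identify $f(y_2\mid x,y_1)$.} Using Step~1,
\[
f(y_2\mid x,y_1)=f(y_2\mid x,y_1,R_1=1,R_2=1).
\]
The right-hand side is a conditional density in the fully observed stratum. It is well defined because
\[
P(R_1=1,R_2=1\mid x,y_1)=P(R_1=1\mid x)\,P(R_2=1\mid x,y_1,R_1=1)>0.
\]
Here the factor $P(R_1=1\mid x,y_1)=P(R_1=1\mid x)$ by Step~1, and both factors are positive by the two assumed positivity conditions.

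\textbf{Step 3: assemble the law.} Multiplying $f(x)$, $f(y_1\mid x,R_1=1)$ and $f(y_2\mid x,y_1,R_1=1,R_2=1)$ yields the substantive law of $(X,Y_1,Y_2)$ as a functional of the observed-data distribution.

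\textbf{Main obstacle.} The only delicate point is the positivity bookkeeping. The positivity conditions must hold on the support of $(X,Y_1)$ under the full-data law, so that conditioning on $R_1=1$ does not shrink the support of $Y_1$ given $x$. This follows because $P(R_1=1\mid x,y_1)=P(R_1=1\mid x)>0$. No completeness or rank argument is needed.

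Note that the response mechanism itself is also identified, since $P(R_2\mid R_1=1,Y_1)$ is observed. However, $P(R_2\mid R_1=0,y_1)$ is not required by the statement, which concerns only the substantive distribution, so I would not pursue it.
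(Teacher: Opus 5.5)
Your proposal is correct and follows the paper's proof. Both arguments read off $R_1\indep(Y_1,Y_2)\mid X$ together with the independence of $Y_2$ from the response indicators given $(X,Y_1)$; the paper states the latter as $R_2\indep Y_2\mid(X,Y_1,R_1)$, which combined with the first is equivalent to your $Y_2\indep(R_1,R_2)\mid(X,Y_1)$. Both then recover $f(y_1\mid x)$ from the $R_1=1$ stratum and $f(y_2\mid x,y_1)$ from the complete-case stratum, and your explicit check that $P(R_1=1,R_2=1\mid x,y_1)>0$ spells out the paper's brief appeal to ``sequential positivity'' (your closing aside slightly overstates, since only the $R_1=1$ slice of the $R_2$ mechanism is shown to be identified, but that lies outside the statement).
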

				
				\begin{proof}
					The graph implies
					\[
					R_1\indep(Y_1,Y_2)\mid X,
					\qquad
					R_2\indep Y_2\mid(X,Y_1,R_1).
					\]
					Because $X$ is fully observed,
					\[
					f(y_1\mid x)=f(y_1\mid x,R_1=1),
					\]
					and sequential positivity gives
					\[
					f(y_2\mid x,y_1)
					=f(y_2\mid x,y_1,R_1=1,R_2=1).
					\]
					Both right-hand sides are observed-data quantities, so
					$f(x)f(y_1\mid x)f(y_2\mid x,y_1)$ is identified.
				\end{proof}
				
				The proposition concerns the substantive distribution, which is the
				object needed to evaluate the conditional map $h_F(x)$ and hence the
				population functional in \eqref{eq:general_functional}. It does not require identification of
				response probabilities indexed by an unobserved value of $Y_1$ when
				$R_1=0$.
				
				This graph is omitted from the main candidate set because it has no
				self-censoring pathway, not because it is unidentified. Including it
				would broaden the scientific comparison beyond the two self-censoring
				families $M_1$ and $M_2$ studied in the paper.
				
				\subsection{A nonidentified graph in \texorpdfstring{{\purple $M_{1,2}$}}{M1,2}: panel (j)}
				
				We next consider the graph in {\purple $M_{1,2}$} shown in
				Figure~\ref{fig:graph_catalogue}(j). It contains the three missingness
				arrows
				\[
				Y_1\to R_1,\qquad R_1\to R_2,\qquad Y_1\to R_2,
				\]
				but not $Y_2\to R_2$. The following counterexample uses binary $Y_1$
				and $Y_2$. The same construction can be applied within each stratum of
				the fully observed $X$, so $X$ is suppressed. Let $P$ and
				$\widetilde P$ be the two full-data laws specified in
				Table~\ref{tab:two_full_laws}.
				
				\begin{table}[htbp]
					\centering
					\caption{Two distinct full-data laws for the graph in
						Figure~\ref{fig:graph_catalogue}(j).}
					\label{tab:two_full_laws}
					\small
					\begin{tabular}{p{0.58\textwidth}cc}
						\toprule
						Quantity & $P$ & $\widetilde P$ \\
						\midrule
						$P(Y_1=1)$ & $0.55$ & $0.60$ \\
						$P(Y_2=1\mid Y_1=0)$ & $0.30$ & $0.30$ \\
						$P(Y_2=1\mid Y_1=1)$ & $0.70$ & $0.70$ \\
						\addlinespace
						$P(R_1=1\mid Y_1=0)$ & $0.40$ & $0.45$ \\
						$P(R_1=1\mid Y_1=1)$ & $0.70$ & $77/120$ \\
						\addlinespace
						$P(R_2=1\mid R_1=1,Y_1=0)$ & $0.40$ & $0.40$ \\
						$P(R_2=1\mid R_1=1,Y_1=1)$ & $0.80$ & $0.80$ \\
						$P(R_2=1\mid R_1=0,Y_1=0)$ & $0.25$ & $27/88$ \\
						$P(R_2=1\mid R_1=0,Y_1=1)$ & $0.55$ & $363/860$ \\
						\bottomrule
					\end{tabular}
				\end{table}
				
				All probabilities lie strictly between zero and one. In each law, the
				response probability for $R_1$ varies with $Y_1$, while the response
				probability for $R_2$ varies with both $R_1$ and $Y_1$ but not with
				$Y_2$. Thus every missingness edge in
				Figure~\ref{fig:graph_catalogue}(j) is active under both laws.
				
				The observable events and their probabilities are listed in
				Table~\ref{tab:same_observed_law}. They form a partition of the observed
				sample space: both outcomes are observed when $(R_1,R_2)=(1,1)$; only
				$Y_1$ is observed when $(R_1,R_2)=(1,0)$; only $Y_2$ is observed when
				$(R_1,R_2)=(0,1)$; and neither is observed when $(R_1,R_2)=(0,0)$.
				Writing
				\[
				p_{y_1y_2}=P(Y_1=y_1,Y_2=y_2),\qquad
				a_{y_1}=P(R_1=1\mid Y_1=y_1),
				\]
				and
				\[
				b_{r_1y_1}=P(R_2=1\mid R_1=r_1,Y_1=y_1),
				\]
				the observed probabilities are
					\begin{align*}
						P(R_1 = 1,R_2 = 1,Y_1 = y_1,Y_2 = y_2)&=p_{y_1y_2}a_{y_1}b_{1y_1},\\
						P(R_1 = 1,R_2 = 0,Y_1 = y_1)&=\sum_{y_2}p_{y_1y_2}a_{y_1}(1-b_{1y_1}),\\
						P(R_1 = 0,R_2 = 1,Y_2 = y_2)&=\sum_{y_1}p_{y_1y_2}(1-a_{y_1})b_{0y_1},\\
						P(R_1 = 0,R_2 = 0)&=\sum_{y_1,y_2}p_{y_1y_2}(1-a_{y_1})(1-b_{0y_1}).
					\end{align*}

				\begin{table}[htbp]
					\centering
					\caption{The common observed-data law induced by the two full-data laws
						in Table~\ref{tab:two_full_laws}.}
					\label{tab:same_observed_law}
					\small
					\begin{tabular}{p{0.60\textwidth}cc}
						\toprule
						Observed event & $P$ & $\widetilde P$ \\
						\midrule
						$R_1=1,R_2=1,Y_1=0,Y_2=0$ & $0.050400$ & $0.050400$ \\
						$R_1=1,R_2=1,Y_1=0,Y_2=1$ & $0.021600$ & $0.021600$ \\
						$R_1=1,R_2=1,Y_1=1,Y_2=0$ & $0.092400$ & $0.092400$ \\
						$R_1=1,R_2=1,Y_1=1,Y_2=1$ & $0.215600$ & $0.215600$ \\
						$R_1=1,R_2=0,Y_1=0$ & $0.108000$ & $0.108000$ \\
						$R_1=1,R_2=0,Y_1=1$ & $0.077000$ & $0.077000$ \\
						$R_1=0,R_2=1,Y_2=0$ & $0.074475$ & $0.074475$ \\
						$R_1=0,R_2=1,Y_2=1$ & $0.083775$ & $0.083775$ \\
						$R_1=0,R_2=0$ & $0.276750$ & $0.276750$ \\
						\bottomrule
					\end{tabular}
				\end{table}
				
				The entries in each column of Table~\ref{tab:same_observed_law} sum to
				one, and the columns agree entry by entry. Hence the two distinct
				full-data laws induce the same observed-data distribution. In
				particular,
				\[
				P(Y_1=1)=0.55\ne0.60=\widetilde P(Y_1=1),
				\]
				so the substantive full-data distribution is not identified for panel
				(j) without additional restrictions. This construction keeps all three
				missingness dependencies in panel (j) active. Because panel (j) is
				nested in panel (l), the same counterexample also establishes that
				identification cannot be guaranteed for the larger graph represented by
				panel (l).

				\section{Proof of Theorem \ref{thm:ate_normality}}
				\label{pf:ate_normality}
					\begin{proof}
						A first-order expansion of \eqref{eq:functional_main} around
						$\theta_*^M$, together with the law of large numbers for the
						derivative, gives
						\begin{align}
							\sqrt n(\hat\tau^M-\tau_*^M)
							={}&\frac1{\sqrt n}\sum_{i=1}^n
							\left\{h_{\theta_*^M}(X_i)-\tau_*^M
							-d_M^\T A_M^{-1}s_i^M\right\}+o_p(1).
							\label{eq:delta_method}
						\end{align}
						The central limit theorem therefore establishes \eqref{eq:ate_clt}
						with variance $V_M=E_0\{(\psi_i^M)^2\}$.
					\end{proof}

				
				
				
				
				
				
				

				\section{Details of the Estimation via EM algorithm}\label{sec:detail_EM}
				
				The EM algorithm updates the parameter vector by maximizing the following $Q$-function:
				\begin{align}
					\label{eq:Qfunc}
					Q(\theta\mid\theta^{(t)})
					=\sum_{i=1}^nQ_i^{(t)}(\theta)
					=\sum_{i=1}^n
					E_{\theta^{(t)}}\{\ell_c^{i}(\theta)\mid O_i\},
				\end{align}
				where $\ell_c^{i}(\theta)$ denotes the contribution of the $i$th individual to the complete-data log-likelihood.  
				
				\subsection{EM Algorithm for Discrete $Y_2$}\label{subsec:DIS_EM}
				When the outcome variable $Y_2$ is discrete, we adopt a cell-count EM algorithm following \citet{Ding2014}, assuming that the covariate 
				$X$ has been discretized or its relationship with the outcome is modeled via a generalized linear model.
				We detail the procedure under model $M_1$; model $M_2$ follow analogously.

				Let $N_{x y_1 r_1 y r_2}$ denote the observed cell count. 
				A ``$+$'' subscript indicates summation over the corresponding margins; for instance, 
				$N_{x+0y1}$ sums over $Y_1$ with $R_1=0$ 
				and $R_2=1$. In the E-step, missing components are imputed via posterior distribution, yielding expected cell counts $\widetilde N^{(t)}_{x y_1 r_1 y r_2}$. In the M-step, the three model factors $f(Y_1,R_1 \mid X)$,
				$f(Y_2 \mid Y_1,X)$, and
				$P(R_2 \mid R_1,Y_2)$ are updated via maximum likelihood estimation based on
				$\widetilde N^{(t)}$.
				
				Under model $M_1$ and Assumptions \ref{assu:indep}-\ref{assu:M1_1}, the joint distribution factorizes as
				\begin{align*}
					&P_{M_1}^{(t)}(X=x,Y_1=y_1,R_1=r_1,Y_2=y_2,R_2=r_2)\\
					={}& P^{(t)}(X=x)P^{(t)}(Y_1=y_1,R_1=r_1\mid X)\\
					&\quad\times P^{(t)}(Y_2=y_2\mid X=x,Y_1=y_1)
					P^{(t)}(R_2=r_2\mid R_1=r_1,Y_2=y_2).
				\end{align*}

				Omitting $\log f_X(X)$ as it does not depend on $\theta$, 
				the complete-data log-likelihood is:
				\begin{align*}
					\ell_c(\theta^{M_1}) 
					= \sum_{x,y_1,r_1,y_2,r_2} N_{x y_1 r_1 y_2 r_2}
					\{
					\log f_1(y_1,r_1\mid x) 
					+ \log f_2(y_2\mid x,y_1) 
					+ \log P_R(r_2\mid r_1,y_2)
					\},
				\end{align*}
				where
				$$
				\begin{aligned}
					&f_1(y_1,r_1\mid x)=f(Y_1=y_1,R_1=r_1\mid X=x;\theta_{Y_1}^{M_1}),\\
					&f_2(y_2\mid x,y_1)=f(Y_2=y_2\mid X=x,Y_1=y_1;\theta_{Y_2}^{M_1}),\\
					&P_R(r_2\mid r_1,y_2)=P(R_2=r_2\mid R_1=r_1,Y_2=y_2;\theta_R^{M_1}).
				\end{aligned}
				$$
				The EM algorithm proceeds as follows.
				\begin{itemize}
					\item  \textbf{E-step:}
					Compute the expected complete-data log likelihood
					\begin{align*}
						Q(\theta^{M_1} \mid \theta^{M_1(t)})
						={}& \sum_{x,y_1,r_1,y_2,r_2}
						\widetilde{N}^{(t)}_{x y_1 r_1 y_2 r_2}
						\\[-2pt]
						&\quad\times
						\{
						\log f_1(y_1,r_1\mid x)
						+ \log f_2(y_2\mid x,y_1)
						+ \log P_R(r_2\mid r_1,y_2)
						\},
					\end{align*}
					where $\widetilde{N}^{(t)}_{xy_1r_1y_2r_2}$ is obtained across 
					four observation cases:

					\textbf{Case 1}: $R_{1i}=1,\, R_{2i}=1$ (both observed).
					
					$$
					\widetilde N^{(t)}_{x y_11y_21} = N_{x y_11 y_21}.$$

					\textbf{Case 2}: $R_{1i}=1,\, R_{2i}=0$ ($Y_1$ observed, $Y_2$ missing).
					$$
					w^{(t)}_{y_2\mid x,y_1,\,1,\,0}
					=\frac{f_2^{(t)}(y\mid x,y_1)\,P_R^{(t)}(0\mid 1,y_2)}{\sum_{u\in\{0,1\}}f_2^{(t)}(u\mid x,y_1)\,P_R^{(t)}(0\mid 1,u)}.
					$$
					$$
					\widetilde N^{(t)}_{x y_11y_20}
					= N_{x y_11+0}\, w^{(t)}_{y_2\mid x,y_1,\,1,\,0},
					\qquad y_2\in\{0,1\}.
					$$
					\textbf{Case 3}: $R_{1i}=0,\, R_{2i}=1$ ($Y_1$ missing, $Y_2$ observed).
					$$
					w^{(t)}_{y_1 \mid x,0,y_2,1}
					= \frac{
						f_1^{(t)}(y_1,0\mid x)\,f_2^{(t)}(y_2\mid x,y_1)
					}{
						\sum_{y_1' \in \{0,1\}} f_1^{(t)}(y_1',0\mid x)\,f_2^{(t)}(y_2\mid x,y_1')
					}.
					$$
					$$
					\widetilde N^{(t)}_{x y_1 0 y_2 1}
					= N_{x + 0 y_2 1}\, w^{(t)}_{y_1\mid x,\,0,\,y_2,\,1},
					\qquad a\in\{0,1\}.
					$$

					\textbf{Case 4}: $R_{1i}=0,\, R_{2i}=0$ (both missing).
					
					$$
					w^{(t)}_{y_1,y_2\mid x,\,0,\,0}
					=\frac{f_1^{(t)}(y_1,0\mid x)\,f_2^{(t)}(y_2\mid x,y_1)\,P_R^{(t)}(0\mid 0,y_2)}
					{\sum_{y_1'\in\{0,1\}}\sum_{u\in\{0,1\}}f_1^{(t)}(y_1',0\mid x)\,f_2^{(t)}(u\mid x,y_1')\,P_R^{(t)}(0\mid 0,u)},
					$$
					$$
					\widetilde N^{(t)}_{x y_10 y_20} = N_{x+0+0}\, w^{(t)}_{y_1,y_2\mid x,\,0,\,0}\quad(y_1,y_2\in\{0,1\}).
					$$
					
					\item \textbf{M-Step:}
					Aggregate imputed counts across all cases:
					$$
					\widetilde N^{(t)}_{x y_1 r_1 y_2 r_2}
					=\widetilde N^{(t)}_{x y_1 1 y_2 1}
					+\widetilde N^{(t)}_{x y_1 1 y_2 0}
					+\widetilde N^{(t)}_{x y_1 0 y_2 1}
					+\widetilde N^{(t)}_{x y_1 0 y_2 0},
					$$
					and update parameters via maximum likelihood:
					\begin{align*}
						\hat f_1^{(t+1)}(y_1,r_1\mid x)=& \widehat P^{(t+1)}(Y_1=y_1,R_1=r_1\mid X=x)
						=\frac{\sum_{y_2 ,r_2}\widetilde N^{(t)}_{x y_1 r_1 y_2  r_2}}
						{\sum_{y_1',r_1',y_2 ,r_2}\widetilde N^{(t)}_{x y_1' r_1' y_2  r_2}}
						=\frac{\widetilde N^{(t)}_{x y_1 r_1 + +}}{\widetilde N^{(t)}_{x + + + +}},\\
						\hat f_2^{(t+1)}(y_2 \mid x,y_1)= & \widehat P^{(t+1)}(Y_2=y_2\mid X=x,Y_1=y_1)
						=\frac{\sum_{r_1,r_2}\widetilde N^{(t)}_{x y_1 r_1 y_2 r_2}}
						{\sum_{r_1,r_2,y_2'}\widetilde N^{(t)}_{x y_1 r_1 y_2' r_2}}
						=\frac{\widetilde N^{(t)}_{x y_1 + y_2 +}}{\widetilde N^{(t)}_{x y_1 + + +}},\\
						\hat P_R^{(t+1)}(r_2\mid r_1,y_2)= & \widehat P^{(t+1)}(R_2=r_2\mid R_1=r_1,Y_2=y_2) 
						=\frac{\sum_{x,y_1}\widetilde N^{(t)}_{x y_1 r_1 y_2 r_2}}
						{\sum_{x,y_1,r_2'}\widetilde N^{(t)}_{x y_1 r_1 y_2 r_2'}}
						=\frac{\widetilde N^{(t)}_{+ + r_1 y_2 r_2}}{\widetilde N^{(t)}_{+ + r_1 y_2 +}}.
					\end{align*}
					
				\end{itemize}
				These steps iterate until convergence.
				
				For model $M_2$, the algorithm is identical in structure, 
				replacing $P^{(t)}(R_2\!=\!r_2\mid R_1\!=\!r_1,Y_2\!=\!y_2)$ 
				with $P^{(t)}(R_2\!=\!r_2\mid Y_1\!=\!y_1,Y_2\!=\!y_2)$. 
				All other steps follow analogously and are omitted.

				\subsection{EM Algorithm for Continuous $Y_2$}\label{subsec:CON_EM}
				When $Y_2$ is continuous, the E-step requires integration over the missing components rather than summation over finite cells. We approximate these conditional expectations by importance sampling, which yields a Monte Carlo EM algorithm. With a finite Monte Carlo sample, monotonicity of the observed-data likelihood is not guaranteed. The asymptotic theory in this paper is therefore stated for the observed-data MLE rather than for a finite-simulation MCEM iterate.

				For each model, we specify parametric models for three components: the joint treatment-missingness model $f(Y_{1i}, R_{1i} \mid X_i; \theta_{Y_1})$, the outcome model $f(Y_{2i} \mid X_i, Y_{1i}; \theta_{Y_2})$, and the outcome missingness model $f(R_{2i} \mid R_{1i}, Y_{2i}; \theta_R)$ under model $M_1$ or $f(R_{2i} \mid Y_{1i}, Y_{2i}; \theta_R)$ under model $M_2$. The full parameter vectors are $\theta^{M_1}=(\theta_{Y_1}^{M_1},\theta_{Y_2}^{M_1},\theta_R^{M_1})$ for model $M_1$ and $\theta^{M_2}=(\theta_{Y_1}^{M_2},\theta_{Y_2}^{M_2},\theta_R^{M_2})$ for model $M_2$.
				
				Under Assumptions~\ref{assu:indep} -- \ref{assu:M1_1}, the complete-data log-likelihood function under model $M_1$ is
				\begin{equation}
					\label{eq:ind-ll-F}
					\ell_c(\theta^{M_1})
					= \sum_{i=1}^n \{
					\log f(Y_{1i},R_{1i} \mid X_i;\theta_{Y_1}^{M_1})
					+ \log f(Y_{2i} \mid X_i,Y_{1i};\theta_{Y_2}^{M_1})
					+ \log f(R_{2i} \mid R_{1i},Y_{2i};\theta_R^{M_1})
					\}.
				\end{equation}
				where the term $\log f(X_i)$ is omitted since it does not depend on $\theta$. The observed-data  likelihood is
				\begin{align*}
					L_{\mathrm{obs}}(\theta^{M_1})=\prod_{i=1}^n \int\int  f_{M_1}(Y_{1i},Y_{2i},R_{1i},R_{2i} \mid X_i;\theta^{M_1}) dY_{1i}^{mis}dY_{2i}^{mis},
				\end{align*}
				where integration is over missing components only, with observed values substituted directly.
				We detail the EM procedure under 
				model $M_1$; model $M_2$ follows with $f(R_{2i}\mid R_{1i},Y_{2i};\theta_R^{M_1})$ 
				replaced by $f(R_{2i}\mid Y_{1i},Y_{2i};\theta_R^{M_2})$.
				
				The EM algorithm proceeds as follows.

				\begin{itemize}
					
					\item 
					\textbf{E-step:} 
					Compute the expected complete-data log-likelihood
					\begin{align*}
						Q\bigl(\theta^{M_1} \mid \theta^{M_1(t)}\bigr)
						= E_{\theta^{M_1(t)}}\!\left\{\ell_c(\theta^{M_1})
						\,\middle|\, O_1,\ldots,O_n\right\}
						= \sum_{i=1}^n \widetilde Q_i^{(t)}(\theta^{M_1}),
					\end{align*} 
					where $\widetilde Q_i^{(t)}(\theta^{M_1})$ is unit $i$'s contribution to the $Q$-function, determined by its observation pattern.
					
					\textbf{Case 1}: $R_{1i} = R_{2i} = 1$(both observed).
					No imputation is required.
					\begin{align*}
						\widetilde{Q_i}^{(t)}(\theta^{M_1})
						&=\log f(Y_{1i},R_{1i}\mid X_i;\theta_{Y_1}^{M_1})
						+\log f(Y_{2i}\mid X_i,Y_{1i};\theta_{Y_2}^{M_1})
						+\log P(R_{2i}\mid R_{1i},Y_{2i};\theta_R^{M_1}).
					\end{align*}
					
					
					\textbf{Case 2}: $R_{1i} = 1, R_{2i} = 0$ ($Y_1$ observed, $Y_2$ missing). The contribution to the $Q$-function is
					\begin{align*}
						\widetilde{Q}_i^{(t)}(\theta^{M_1})
						&=\log f(Y_{1i},R_{1i}\mid X_i;\theta_{Y_1}^{M_1})\\
						&\quad+ \mathbb{E}_{Y_{2i} \mid X_i, Y_{1i}, R_{1i}=1, R_{2i}=0; \theta^{M_1(t)}}\{\log f(Y_{2i} \mid X_i,Y_{1i};\theta_{Y_2}^{M_1})+ \log P(R_{2i} \mid R_{1i},Y_{2i};\theta_R^{M_1})\}
					\end{align*}
					where the posterior distribution of $Y_{2i}$ is
					\begin{align*}
						f(Y_{2i}\mid X_i,Y_{1i},R_{1i}=1,R_{2i}=0;\theta^{M_1(t)})
						&\propto f(Y_{2i}\mid X_i,Y_{1i};\theta_{Y_2}^{M_1(t)}) \cdot P(R_{2i}=0\mid R_{1i}=1,Y_{2i};\theta_R^{M_1(t)}).
					\end{align*}
					Since direct integration is computationally challenging, we use importance sampling.
					Using $B_{\mathrm{MC}}$ Monte Carlo draws
					$\{Y_{2i}^{(m)}\}_{m=1}^{B_{\mathrm{MC}}}$ from a proposal distribution
					$g(Y_{2i})$, compute the importance weights
					\begin{align*}
						w_i^{(m)} \propto \frac{f(Y_{2i}^{(m)}\mid X_i,Y_{1i};\theta_{Y_2}^{M_1(t)}) \cdot P(R_{2i}=0\mid R_{1i}=1,Y_{2i}^{(m)};\theta_R^{M_1(t)})}{g(Y_{2i}^{(m)})},
					\end{align*}
					the conditional expectation is approximated by the weighted samples.

					Case 3: $R_{1i} = 0, R_{2i} = 1$($Y_1$ missing, $Y_2$ observed).
					
					Since $Y_{1i}\in\{0,1\}$ is binary, the contribution to the $Q$-function is a finite sum:
					\begin{align*}
						\widetilde{Q}_i^{(t)}(\theta^{M_1})
						={} & \sum_{y_1=0}^1 P(Y_{1i}=y_1 \mid X_i, R_{1i}=0, Y_{2i}, R_{2i}=1; \theta^{M_1(t)}) \\
						& \quad \times \{\log f(Y_{1i}=y_1, R_{1i}=0 \mid X_i;\theta_{Y_1}^{M_1}) + \log f(Y_{2i} \mid X_i,Y_{1i}=y_1;\theta_{Y_2}^{M_1}) \\
						& \qquad + \log P(R_{2i}=1 \mid R_{1i}=0,Y_{2i};\theta_R^{M_1}) \},
					\end{align*}
					where the posterior distribution of $Y_1$ is
					\begin{align*}
						&P(Y_{1i}=y_1 \mid X_i,R_{1i}=0,Y_{2i},R_{2i}=1;\theta^{M_1(t)})\\
						&=\frac{
							P(Y_{1i}=y_1,R_{1i}\mid X_i;\theta_{Y_1}^{M_1(t)})
							f(Y_{2i}\mid X_i,Y_{1i}=y_1;\theta_{Y_2}^{M_1(t)})
							P(R_{2i}\mid R_{1i},Y_{2i};\theta_R^{M_1(t)})
						}{
							\sum_{y_1'\in\{0,1\}}
							P(Y_{1i}=y_1',R_{1i}\mid X_i;\theta_{Y_1}^{M_1(t)})
							f(Y_{2i}\mid X_i,Y_{1i}=y_1';\theta_{Y_2}^{M_1(t)})
							P(R_{2i}\mid R_{1i},Y_{2i};\theta_R^{M_1(t)})
						}.
					\end{align*}
					

					
					Case 4: $R_{1i} = R_{2i} = 0$ (both missing).
					The contribution to the $Q$-function is
					\begin{align*}
						\widetilde{Q}_i^{(t)}(\theta^{M_1})
						&= \sum_{y_1=0}^1 \int f(Y_{1i}=y_1, Y_{2i}=y_2 \mid X_i, R_{1i}=0, R_{2i}=0;\theta^{M_1(t)}) \\
						&\quad \times \{\log f(Y_{1i}=y_1, R_{1i}=0 \mid X_i;\theta_{Y_1}^{M_1}) + \log f(Y_{2i}=y_2 \mid X_i,Y_{1i}=y_1;\theta_{Y_2}^{M_1}) \\
						&\qquad + \log P(R_{2i}=0 \mid R_{1i}=0,Y_{2i}=y_2;\theta_R^{M_1})\} \, dy_2
					\end{align*}
					
					The joint posterior distribution is
					\begin{align*}
						&f(Y_{1i}=y_1,Y_{2i}=y_2\mid X_i, R_{1i}=0, R_{2i}=0;\theta^{M_1(t)})\\
						\propto & f(Y_{1i}=y_1,R_{1i}=0\mid X_i;\theta_{Y_1}^{M_1(t)}) 
						f(Y_{2i}=y_2\mid X_i,Y_{1i}=y_1;\theta_{Y_2}^{M_1(t)})\,P(R_{2i}=0\mid R_{1i}=0,Y_{2i}=y_2;\theta_R^{M_1(t)})
					\end{align*}
					
					As in case 2, we use importance sampling: draw
					$\{(Y_{1i}^{(m)},Y_{2i}^{(m)})\}_{m=1}^{B_{\mathrm{MC}}}$
					from a proposal distribution, compute joint importance weights $w_i^{(m)}(y_1,y_2)$, and use the weighted samples to compute the conditional expectation.
					

					\item \textbf{M-step}
					
					Maximize  $Q(\theta^{M_1} \mid \theta^{M_1(t)})$ to obtain:
					\begin{align*}
						\theta^{M_1(t+1)} = \arg\max_{\theta^{M_1}} Q(\theta^{M_1} \mid \theta^{M_1(t)})
						= \arg\max_{\theta^{M_1}} \sum_{i=1}^n \widetilde{Q}_i^{(t)}(\theta^{M_1})
					\end{align*}
					
				\end{itemize}

				
				The familiar Jensen-inequality argument for a non-decreasing
				observed-data likelihood applies to exact EM, but not directly to the
				finite-simulation approximation above. A separate asymptotic analysis
				of MCEM would require additional conditions on the Monte Carlo error;
				such conditions are not imposed here.

				\section{Regularity Conditions for Model Selection Procedure}\label{sm_sec:regularity}
				We impose the following regularity conditions to derive the asymptotic distributions of the test statistics in the model selection procedure.

				\begin{assumption}\label{assu:regularity} 
					Let $F_0$ denote the true distribution of the observed data $O_i$, with
					density $f_0$ with respect to a $\sigma$-finite dominating measure
					$\nu$. For models $M_1$ and $M_2$, let
					$f_{M_1}(o;\theta^{M_1})$ and $f_{M_2}(o;\theta^{M_2})$ denote the
					observed-data densities, with parameter spaces
					$\Theta_{M_1}\subset\mathbb{R}^p$ and
					$\Theta_{M_2}\subset\mathbb{R}^q$, respectively.
					Assume:
					
					\begin{enumerate}[label=(\roman*)]
						
						\item (Support Consistency) The observed data vectors $O_i$, $i=1,\ldots,n$, are independent and identically distributed under $F_0$.
						The densities $f_0(o)$, $f_{M_1}(o;\theta^{M_1})$, and $f_{M_2}(o;\theta^{M_2})$ share a common support $K^0$, on which the model densities are strictly positive for all admissible parameter values.
						
						\item (Compactness and Continuity)
						The parameter spaces $\Theta_{M_1}$ and $\Theta_{M_2}$ are compact. For $F_0$-almost every $o\in K^0$,
						$f_{M_1}(o;\theta^{M_1})
						$ and $f_{M_2}(o;\theta^{M_2})$
						are continuous in $\theta^{M_1}$ and $\theta^{M_2}$, respectively.
						
						\item (Integrability and Pseudo-Truth Existence)
						For any $\theta^{M_1} \in \Theta_{M_1}$ and $\theta^{M_2} \in \Theta_{M_2}$,
						$ 
						|\log f_{M_1}(o;\theta^{M_1})|$
						and $
						|\log f_{M_2}(o;\theta^{M_2})|
						$ 
						are uniformly dominated by an integrable function. The expected
						log-likelihood under each model attains a unique maximum
						at the pseudo-truth values $\theta_*^{M_1}$ and $\theta_*^{M_2}$, and the MLEs $\hat\theta^{M_1}$ and $\hat\theta^{M_2}$ exist with probability approaching one.
						
						\item (Smoothness)
						There exist open neighbourhoods
						$\mathcal{N}_{M_1} \ni \theta_*^{M_1}$ and $\mathcal{N}_{M_2} \ni \theta_*^{M_2}$, such that for almost all $o$,
						$\log f_{M_1}(o;\theta^{M_1})$ and $\log f_{M_2}(o;\theta^{M_2})$
						are twice continuously differentiable in $\theta^{M_1} \in \mathcal{N}_{M_1}$ and $\theta^{M_2} \in \mathcal{N}_{M_2}$,
						with first and second derivatives dominated by integrable functions, respectively.
						The following matrices exist and are well defined:
						\begin{align*}
							B_{M_1} &= E_0\!\left\{
							\frac{\partial\log f_{M_1}(o;\theta_*^{M_1})}{\partial\theta^{M_1}}
							\cdot
							\frac{\partial\log f_{M_1}(o;\theta_*^{M_1})}{\partial(\theta^{M_1})^\T}
							\right\},\\
							B_{M_2} &= E_0\!\left\{
							\frac{\partial\log f_{M_2}(o;\theta_*^{M_2})}{\partial\theta^{M_2}}
							\cdot
							\frac{\partial\log f_{M_2}(o;\theta_*^{M_2})}{\partial(\theta^{M_2})^\T}
							\right\},\\
							B_{M_1M_2} &= E_0\!\left\{
							\frac{\partial\log f_{M_1}(o;\theta_*^{M_1})}{\partial\theta^{M_1}}
							\cdot
							\frac{\partial\log f_{M_2}(o;\theta_*^{M_2})}{\partial(\theta^{M_2})^\T}
							\right\}.
						\end{align*}
						Set $B_{M_2M_1}=B_{M_1M_2}^{\T}$.
						\item (Regularity at Pseudo-Truth Points)
						$\theta_*^{M_1}$ and $\theta_*^{M_2}$ are interior points of  $\Theta_{M_1}$ and $\Theta_{M_2}$.
						The Hessian matrices
						\begin{align*}
							A_{M_1} = E_0\!\left\{
							\frac{\partial^2\log f_{M_1}(o;\theta_*^{M_1})}{\partial\theta^{M_1}\partial(\theta^{M_1})^\T}
							\right\},
							A_{M_2} = E_0\!\left\{
							\frac{\partial^2\log f_{M_2}(o;\theta_*^{M_2})}{\partial\theta^{M_2}\partial(\theta^{M_2})^\T}
							\right\},
						\end{align*}
						are both nonsingular.
						
						\item (Finite second moment) The squared log-densities $\left|\log f_{M_1}(o;\theta^{M_1})\right|^2$ and $\left|\log f_{M_2}(o;\theta^{M_2})\right|^2$ are dominated by $F_0$-integrable functions that do not depend on $\theta^{M_1}$ and $\theta^{M_2}$, respectively.

						\item (Dominated Integrability for variance estimation)
						Define $m(o;\theta^{M_1},\theta^{M_2})=\log\cfrac{f_{M_1}(o;\theta^{M_1})}
						{f_{M_2}(o;\theta^{M_2})}.$
						For $F_0$-almost every $o\in K^0$ and all 
						$(\theta^{M_1},\theta^{M_2})\in\Theta_{M_1}\times\Theta_{M_2}$, the 
						functions
						
						\begin{align*}
							&\left|m(o;\theta^{M_1},\theta^{M_2})\right|\cdot
							\left\|\frac{\partial^2\log f_{M_1}(o;\theta^{M_1})}{\partial\theta^{M_1}\partial(\theta^{M_1})^\T}\right\|,~\left|m(o;\theta^{M_1},\theta^{M_2})\right|\cdot
							\left\|\frac{\partial^2\log f_{M_2}(o;\theta^{M_2})}{\partial\theta^{M_2}\partial(\theta^{M_2})^\T}\right\|
						\end{align*}
						are dominated by $F_0$-integrable functions independent of
						$\theta^{M_1}$ and $\theta^{M_2}$.
					\end{enumerate}
				\end{assumption}

				Assumption \ref{assu:regularity} adapts the regularity conditions of \citet{Vuong1989} to the MNAR setting. Under this framework, the model selection problem is naturally formulated as a hypothesis-testing problem based on the log-likelihood ratio of the two competing models.  
				Conditions (i)--(ii) ensure a well-defined observed-data distribution, common support across models, and existence of maximizers via compactness and continuity. Condition (iii) provides the uniform integrability needed for the law of large numbers and establishes the existence and uniqueness of the pseudo-true parameters $\theta_*^{M_1}$ and $\theta_*^{M_2}$; under misspecification, each pseudo-true parameter indexes the member of the corresponding model that minimizes the Kullback--Leibler divergence from $F_0$. Conditions (iv)--(v) supply the smoothness and nonsingularity needed for standard M-estimation expansions. Conditions (vi)--(vii) provide the moment and domination conditions used by the two model-selection tests. In particular, together with the overlap null $\omega_*^2=0$, they yield $\hat\omega_n^2=O_p(n^{-1})$ and a nondegenerate limit for $n\hat\omega_n^2$.

				\section{Proof of Model Selection Test}\label{pf:MS}
				
				\subsection{Proof of Variance Consistency}\label{pf:lemma1}
				\begin{proof}[The proof of Lemma \ref{lem:variance_consistency}]
					The proof proceeds in two steps.
					
					\noindent\textbf{Step (i):} $\hat\omega_n^2 \xrightarrow{a.s.} \omega_*^2$.
					Given assumptions \ref{assu:regularity}(i)-(iii) and (vi), from Jennrich's uniform strong law of large numbers (\citet{Jennrich1969}, Theorem 2), we have:
					$$
					\frac{1}{n}\sum_{i=1}^{n}\{\ell_{i}^{M_1}(\theta^{M_1}) - \ell_{i}^{M_2}(\theta^{M_2})\} \xrightarrow{a.s.} E_0\{\ell_{i}^{M_1}(\theta_{*}^{M_1}) - \ell_{i}^{M_2}(\theta_{*}^{M_2})\}
					$$
					$$
					\frac{1}{n}\sum_{i=1}^{n}\{\ell_{i}^{M_1}(\theta^{M_1}) - \ell_{i}^{M_2}(\theta^{M_2})\}^{2} \xrightarrow{a.s.} E_0\left[\left\{\ell_{i}^{M_1}(\theta_{*}^{M_1}) - \ell_{i}^{M_2}(\theta_{*}^{M_2})\right\}^{2}\right]
					$$
					uniformly over $(\theta^{M_1},\theta^{M_2})\in
					\Theta_{M_1}\times\Theta_{M_2}$.

					By Assumption \ref{assu:regularity}(iii), the ML estimators $\hat{\theta}^{M_1}$ and $\hat{\theta}^{M_2}$ converge strongly to the pseudo-truths $\theta_*^{M_1}$ and $\theta_*^{M_2}$, respectively.
					
					Evaluating these uniform limits at the strongly consistent maximum
					likelihood estimators gives
					$$
					\frac{1}{n}\sum_{i=1}^{n}\{\ell_{i}^{M_1}(\hat\theta^{M_1}) - \ell_{i}^{M_2}(\hat\theta^{M_2})\} \xrightarrow{a.s.} E_0\{\ell_{i}^{M_1}(\theta_*^{M_1}) - \ell_{i}^{M_2}(\theta_*^{M_2})\}
					$$
					$$
					\frac{1}{n}\sum_{i=1}^{n}\{\ell_{i}^{M_1}(\hat\theta^{M_1}) - \ell_{i}^{M_2}(\hat\theta^{M_2})\}^2 \xrightarrow{a.s.} E_0\left[\{\ell_{i}^{M_1}(\theta_*^{M_1}) - \ell_{i}^{M_2}(\theta_*^{M_2})\}^2\right]
					$$
					
					Let
					\[
					m_i=\ell_i^{M_1}(\hat\theta^{M_1})-
					\ell_i^{M_2}(\hat\theta^{M_2}),
					\qquad
					m_i^*=\ell_i^{M_1}(\theta_*^{M_1})-
					\ell_i^{M_2}(\theta_*^{M_2}).
					\]
					Then
					\begin{align*}
						\hat\omega_n^2
						&=\frac1n\sum_{i=1}^n m_i^2
						-\left(\frac1n\sum_{i=1}^n m_i\right)^2\\
						&\xrightarrow{a.s.}E_0\{(m_i^*)^2\}-\{E_0(m_i^*)\}^2\\
						&=\operatorname{Var}\!\left\{\log
						\frac{f_{M_1}(O_i;\theta_*^{M_1})}
						{f_{M_2}(O_i;\theta_*^{M_2})}\right\}
						=\omega_*^2.
					\end{align*}
					
					\textbf{Step (ii): Limit of $\tilde{\omega}_{n}^{2}$}
					
					From Step (i),$$
					\tilde{\omega}_{n}^{2} = \frac{1}{n}\sum_{i=1}^{n}\{\ell_{i}^{M_1}(\hat\theta^{M_1}) - \ell_{i}^{M_2}(\hat\theta^{M_2})\}^{2} \xrightarrow{a.s.} E_0\left[\left\{\ell_{i}^{M_1}(\theta_*^{M_1}) - \ell_{i}^{M_2}(\theta_*^{M_2})\right\}^{2}\right].
					$$
					
					By variance decomposition,
					$$
					E_0\left[\left\{\ell_{i}^{M_1}(\theta_*^{M_1}) - \ell_{i}^{M_2}(\theta_*^{M_2})\right\}^{2}\right] = \mathrm{Var}\!\left\{\log\frac{f_{M_1}(O_i;\theta_*^{M_1})}{f_{M_2}(O_i;\theta_*^{M_2})}\right\} + \left[E_0\{\ell_{i}^{M_1}(\theta_*^{M_1}) - \ell_{i}^{M_2}(\theta_*^{M_2})\}\right]^{2}
					$$
					
					Therefore,
					$$
					\tilde{\omega}_{n}^{2} \xrightarrow{a.s.} \omega_*^2 + \left[E_0\{\ell_{i}^{M_1}(\theta_*^{M_1}) - \ell_{i}^{M_2}(\theta_*^{M_2})\}\right]^{2}.
					$$
					
					This completes the proof.
				\end{proof}
				
				\subsection{Proof of Limit Distribution of Variance Test Statistics}\label{pf:thm3}
				
				\begin{proof}[Proof of Lemma \ref{lemma:equivalence_condition}]
					Write
					\[
					m_*(O_i)=\log\frac{f_{M_1}(O_i;\theta_*^{M_1})}
					{f_{M_2}(O_i;\theta_*^{M_2})}.
					\]
					If $\omega_*^2=\operatorname{Var}_0\{m_*(O_i)\}=0$, then
					$m_*(O_i)=c$ $F_0$-almost surely for some constant $c$. The common
					support and positivity condition in Assumption~\ref{assu:regularity}(i)
					then gives
					\[
					f_{M_1}(o;\theta_*^{M_1})=e^c f_{M_2}(o;\theta_*^{M_2})
					\]
					almost everywhere on $K^0$. Integrating both densities over their common
					support yields $e^c=1$, so the two densities agree almost everywhere.
					Conversely, equality of the two densities implies $m_*(O_i)=0$ almost
					surely and hence $\omega_*^2=0$.
				\end{proof}
				
				Before proving Theorem~\ref{thm:variance_distribution}, we state a
				standard result on quadratic forms of normal random vectors
				\citep{imhof1961computing}.
				\begin{lemma}\label{lemma:s1}
					Let $Z\sim N(0,\Omega)$ be an $m$-dimensional Gaussian vector, with
					$\operatorname{rank}(\Omega)\le m$, and let $U$ be a real symmetric
					$m\times m$ matrix. Then
					\begin{equation}\label{eq:wchiq}
						Z^{\T}UZ\ \overset{d}{=}\ \sum_{j=1}^{m}\lambda_j\chi_{1,j}^2,
					\end{equation}
					where $\{\chi_{1,j}^2\}_{j=1}^m$ are independent chi-squared random
					variables with one degree of freedom and $\{\lambda_j\}_{j=1}^m$ are
					the eigenvalues of $U\Omega$, including any zero eigenvalues.
				\end{lemma}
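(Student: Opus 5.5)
The plan is to reduce the quadratic form in a possibly degenerate Gaussian vector to a quadratic form in a standard normal vector, and then diagonalize it by an orthogonal rotation. Because $\Omega$ is a covariance matrix, it is symmetric positive semidefinite, so it has a symmetric square root $L=\Omega^{1/2}$ with $LL^{\T}=\Omega$. This holds even when $\operatorname{rank}(\Omega)<m$, which is why no invertibility is needed. Let $G\sim N(0,I_m)$. Then $LG\sim N(0,\Omega)$, so $Z\overset{d}{=}LG$ and therefore
\[
Z^{\T}UZ\ \overset{d}{=}\ G^{\T}(L^{\T}UL)G .
\]

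Next, the matrix $C=L^{\T}UL$ is real symmetric, since $U$ is symmetric. I would take its spectral decomposition $C=Q\Lambda Q^{\T}$, where $Q$ is orthogonal and $\Lambda=\operatorname{diag}(\lambda_1,\ldots,\lambda_m)$ has real entries. Set $\tilde G=Q^{\T}G$. By rotation invariance of the standard normal law, $\tilde G\sim N(0,I_m)$, so its coordinates $\tilde G_j$ are i.i.d.\ $N(0,1)$. This gives
\[
G^{\T}CG=\tilde G^{\T}\Lambda\tilde G=\sum_{j=1}^m\lambda_j\tilde G_j^2 ,
\]
and the variables $\tilde G_j^2$ are independent $\chi^2_1$. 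That yields \eqref{eq:wchiq} with the $\lambda_j$ equal to the eigenvalues of $C$, zeros included.

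The remaining step, and the only point needing care, is to identify the eigenvalues of $C=L^{\T}UL$ with those of $U\Omega=ULL^{\T}$, counting multiplicities. I would use the standard fact that for square matrices $A,B$ of the same size, $AB$ and $BA$ have the same characteristic polynomial. Applying it with $A=L^{\T}$ and $B=UL$ gives $\det(tI-L^{\T}UL)=\det(tI-ULL^{\T})=\det(tI-U\Omega)$. In particular, the spectrum of $U\Omega$ is real even though $U\Omega$ need not be symmetric, and the number of zero eigenvalues matches, including those contributed by the rank deficiency of $\Omega$. The main obstacle is conceptual rather than computational: one must work with a full $m\times m$ factor $L$, rather than an $m\times\operatorname{rank}(\Omega)$ factor, so that the identification with $U\Omega$ holds with multiplicities, as the statement requires.
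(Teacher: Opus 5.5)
Your proof is correct and complete. The symmetric square root $\Omega^{1/2}$ handles the rank-deficient case, rotation invariance turns the form into independent $\chi^2_1$ terms weighted by the eigenvalues of $\Omega^{1/2}U\Omega^{1/2}$, and $\det(tI-AB)=\det(tI-BA)$ transfers these eigenvalues, with algebraic multiplicities and zeros included, to $U\Omega$. The paper gives no proof of its own and simply cites Imhof (1961); your argument is the standard derivation behind that reference.
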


				The proof derives the limiting distribution of $\tilde\omega_n^2$ 
				with the following Lemma \ref{lem:variance_equivalence} first, then transfers the result to $\hat\omega_n^2$ via 
				$n(\hat\omega_n^2 - \tilde\omega_n^2) = o_p(1)$ and Slutsky's theorem.
				\begin{lemma}[Asymptotic Equivalence of Variance Estimates]
					\label{lem:variance_equivalence}
					Under Assumption \ref{assu:regularity} and $\mathcal{H}_0^{1}: \omega_*^2 = 0$:
					$$
					n\tilde{\omega}_{n}^{2} = n\hat{\omega}_{n}^{2} + o_p(1)
					$$
				\end{lemma}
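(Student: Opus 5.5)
The identity to exploit is the exact algebraic relation between the centered and uncentered variance estimators. With $m_i=\ell_i^{M_1}(\hat\theta^{M_1})-\ell_i^{M_2}(\hat\theta^{M_2})$ and $\bar m_n=n^{-1}\sum_i m_i$, one has $\tilde\omega_n^2=\hat\omega_n^2+\bar m_n^2$. Hence
\[
n\tilde\omega_n^2-n\hat\omega_n^2=n\bar m_n^2=\frac{1}{n}LR_n(\hat\theta^{M_1},\hat\theta^{M_2})^2,
\]
and the lemma reduces to showing $LR_n(\hat\theta^{M_1},\hat\theta^{M_2})=o_p(n^{1/2})$ under $\mathcal H_0^1$. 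In fact I will aim for the stronger statement $LR_n(\hat\theta^{M_1},\hat\theta^{M_2})=O_p(1)$.

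The first step uses Lemma~\ref{lemma:equivalence_condition}. Under $\omega_*^2=0$ it gives $f_{M_1}(O_i;\theta_*^{M_1})=f_{M_2}(O_i;\theta_*^{M_2})$ almost surely. Therefore $m_i^*=0$ almost surely for every $i$, and $LR_n(\theta_*^{M_1},\theta_*^{M_2})=0$ exactly.

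The second step is a second-order expansion of each maximized log-likelihood around its pseudo-true value, using Assumption~\ref{assu:regularity}(iii)--(v): consistency of the MLEs, interior pseudo-truths, twice continuous differentiability with dominated derivatives, and nonsingular $A_M$. The standard M-estimation argument gives $\sqrt n(\hat\theta^M-\theta_*^M)=-A_M^{-1}n^{-1/2}\sum_i s_i^M+o_p(1)=O_p(1)$, because the scores $s_i^M$ are mean zero at the pseudo-truth and have finite second moment $B_M$. Expanding $\sum_i\ell_i^M(\hat\theta^M)$ around $\theta_*^M$ then yields
\[
\sum_i\ell_i^M(\hat\theta^M)-\sum_i\ell_i^M(\theta_*^M)=-\tfrac12\Bigl(n^{-1/2}\textstyle\sum_i s_i^M\Bigr)^{\T}A_M^{-1}\Bigl(n^{-1/2}\textstyle\sum_i s_i^M\Bigr)+o_p(1)=O_p(1).
\]
The remainder is controlled by uniform convergence of the sample Hessian on a shrinking neighbourhood, which follows from the dominated second derivatives. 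Subtracting the $M_2$ expansion from the $M_1$ expansion and using $LR_n(\theta_*^{M_1},\theta_*^{M_2})=0$ gives $LR_n(\hat\theta^{M_1},\hat\theta^{M_2})=O_p(1)$. Consequently $n\bar m_n^2=O_p(1)/n=o_p(1)$, which proves the claim.

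The main obstacle I expect is the second-order expansion under misspecification: the score is mean zero only at the pseudo-truth, and the Hessian remainder must be controlled uniformly. If only the first-order mean value form is available, a weaker route still suffices. The first-order expansion gives $\sum_i\ell_i^M(\hat\theta^M)-\sum_i\ell_i^M(\theta_*^M)=\bigl(\sum_i s_i^M\bigr)^{\T}(\hat\theta^M-\theta_*^M)+O_p(1)=O_p(1)$, since both factors in the inner product are $O_p(\sqrt n)$ and $O_p(n^{-1/2})$ respectively. Even a crude bound of $o_p(\sqrt n)$ would be enough for the conclusion.
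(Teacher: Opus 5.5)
Your proposal is correct and follows the paper's proof essentially step for step. Like the paper, you use the exact identity $n\tilde\omega_n^2=n\hat\omega_n^2+n^{-1}\{LR_n(\hat\theta^{M_1},\hat\theta^{M_2})\}^2$, the fact that $LR_n(\theta_*^{M_1},\theta_*^{M_2})=0$ almost surely by Lemma~\ref{lemma:equivalence_condition}, and a second-order expansion of each maximized log-likelihood to get $LR_n(\hat\theta^{M_1},\hat\theta^{M_2})=O_p(1)$; your explicit limit $-\tfrac12(n^{-1/2}\sum_i s_i^M)^{\T}A_M^{-1}(n^{-1/2}\sum_i s_i^M)+o_p(1)$ simply spells out the paper's order-of-magnitude bookkeeping (score $O_p(\sqrt n)$, estimation error $O_p(n^{-1/2})$, Hessian $O_p(n)$).
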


				\begin{proof}
					By definitions of $\hat{\omega}_{n}^{2}$, $\tilde{\omega}_{n}^{2}$, and $LR_n(\hat{\theta}^{M_1}, \hat{\theta}^{M_2})$:
					\begin{equation}
						n\tilde{\omega}_{n}^{2} = n\hat{\omega}_{n}^{2} + \frac{1}{n}\{LR_n(\hat{\theta}^{M_1}, \hat{\theta}^{M_2})\}^{2}
						\label{eq:asy}
					\end{equation}
					It suffices to show that the last term in \eqref{eq:asy} is $o_p(1)$.
					Under $\mathcal H_0^1$, Lemma~\ref{lemma:equivalence_condition} gives
					\[
					LR_n(\theta_*^{M_1},\theta_*^{M_2})=0
					\quad\text{almost surely}.
					\]
					For each model, a second-order likelihood expansion around its
					pseudo-true parameter gives
					\[
					\sum_{i=1}^n
					\{\ell_i^{M_j}(\hat\theta^{M_j})
					-\ell_i^{M_j}(\theta_*^{M_j})\}
					=O_p(1),
					\qquad j=1,2,
					\]
					because the sample score is $O_p(\sqrt n)$, the estimation error is
					$O_p(n^{-1/2})$, and the sample Hessian is $O_p(n)$. Consequently,
					\[
					LR_n(\hat\theta^{M_1},\hat\theta^{M_2})=O_p(1),
					\]
					and therefore
					\[
					\frac{1}{n}\{LR_n(\hat\theta^{M_1},\hat\theta^{M_2})\}^2
					=O_p(n^{-1})=o_p(1).
					\]
					Substitution into \eqref{eq:asy} proves
					\[
					n\tilde\omega_n^2=n\hat\omega_n^2+o_p(1).
					\]
				\end{proof}
				
				\begin{proof}[Proof of Theorem \ref{thm:variance_distribution}]
					Stack the two parameter vectors as
					\[
					\vartheta=\{(\theta^{M_1})^{\T},(\theta^{M_2})^{\T}\}^{\T},
					\qquad
					\hat\vartheta=\{(\hat\theta^{M_1})^{\T},(\hat\theta^{M_2})^{\T}\}^{\T},
					\qquad
					\vartheta_*=\{(\theta_*^{M_1})^{\T},(\theta_*^{M_2})^{\T}\}^{\T},
					\]
					and write
					\[
					d_i(\vartheta)=\ell_i^{M_1}(\theta^{M_1})-
					\ell_i^{M_2}(\theta^{M_2}),
					\qquad
					g_i=\nabla_\vartheta d_i(\vartheta_*)
					=
					\begin{pmatrix}
						\nabla_{\theta^{M_1}}\ell_i^{M_1}(\theta_*^{M_1})\\
						-\nabla_{\theta^{M_2}}\ell_i^{M_2}(\theta_*^{M_2})
					\end{pmatrix}.
					\]
					Under $\mathcal H_0^1$, Lemma~\ref{lemma:equivalence_condition} implies
					$d_i(\vartheta_*)=0$ almost surely. A first-order expansion of
					$d_i(\hat\vartheta)$, together with the derivative-envelope conditions
					in Assumption~\ref{assu:regularity}, therefore yields
					\begin{equation}\label{eq:variance_quadratic_expansion}
						n\tilde\omega_n^2
						=
						\{\sqrt n(\hat\vartheta-\vartheta_*)\}^{\T}
						\left(\frac1n\sum_{i=1}^n g_i g_i^{\T}\right)
						\{\sqrt n(\hat\vartheta-\vartheta_*)\}
						+o_p(1).
					\end{equation}
					The law of large numbers gives
					\[
					\frac1n\sum_{i=1}^n g_i g_i^{\T}
					\xrightarrow{p}
					V
					=
					\begin{pmatrix}
						B_{M_1} & -B_{M_1M_2}\\
						-B_{M_2M_1} & B_{M_2}
					\end{pmatrix}.
					\]
					Combining the joint M-estimation expansion and block-matrix calculation
					in Lemma~A of \citet{Vuong1989} with
					\eqref{eq:variance_quadratic_expansion} shows directly that the
					limiting quadratic form is a weighted sum of independent $\chi_1^2$
					variables. Its weights are $\{\lambda_j^2\}_{j=1}^{p+q}$, where
					\(\{\lambda_j\}\) are the eigenvalues of the matrix \(W\) stated in the
					theorem. Hence
					\[
					P(n\tilde\omega_n^2\le t)
					\longrightarrow M_{p+q}(t;\lambda^2).
					\]
					Finally, Lemma~\ref{lem:variance_equivalence} gives
					$n\hat\omega_n^2=n\tilde\omega_n^2+o_p(1)$, so Slutsky's theorem yields
					the asserted limit for $\hat T_n=n\hat\omega_n^2$.
				\end{proof}


				\subsection{Proof of Theorem \ref{thm:lr_normal}: Normal Convergence of the LR Estimator}\label{pf:thm4}
				
				\begin{proof}
					A second-order expansion of each maximized log-likelihood around its
					pseudo-true parameter, together with the score equations and
					$\sqrt n$-consistency of the MLEs, gives
					\begin{align}
						\frac{1}{\sqrt{n}}LR_n(\hat\theta^{M_1},\hat\theta^{M_2})
						= \frac{1}{\sqrt{n}}LR_n(\theta_*^{M_1},\theta_*^{M_2}) + o_p(1).
						\label{eq:lr_replace}
					\end{align}
					Under $\mathcal H_0^2$, the summands have mean zero, and
					$\omega_*^2>0$ by assumption. Assumption~\ref{assu:regularity}(vii) and
					the central limit theorem therefore give
					\begin{align}
						\frac{1}{\sqrt{n}}LR_n(\theta_*^{M_1},\theta_*^{M_2})
						= \frac{1}{\sqrt{n}}\sum_{i=1}^n m_i(\theta_*^{M_1},\theta_*^{M_2})
						\xrightarrow{d} N(0,\omega_*^2).
						\label{eq:clt_lr}
					\end{align}
					
					Combining \eqref{eq:lr_replace}, \eqref{eq:clt_lr}, 
					and Slutsky's theorem,
					$$\frac{1}{\sqrt{n}}LR_n(\hat\theta^{M_1},\hat\theta^{M_2})
					\xrightarrow{d} N(0,\omega_*^2).$$
					By Lemma~\ref{lem:variance_consistency}(i), 
					$\hat\omega_n^2\xrightarrow{a.s.}\omega_*^2>0$, so 
					$\hat\omega_n\xrightarrow{a.s.}\omega_*>0$. 
					Slutsky's theorem gives
					\begin{align}
						\hat V_n
						= \frac{LR_n(\hat\theta^{M_1},\hat\theta^{M_2})}{\sqrt{n}\,\hat\omega_n}
						= \frac{\frac{1}{\sqrt{n}}LR_n(\hat\theta^{M_1},\hat\theta^{M_2})}
						{\hat\omega_n}
						\xrightarrow{d} \frac{N(0,\omega_*^2)}{\omega_*}
						= N(0,1).
					\end{align}
					This completes the proof.
				\end{proof}

				\section{Proofs for Section~\ref{sec:post_selection}}
				\label{sec:proof_post_selection}
				
				The joint expansion below is needed only for the selection-adjusted analysis in Appendix~\ref{app:selection_normal_boundary}. For a sequence of data-generating laws, attach the subscript $n$ to the corresponding population quantities and define
				\begin{equation}
					A_n=\sqrt n(\hat\tau^{M_1}-\tau_{*,n}^{M_1}),
					\qquad
					B_n=n^{-1/2}\{LR_n-n\mu_{*,n}\}.
					\label{eq:joint_post_statistics}
				\end{equation}
				Because the two statistics are computed from the same observations, they are generally correlated.
				
				Let $\psi_{i,n}$ denote the version of the influence function
				\(\psi_i^{M_1}\) in \eqref{eq:functional_if} under the \(n\)th
				data-generating law, and let
				\[
				\phi_{i,n}
				=\ell_i^{M_1}(\theta_{*,n}^{M_1})
				-\ell_i^{M_2}(\theta_{*,n}^{M_2})
				-\mu_{*,n}.
				\]
				The standard score and delta-method expansions give
				\[
				A_n=\frac{1}{\sqrt n}\sum_{i=1}^n\psi_{i,n}+o_p(1).
				\]
				A second-order expansion of each maximized log likelihood around its
				pseudo-true parameter shows that parameter estimation changes the
				unscaled likelihood ratio by $O_p(1)$. Therefore,
				\[
				B_n=\frac{1}{\sqrt n}\sum_{i=1}^n\phi_{i,n}+o_p(1).
				\]
				Under Assumption~\ref{assu:regularity} and the standard triangular-array Lindeberg condition for $(\psi_{i,n},\phi_{i,n})$, the multivariate Lindeberg--Feller theorem gives
				\begin{equation}
					\begin{pmatrix}A_n\\ B_n\end{pmatrix}
					\Longrightarrow
					N\!\left\{
					\begin{pmatrix}0\\0\end{pmatrix},
					\begin{pmatrix}V_1&c\\c&\omega^2\end{pmatrix}
					\right\},
					\qquad
					V_1>0,\quad \omega^2>0,\quad c^2<V_1\omega^2.
					\label{eq:joint_post_expansion}
				\end{equation}
				The covariance $c$ captures the first-order dependence between functional estimation and model selection. In practice, set
				\[
				\hat\psi_i
				=h_{\hat\theta^{M_1}}(X_i)-\hat\tau^{M_1}
				-\hat d_1^{\T}\hat A_{M_1}^{-1}\hat S_{M_1,i},
				\qquad
				\hat\phi_i=m_i-\bar m_n,
				\]
				where
				\[
				\hat d_1=\frac1n\sum_{i=1}^n
				\nabla_\theta h_{\hat\theta^{M_1}}(X_i),
				\qquad
				\overline{\hat\psi}=\frac1n\sum_{i=1}^n\hat\psi_i,
				\]
				and estimate the covariance matrix in \eqref{eq:joint_post_expansion} by
				\[
				\hat\Sigma
				=\frac1n\sum_{i=1}^n
				\begin{pmatrix}\hat\psi_i-\overline{\hat\psi}\\ \hat\phi_i\end{pmatrix}
				\begin{pmatrix}\hat\psi_i-\overline{\hat\psi}\\ \hat\phi_i\end{pmatrix}^{\!\T}.
				\]
				Write $\hat V_1$ and $\hat c$ for the first diagonal and off-diagonal
				entries of $\hat\Sigma$; its second diagonal entry is
				$\hat\omega_n^2$.
				The law of large numbers and continuous mapping theorem give
				consistency for the covariance matrix displayed in
				\eqref{eq:joint_post_expansion}.
				
				\subsection{Proof of Corollary~\ref{cor:selection_determinate}}
				
				\begin{proof}
					{Under $\omega_*^2>0$, Lemma~\ref{lem:variance_consistency}(i) gives $\hat\omega_n^2\to\omega_*^2$ almost surely, and hence the first-stage statistic satisfies}
					\[
					\hat T_n=n\hat\omega_n^2\longrightarrow\infty
					\]
					in probability. Hence the first-stage test rejects indistinguishability
					with probability tending to one.
					
					Next, by \eqref{eq:lr_consist},
					\[
					\frac{1}{n}
					LR_n(\hat\theta^{M_1},\hat\theta^{M_2})
					\xrightarrow{a.s.}
					\mu_*,
					\]
					and by Lemma~\ref{lem:variance_consistency}(i),
					\[
					\hat\omega_n\xrightarrow{a.s.}\omega_*>0.
					\]
					Therefore
					\[
					\hat V_n
					=
					\frac{
						LR_n(\hat\theta^{M_1},\hat\theta^{M_2})
					}{
						\sqrt n\,\hat\omega_n
					}
					=
					\sqrt n\,\frac{\mu_*}{\omega_*}+o_p(\sqrt n).
					\]
					Thus, if $\mu_*>0$, then $\hat V_n\to+\infty$ in probability, while
					if $\mu_*<0$, then $\hat V_n\to-\infty$ in probability. Since the
					second-stage rule selects $M_1$ when $\hat V_n>q_2$, and the
					first-stage rejection probability tends to one, it follows that
					\[
					\mu_*>0
					\quad\Longrightarrow\quad
					P(\widehat M=M_1)\to1,
					\]
					and
					\[
					\mu_*<0
					\quad\Longrightarrow\quad
					P(\widehat M=M_1)\to0.
					\]
				\end{proof}
				
				\subsection{Proof of Theorem~\ref{thm:post_selection_wald}}
				
				\begin{proof}
					Let
					\[
					S_n=\{\widehat M=M_1\}.
					\]
					By Corollary~\ref{cor:selection_determinate}, under $\mu_*>0$,
					\[
					p_n:=P(S_n)\longrightarrow1.
					\]
					Hence $p_n>0$ for all sufficiently large $n$.
					
					We first show that conditioning on $S_n$ is asymptotically negligible.
					For any measurable event $E_n$,
					\begin{align*}
						\left|
						P(E_n\mid S_n)-P(E_n)
						\right|
						&=
						\left|
						\frac{P(E_n\cap S_n)}{p_n}
						-
						P(E_n)
						\right| \\
						&=
						\left|
						\frac{P(E_n)-P(E_n\cap S_n^c)}{p_n}
						-
						P(E_n)
						\right| \\
						&\le
						\left|
						\frac{P(E_n)}{p_n}
						-
						P(E_n)
						\right|
						+
						\frac{P(E_n\cap S_n^c)}{p_n} \\
						&\le
						\frac{1-p_n}{p_n}
						+
						\frac{P(S_n^c)}{p_n} \\
						&=
						\frac{2(1-p_n)}{p_n}
						\longrightarrow0.
					\end{align*}
					Thus conditional and unconditional probabilities have the same
					first-order limit whenever the conditioning event has probability
					tending to one. Under $\mu_*>0$, $S_n$ is asymptotically certain rather
					than a nondegenerate truncation event.
					
					Now define
					\[
					A_n=
					\sqrt n(\hat\tau^{M_1}-\tau_*^{M_1}).
					\]
					By Theorem~\ref{thm:ate_normality},
					\[
					A_n
					\xrightarrow{d}
					N(0,V_{M_1})
					\]
					unconditionally. Let $x$ be any continuity point of the limiting normal
					distribution, and set
					\[
					E_n(x)=\{A_n\le x\}.
					\]
					Applying the preceding bound to $E_n(x)$ gives
					\[
					\left|
					P(A_n\le x\mid S_n)
					-
					P(A_n\le x)
					\right|
					\longrightarrow0.
					\]
					Since
					\[
					P(A_n\le x)
					\longrightarrow
					\Phi\left(\frac{x}{\sqrt{V_{M_1}}}\right),
					\]
					we obtain
					\[
					P(A_n\le x\mid S_n)
					\longrightarrow
					\Phi\left(\frac{x}{\sqrt{V_{M_1}}}\right).
					\]
					Therefore,
					\[
					\sqrt n(\hat\tau^{M_1}-\tau_*^{M_1})
					\;\Big|\;S_n
					\xrightarrow{d}
					N(0,V_{M_1}).
					\]
					
					It remains to prove the coverage statement. Let
					\[
					E_n
					=
					\left\{
					\tau_*^{M_1}\in\mathcal C_{1-\alpha}^{\mathrm{Wald}}
					\right\}.
					\]
					By Theorem~\ref{thm:ate_normality}, $\hat V_{M_1}\to_p V_{M_1}$, and
					Slutsky's theorem,
					\[
					P(E_n)\longrightarrow1-\alpha.
					\]
					Applying the conditioning bound to this coverage event gives
					\[
					\left|
					P(E_n\mid S_n)-P(E_n)
					\right|
					\longrightarrow0.
					\]
					Consequently,
					\[
					P\left\{
					\tau_*^{M_1}\in\mathcal C_{1-\alpha}^{\mathrm{Wald}}
					\,\middle|\,
					\widehat M=M_1
					\right\}
					\longrightarrow
					1-\alpha.
					\]
				\end{proof}
				\subsection{Boundary case and selection-normal confidence intervals}
				\label{app:selection_normal_boundary}
				
				Theorem~\ref{thm:post_selection_wald} assumes a fixed data-generating law
				with $\mu_*>0$. In that case
				\[
				P(\widehat M=M_1)\to1,
				\]
				so the conditioning event is asymptotically certain and no first-order
				selective correction is needed. The situation is different for a
				sequence satisfying $\sqrt n\,\mu_{*,n}\to\delta$. The selection event
				then has a nondegenerate limiting probability and becomes a genuine
				truncation event. Ordinary Wald inference generally needs to be replaced
				by inference based on the selection-normal limit.
				
				Specifically, let $\mu_{*,n}$, $\theta_{*,n}^{M_1}$,
				$\theta_{*,n}^{M_2}$, and $\tau_{*,n}^{M_1}$ denote the versions of
				$\mu_*$, $\theta_*^{M_1}$, $\theta_*^{M_2}$, and $\tau_*^{M_1}$ under
				the $n$th data-generating law. Consider a sequence satisfying
				\[
				\sqrt n\,\mu_{*,n}\longrightarrow \delta,
				\qquad \delta\in\mathbb R.
				\]
				Let $A_n$ and $B_n$ be the statistics defined in \eqref{eq:joint_post_statistics}. Their joint Gaussian limit is given in \eqref{eq:joint_post_expansion}.
				Let
				\[
				Z_n=\frac{B_n}{\omega},
				\qquad
				\rho=\frac{c}{\sqrt{V_1}\,\omega},
				\qquad
				a_\delta=q_2-\frac{\delta}{\omega}.
				\]
				The covariance condition above ensures $|\rho|<1$, so the Gaussian
				limit is nondegenerate.
				By the definition of $B_n$, the second-stage statistic satisfies the
				exact identity
				\[
				\hat V_n
				=
				\frac{B_n+\sqrt n\,\mu_{*,n}}{\hat\omega_n}
				=
				\frac{\omega}{\hat\omega_n}Z_n
				+
				\frac{\sqrt n\,\mu_{*,n}}{\hat\omega_n}.
				\]
				Hence, after the first-stage rejection event whose probability tends to
				one, selection of $M_1$ is the event
				\[
				\left\{
				Z_n>
				\frac{\hat\omega_n}{\omega}q_2
				-
				\frac{\sqrt n\,\mu_{*,n}}{\omega}
				\right\}.
				\]
				The random threshold converges in probability to
				$a_\delta=q_2-\delta/\omega$. Because the limiting variable $Z$ has a
				continuous density, this event is asymptotically equivalent in
				probability to $\{Z_n>a_\delta\}$.
				Let $(A,Z)$ denote the Gaussian limit of $(A_n,Z_n)$. Since
				$P(Z>a_\delta)=1-\Phi(a_\delta)>0$ and the boundary
				$\{Z=a_\delta\}$ has probability zero, conditional weak convergence
				gives
				\[
				A_n\mid\{\widehat M=M_1\}
				\xrightarrow{d}
				A\mid\{Z>a_\delta\}.
				\]
				
				For any $x\in\mathbb R$, the numerator of the relevant conditional
				probability is
				\[
				P(A\le x,Z>a_\delta)
				=
				P(A\le x)-P(A\le x,Z\le a_\delta).
				\]
				The conditional distribution function of the limiting post-selection
				law is therefore
				\begin{equation}
					F_\delta(x)
					=
					P(A\le x\mid Z>a_\delta)
					=
					\frac{
						\Phi(x/\sqrt{V_1})
						-
						\Phi_2(x/\sqrt{V_1},a_\delta;\rho)
					}{
						1-\Phi(a_\delta)
					},
					\label{eq:app_selection_normal_cdf}
				\end{equation}
				where $\Phi$ is the standard normal distribution function and
				$\Phi_2(\cdot,\cdot;\rho)$ is the bivariate standard normal distribution
				function with correlation $\rho$. The law in
				\eqref{eq:app_selection_normal_cdf} is a selection-normal, or
				skew-normal-type, limit. It is generally not the ordinary $N(0,V_1)$
				law because conditioning on $Z>a_\delta$ changes the distribution of
				$A$ whenever $A$ and $Z$ are correlated.
				
				Let
				\[
				v_{p,\delta}=F_\delta^{-1}(p),
				\qquad 0<p<1,
				\]
				denote the $p$th quantile of the selection-normal distribution in
				\eqref{eq:app_selection_normal_cdf}. If $\delta$ were
				known, an oracle selective confidence interval for
				$\tau_{*,n}^{M_1}$ would be
				\begin{equation}
					\mathcal C_{1-\alpha}^{\mathrm{sel}}(\delta)
					=
					\left[
					\hat\tau^{M_1}
					-
					\frac{v_{1-\alpha/2,\delta}}{\sqrt n},
					\;
					\hat\tau^{M_1}
					-
					\frac{v_{\alpha/2,\delta}}{\sqrt n}
					\right].
					\label{eq:app_oracle_selection_ci}
				\end{equation}
				For feasible calculation, let $\widehat F_\delta$ be the distribution
				function in \eqref{eq:app_selection_normal_cdf} with
				$V_1$, $c$, and $\omega$ replaced by their consistent estimators and
				with
				\[
				\hat\rho=\frac{\hat c}{\sqrt{\hat V_1}\,\hat\omega_n},
				\qquad
				\hat a_\delta=q_2-\frac{\delta}{\hat\omega_n}.
				\]
				Set $\hat v_{p,\delta}=\widehat F_\delta^{-1}(p)$.
				The plug-in oracle interval is then
				\begin{equation}
					\widehat{\mathcal C}_{1-\alpha}^{\mathrm{sel}}(\delta)
					=
					\left[
					\hat\tau^{M_1}
					-
					\frac{\hat v_{1-\alpha/2,\delta}}{\sqrt n},
					\;
					\hat\tau^{M_1}
					-
					\frac{\hat v_{\alpha/2,\delta}}{\sqrt n}
					\right].
					\label{eq:app_plugin_selection_ci}
				\end{equation}
				
				This interval is asymmetric in general. The asymmetry comes from the
				fact that the conditional distribution of $A$ given $Z>a_\delta$ need
				not be centered normal. Let $\varphi$ denote the standard normal
				density. If
				\[
				\lambda(a)=\frac{\varphi(a)}{1-\Phi(a)},
				\qquad
				\kappa(a)=1+a\lambda(a)-\lambda(a)^2,
				\]
				then
				\[
				E(A\mid Z>a_\delta)
				=
				\frac{c}{\omega}\lambda(a_\delta),
				\]
				and
				\[
				\operatorname{Var}(A\mid Z>a_\delta)
				=
				V_1-\frac{c^2}{\omega^2}\{1-\kappa(a_\delta)\}.
				\]
				Thus the conditional limit is centered at zero only when
				\[
				c=0
				\]
				or in special limiting cases where the truncation becomes irrelevant.
				When $c=0$, $A$ and $Z$ are independent in the Gaussian limit, and hence
				conditioning on $Z>a_\delta$ does not change the law of $A$. Otherwise,
				the post-selection distribution is generally shifted and asymmetric.
				
				The oracle interval in \eqref{eq:app_oracle_selection_ci} satisfies
				\[
				P\left\{
				\tau_{*,n}^{M_1}
				\in
				\mathcal C_{1-\alpha}^{\mathrm{sel}}(\delta)
				\,\middle|\,
				\widehat M=M_1
				\right\}
				\longrightarrow
				1-\alpha,
				\]
				provided the interval is constructed at the value $\delta$ governing
				$\sqrt n\,\mu_{*,n}\to\delta$. The qualifier is important: unlike the
				case $\mu_*>0$ in Theorem~\ref{thm:post_selection_wald}, the condition
				$\sqrt n\,\mu_{*,n}\to\delta$ introduces the additional parameter $\delta$. Since
				$\delta$ is generally not consistently estimable under
				{$\sqrt n\,\mu_{*,n}\to\delta$, a feasible interval must account for uncertainty about $\delta$, for example through a confidence set and a least-favourable envelope.}
				
				{We use the latter idea to obtain a feasible interval. Let
					\[
					U_n=\frac{LR_n(\hat\theta^{M_1},\hat\theta^{M_2})}{\sqrt n}.
					\]
					For a candidate drift $d$, the plug-in conditional distribution function of $U_n$, given selection of $M_1$, is
					\begin{equation}
						\widehat H_d(u)
						=
						\frac{
							\Phi\{(u-d)/\hat\omega_n\}
							-\Phi\{q_2-d/\hat\omega_n\}
						}{
							1-\Phi\{q_2-d/\hat\omega_n\}
						},
						\qquad u>q_2\hat\omega_n.
						\label{eq:drift_conditional_cdf}
					\end{equation}
					At the true drift, $\widehat H_\delta(U_n)$ is asymptotically uniform conditional on selecting $M_1$. Hence, for $0<\beta<\alpha$,
					\begin{equation}
						\mathcal D_{n,1-\beta}
						=\left\{
						d:\frac{\beta}{2}\leq \widehat H_d(U_n)
						\leq 1-\frac{\beta}{2}
						\right\}
						\label{eq:drift_confidence_set}
					\end{equation}
					is a conditional $(1-\beta)$ confidence set for $\delta$.}
				
				For $0<\eta<1$, write
				$\widehat{\mathcal C}_{1-\eta}^{\mathrm{sel}}(d)
				=[\ell_{n,\eta}(d),u_{n,\eta}(d)]$
				for the plug-in oracle interval in \eqref{eq:app_plugin_selection_ci}, with $\alpha$ there replaced by $\eta$. The least-favourable envelope is
				\begin{equation}
					\widehat{\mathcal C}_{1-\alpha}^{\mathrm{LF}}
					=
					\left[
					\inf_{d\in\mathcal D_{n,1-\beta}}\ell_{n,\alpha-\beta}(d),
					\quad
					\sup_{d\in\mathcal D_{n,1-\beta}}u_{n,\alpha-\beta}(d)
					\right].
					\label{eq:least_favourable_ci}
				\end{equation}
				If the true drift belongs to $\mathcal D_{n,1-\beta}$, the oracle interval at that drift is contained in the envelope. A union bound therefore gives
				\[
				\liminf_{n\to\infty}
				P\!\left\{
				\tau_{*,n}^{M_1}\in
				\widehat{\mathcal C}_{1-\alpha}^{\mathrm{LF}}
				\,\middle|\,\widehat M=M_1
				\right\}
				\geq 1-\alpha.
				\]
				No independence between the drift confidence set and the oracle interval is required.
				
				{Finally, let $\kappa_n\to\infty$ with $\kappa_n/\sqrt n\to0$ and define, after selecting $M_1$,
					\begin{equation}
						\widehat{\mathcal C}_{1-\alpha}^{\mathrm{ela}}
						=
						\begin{cases}
							\widehat{\mathcal C}_{1-\alpha}^{\mathrm{LF}},
							& q_2<\hat V_n\leq\kappa_n,\\[2pt]
							\mathcal C_{1-\alpha}^{\mathrm{Wald}},
							& \hat V_n>\kappa_n.
						\end{cases}
						\label{eq:elastic_ci}
					\end{equation}
					When $\sqrt n\,\mu_{*,n}\to\delta$, $\hat V_n=O_p(1)$ and the elastic interval uses the least-favourable branch with probability tending to one. When $\mu_*>0$ under a fixed data-generating law, $\hat V_n$ diverges at rate $\sqrt n$ and the Wald branch is used with probability tending to one. Thus the elastic interval has pointwise conditional coverage in both cases, while avoiding an unnecessary selective correction when selection is asymptotically determinate. The construction for selection of $M_2$ follows by reversing the likelihood contrast and the truncation direction.}
				
				This explains why Theorem~\ref{thm:post_selection_wald} has an ordinary
				normal limit and uses the usual Wald interval when $\mu_*>0$, whereas
				the case $\sqrt n\,\mu_{*,n}\to\delta$ has a
				selection-normal limit and requires selection-adjusted, generally
				asymmetric confidence intervals.
\end{appendix}

\end{document}